\renewcommand*{\@textcolor}[3]{%
  \protect\leavevmode
  \begingroup
    \color#1{#2}#3%
  \endgroup
}
\newtheorem{teo}{Theorem}[section]
\newtheorem{lem}{Lemma}[section]
\newtheorem{pro}{Proposition}[section]
\theoremstyle{remark}
\newtheorem{rem}{Remark}[section]
\newcommand{\qv}{\mathbf{q}}
\newcommand{\zev}{\mathbf{0}}
\newcommand{\xv}{\mathbf{x}}
\newcommand{\yv}{\mathbf{y}}
\newcommand{\pv}{\mathbf{p}}
\newcommand{\R}{\mathbb{R}}
\newcommand{\C}{\mathbb{C}}
\newcommand{\dt}{\,\mathrm{d}t}
\newcommand{\ds}{\,\mathrm{d}s}
\newcommand{\dk}{\,\mathrm{d}k}
\newcommand{\dtau}{\,\mathrm{d}\tau}
\newcommand{\dsigma}{\,\mathrm{d}\sigma}
\newcommand{\dmu}{\,\mathrm{d}\mu}
\newcommand{\dpi}{\,\mathrm{d}p}
\newcommand{\dpv}{\,\mathrm{d}\pv}
\newcommand{\dxv}{\,\mathrm{d}\xv}
\newcommand{\drho}{\,\mathrm{d}\varrho}
\newcommand{\I}{\mathcal{I}}
\newcommand{\J}{\mathcal{J}}
\newcommand{\NN}{\mathcal{N}}
\newcommand{\OO}{\mathcal{O}}
\newcommand{\F}{\mathcal{F}}
\newcommand{\KK}{\mathcal{K}}
\newcommand{\one}{\mathds{1}}
\newcommand{\h}{\widetilde{h}}
\newcommand{\T}{\widetilde{T}}
\newcommand{\dom}{\mathscr{D}}
\newcommand{\TF}{\mathscr{F}}
\newcommand{\ep}{\varepsilon}
\newcommand{\clog}{C_{\mathrm{log},\beta}}
\newcommand{\fe}{f_{\mathrm{e}}}
\renewcommand{\Re}{\mathrm{Re}}
\numberwithin{equation}{section}
\newcommand{\bdm}{\begin{displaymath}}
\newcommand{\edm}{\end{displaymath}}
\newcommand{\bdn}{\begin{eqnarray}}
\newcommand{\edn}{\end{eqnarray}}
\newcommand{\bay}{\begin{array}{c}}
\newcommand{\eay}{\end{array}}
\newcommand{\ben}{\begin{enumerate}}
\newcommand{\een}{\end{enumerate}}
\newcommand{\beq}{\begin{equation}}
\newcommand{\eeq}{\end{equation}}
\newcommand{\beqn}{\begin{eqnarray}}
\newcommand{\eeqn}{\end{eqnarray}}
\newcommand{\bml}[1]{\begin{multline} #1 \end{multline}}
\newcommand{\bmln}[1]{\begin{multline*} #1 \end{multline*}}
\renewcommand{\leq}{\leqslant}
\renewcommand{\geq}{\geqslant}
\newcommand{\disp}{\displaystyle}
\newcommand{\tx}{\textstyle}
\newcommand{\lf}{\left}
\newcommand{\ri}{\right}
\newcommand{\braket}[2]{\lf\langle #1|#2 \ri\rangle}
\newcommand{\mean}[1]{\lf\langle #1 \ri\rangle}
\newcommand{\form}{\F_{\alv(t)}}
\newcommand{\ham}{H_{\alv(t)}}
\newcommand{\alv}{\bm{\alpha}}
\newcommand{\phila}{\phi_{\lambda}}
\newcommand{\la}{\lambda}
\newcommand{\diff}{\mathrm{d}}
\newcommand{\tv}{\mathbf{t}}
\newcommand{\BB}{\mathcal{B}}
\newcommand{\A}{\mathcal{A}}
\newcommand{\G}{\mathcal{G}}
\newcommand{\eps}{\varepsilon}
\begin{document}
 
 \title[Well-posedness of 2D NLSE with concentrated nonlinearity]{Well-posedness of the two-dimensional nonlinear Schr\"odinger equation with concentrated nonlinearity}

\author[R. Carlone]{Raffaele Carlone}
\address{Universit\`{a} ``Federico II'' di Napoli, Dipartimento di Matematica e Applicazioni ``R. Caccioppoli'', MSA, via Cinthia, I-80126, Napoli, Italy.}
\email{raffaele.carlone@unina.it}
\author[M. Correggi]{Michele Correggi}
\address{``Sapienza'' Universit\`{a} di Roma , Dipartimento di Matematica, P.le Aldo Moro, 5, 00185, Roma, Italy.}
\email{michele.correggi@gmail.com}
\urladdr{http://www1.mat.uniroma1.it/people/correggi/}
\author[L. Tentarelli]{Lorenzo Tentarelli}
\address{``Sapienza'' Universit\`{a} di Roma , Dipartimento di Matematica, P.le Aldo Moro, 5, 00185, Roma, Italy.}
\email{tentarelli@mat.uniroma1.it}

\date{\today}

\begin{abstract} 
 We consider a two-dimensional nonlinear Schr\"odinger equation with concentrated nonlinearity. In both the focusing and defocusing  case we prove local well-posedness, i.e., existence and uniqueness of the solution for short times, as well as energy and mass conservation. In addition, we prove that this implies global existence in the defocusing case, irrespective of the power of the nonlinearity, while in the  focusing  case blowing-up solutions may arise.
\end{abstract}

\maketitle

\tableofcontents


\section{Introduction and Main Results}

The nonlinear Schr\"{o}dinger (NLS) equation plays a relevant role in several sectors of physics, where it  appears very often as an effective evolution equation describing the behavior of a microscopic system on a macroscopic or mesoscopic scale. A typical example is provided by the time evolution of Bose-Einstein condensates, which is known to be well approximated by a NLS-type equation going under the name of Gross-Pitaevskii equation \cite{DGPS}. There are however other examples in which the physical meaning of the NLS equation is totally different, as, e.g., the propagation of light in nonlinear optics, the behavior of water or plasma waves, the signal transmission through neurons (FitzHugh-Nagumo model), etc. (see, e.g., \cite{Ma} and references therein).

Thanks to its physical relevance, the NLS equation has attracted a lot of interest within the mathematical community as well, and several monographs are devoted to its detailed study (see, e.g., \cite{C}). Here we focus on the simple but nontrivial case of a nonlinearity affecting the evolution only at finitely many points, i.e., a NLS equation with {\it concentrated nonlinearity}. Roughly speaking the model we want to investigate is described by the two-dimensional formal equation
\beq
	\label{eq:formalNLS}
	i \partial_t \psi_t = \bigg(-\Delta + \sum_{j  =1}^N \mu_j \delta(\xv - \yv_j) \bigg) \psi_t,
\eeq
where any coupling parameter $ \mu_j = \mu_j(\psi_t(\yv_j)) $ depends itself on the value of the function $ \psi_t $ at $ \yv_j $ (see below). 

Such a model has been used in physics to describe very different phenomena, mostly related to solid state physics: the charge accumulation in semiconductor interfaces or heterostructures can be modelled indeed by nonlinear effects concentrated in a small spatial region \cite{BKB,JLPC,JLPS,MA,N}. The idea is that the nonlinear term takes into account the many-body interaction effects on the scattering of an electron through a barrier or by an impurity in the medium \cite{MB}. In nonlinear optics similar models arise in the description of the nonlinear propagation in a Kerr-type medium in presence of localized defects \cite{SKB,SKBRC,Y}, but several other applications are suggested in acoustic, conventional and high-$T_c$ superconductivity, light propagation in photonic crystals etc. (see \cite{SKBRC} and references therein). More recently the nonlinear propagation in presence of a concentrated defect has been suggested as a dynamic model for the evolution of Bose-Einstein condensates in optical lattices, where the isolated defect is generated by a focused laser beam \cite{DM,LKMF}. 

From the mathematical point of view, the expression between brackets in the above formula is purely formal, at least in two or more dimensions, and, in order to give it a rigorous meaning, one can follow different paths, as, e.g., classifying the self-adjoint extensions of suitable symmetric operators \cite{Al} or investigating the properties of the associated quadratic forms \cite{DFT}. The reason why such models (a.k.a. {\it solvable models}), involving zero-range or point interactions, have attracted so much interest in the past is that the time evolution described by \eqref{eq:formalNLS} can be simplified and in fact reduced to an ODE-type evolution of finitely many complex numbers named {\it charges} (see below), which are proportional to the values of $ \psi_t $ at the singular points. This was first observed in the corresponding time-dependent linear models \cite{DFT2,SY} (see also \cite{CD,CDFM,CCF,CCNP,NP} for similar results) and later used also in the nonlinear framework.

Analogous 1 and 3D models  have indeed already been studied in details in the literature \cite{A,AT,ADFT1,ADFT2}: it has been proven that the weak Cauchy problem associated to \eqref{eq:formalNLS} in 1 or 3D (or rather to its rigorous analogue) admits a unique solution in the proper energy space for short times and that, under additional assumptions of the parameter (e.g., in the defocusing case), such a solution is in fact global in time (thanks to the mass and energy conservation). Further results about the possible emergence of blow-up solutions have also been established; so that the 1 and 3D models are basically completely understood. On the opposite, no results about the well-posedness (neither local nor global) of the 2D equation are available so far, mostly due to hard technical difficulties emerging in 2D (see the discussion at the end of next Sect.). It is also worth mentioning that the 1D and 3D analogues of the model above have been rigorously derived in \cite{CFNT1,CFNT} from the ordinary NLS equation in a suitable scaling limit of nonlinearity concentration. 

\subsection{The model}
\label{model}
We specify now more precisely the model we want to investigate. We are interested in discussing a specific form of 2D NLS equation with concentrated nonlinearities at finitely many points $ \yv_1, \ldots, \yv_N \in \R^2 $, with $ \yv_i \neq \yv_j $ for $ i \neq j $. The precise definition of the model is similar to the 3D one, with some small but relevant modifications mostly due to the peculiar behavior of the 2D Green function (see below). 

We start by recalling the properties of the linear version of the evolution problem \eqref{eq:formalNLS}, which has been studied in \cite{CCF}: the idea (see \cite[Theorem 5.3]{Al} for further details) is to reformulate \eqref{eq:formalNLS} as the Schr\"{o}dinger equation $	i \partial_t \psi_t = \ham \psi_t $
associated to a time-dependent Schr\"{o}dinger operator $ \ham $ on $ L^2(\R^2) $, defined as
\beq
	\label{eq:lin op}
	\lf(\ham + \la \ri) \psi  = \lf(- \Delta + \la \ri) \phila,	
\eeq
with domain
\bml{
	\label{eq:lin domain}
	\dom\lf(\ham\ri) = \bigg\{ \psi \in L^2(\R^2) \: \big| \: \psi = \phila + \frac{1}{2\pi} \sum_{j=1}^N q_j(t) K_0 \lf(\sqrt{\la} |\xv - \yv_j|\ri), \phila \in H^2(\R^2), 	\\
	\disp\lim_{\xv \to \yv_j} \phila(\xv) = \left(\alpha_j(t)+\tx\frac{1}{2\pi} \log \frac{\sqrt{\la}}{2} + \frac{\gamma}{2\pi}\right)q_j(t)-\frac{1}{2\pi}\sum_{k\neq j} K_0 \lf(\sqrt{\la} |\yv_j - \yv_k|\ri) q_k(t) \bigg\},
}
where $ \la > 0 $, $K_0(\sqrt{\la} |\xv|)$ denotes the inverse Fourier transform of $ (|\pv|^2 + \lambda)^{-1} $, i.e., the modified Bessel function of second kind of order $ 0 $ (a.k.a. Macdonald function \cite[Sect. 9.6]{AS}), $\gamma$ is the Euler constant and the function $ \alv(t) = (\alpha_1(t), \ldots, \alpha_N(t)) $ is assumed to be of class $C^1$.

Wave functions in the operator domain are thus decomposable into a regular part $ \phila $, belonging to the domain of the free Laplacian, plus a more singular term proportional to the Green function of $ - \Delta+\lambda $, which shows logarithmic singularities at the points $ \yv_1, \ldots, \yv_N $ (see \cite{CCF}). The interaction is replaced with a boundary condition linking the values of the regular part $ \phila $ at points $ \yv_1, \ldots, \yv_N $ to the coefficients of the singular one.

  	\begin{rem}[Domain decomposition]
  		\mbox{}	\\
  		In the definition of the domain \eqref{eq:lin domain} a first difference with the 3D case emerges: the operator domain $ \dom(\ham) $ is obviously independent of the parameter $ \lambda $, but, while in 3D one is allowed to take $ \lambda = 0 $ (with some little care about the large $ |\xv| $ decay), the same is not possible in 2D. Due to its infrared singularity, the 2D Green function actually diverges when $ \lambda \to 0 $ and therefore such a choice is forbidden.
	\end{rem}

The Cauchy problem for the linear evolution equation, i.e.,
\beq
	\label{eq:cauchy_lin_op}
	\begin{cases}
		 i \partial_t \psi_t =  \ham \psi_t,	\\
		\psi_{t = 0}  =  \psi_0,	
	\end{cases}
\eeq
with $ \psi_0 \in \dom(H_{\alv(0)}) $, was studied in \cite{CCF}, where it was proven that $ \ham $ generates a two-parameter unitary group $ U(t,s) $ and therefore, if $ \phi \in \dom(H_{\alv(0)}) $, then also $ \psi_t \in \dom(\ham) $ for any time $  t \in \R $.

Equivalently {(see \cite{DFT})} one can consider the quadratic form $ \form $ associated to the operator $ \ham $,
\begin{align}
\label{eq:from_pre}
	\form[\psi] := & \, \int_{\R^2} \dxv \lf\{ \lf| \nabla \phila \ri|^2 + \la \lf| \phila \ri|^2 - \la |\psi|^2 \ri\} + \sum_{j = 1}^N \lf( \alpha_j(t) + \tx\frac{1}{2\pi} \log \frac{\sqrt{\la}}{2} + \frac{\gamma}{2\pi} \ri) |q_j|^2 	\nonumber\\
	& \, - \frac{1}{2\pi} \sum_{j=1}^N\sum_{k\neq j} q^*_{j} q_{k} K_{0}(\sqrt{\lambda}\lf| \yv_j - \yv_k \ri|)
\end{align}
with time-independent domain
\[
	\dom[\F] 
	:= \bigg\{ \psi \in L^2(\R^2) \: \big| \: \psi = \phila + \frac{1}{2\pi} \sum_{j = 1}^N q_j K_0\lf(\sqrt{\la} |\xv - \yv_j|\ri), \phila \in H^1(\R^2), q_j \in \C \bigg\},
\] 
and the weaker version of the Cauchy problem \eqref{eq:cauchy_lin_op}:
\beq
	\label{eq:cauchy_lin_form}
	\begin{cases}
		 i \frac{d}{dt} \braket{\chi}{\psi_t} =  \form[\chi, \psi_t],\qquad\forall\chi \in \dom[\F],	\\
		 \psi_{t=0}  =  \psi_0,	
	\end{cases}
\eeq
where the initial datum $ \psi_0 $ also belongs to the form domain $ \dom[\F] $, $ \braket{\: \cdot \:}{\: \cdot \:} $ stands for the scalar product in $ L^2(\R^2) $ and $ \form[ \: \cdot \:, \: \cdot \:] $ is the sesquilinear form associated to $ \form $ defined, e.g., by polarization. The well-posedness of the above Cauchy problem is also proven in \cite{CCF}. Note that, unlike the operator domain, functions in the form domain $\dom[\F]$ do not have to satisfy any boundary condition.

A solution to both linear problems \eqref{eq:cauchy_lin_op} and \eqref{eq:cauchy_lin_form} (see \cite[Sect. 2.2]{CCF}) is provided by the following ansatz
\begin{equation}
 \label{eq:ansatz_pre}
	\psi_t (\xv) : = \lf( U_0(t) \psi_0 \ri) (\xv) + \disp\frac{i}{2\pi} \sum_{j=1}^N \int_0^t \diff \tau \: U_0\lf(t - \tau; |\xv - \yv_j|\ri) \: q_j(\tau),
\end{equation}
 where $U_0(t)=e^{ i \Delta t}$ denotes the free propagator, whose integral kernel is given by
\[
 U_0(t; |\xv|) = \frac{e^{-\frac{|\xv|^{2}}{4 i t}}}{2 i t},\qquad t \in \R,\:\xv \in \R^2,
\]
and the function $ \qv(t) = (q_1(t), \ldots, q_N(t)) $ is the solution of a Volterra-type equation of the form
\[
	q_j(t) + 4 \pi \int_0^t \diff \tau \: \I(t - \tau) \: \alpha_j(\tau) q_j(\tau) + \sum_{k = 1}^N \disp\int_0^t \diff \tau \: \KK_{jk}( t-\tau ) \: q_k(\tau) = f_j(t)
\]
(see below for more details).

The nonlinear model we plan to investigate in this article is the analogue of \eqref{eq:cauchy_lin_op} (resp. \eqref{eq:cauchy_lin_form}), where the parameters $ \alv(t) $ depend themselves on the values of the charge $\qv(t)$, i.e.,
\beq
	\label{eq:alpha}
	\boxed{
	\alpha_j(t) = \beta_j \lf| q_j(t) \ri|^{2\sigma_j},	\qquad		\beta_j \in \R, \sigma_j \in \R^+. 
}
\eeq
Hence for any wave function in the nonlinear operator domain, the above nonlinearity can be translated into $ N $ nonlinear boundary conditions, i.e.,
\beq
	\label{eq:boundary_condition}
	\disp\lim_{\xv \to \yv_j} \phila(\xv) = \left(\beta_j \lf| q_j(t) \ri|^{2\sigma_j}+\tx\frac{1}{2\pi} \log \frac{\sqrt{\la}}{2} + \frac{\gamma}{2\pi}\right)q_j(t)-\frac{1}{2\pi}\sum_{k\neq j} K_0(\sqrt{\la} |\yv_j - \yv_k|)q_k(t).
\eeq

Consequently, the goal is to prove that a weak solution to the Cauchy problem \eqref{eq:cauchy_lin_op} (i.e., a solution to \eqref{eq:cauchy_lin_form}) with the {\it nonlinear} condition \eqref{eq:alpha} is provided by the very same ansatz as in \eqref{eq:ansatz_pre}, i.e.,
\bdm
	\psi_t (\xv) = \lf( U_0(t) \psi_0 \ri) (\xv) + \disp\frac{i}{2\pi} \sum_{j=1}^N \int_0^t \diff \tau \: U_0\lf(t - \tau; |\xv - \yv_j|\ri) \: q_j(\tau),
\edm
where $ \qv(t) $ is now the solution of the Volterra-type {\it nonlinear} equation
\begin{equation}
 \label{eq:charge_eq_pre}
 	\begin{array}{l}
  	\displaystyle q_j(t)+4\pi\beta_j\int_0^t\dtau\:\I(t-\tau)|q_j(\tau)|^{2\sigma_j}q_j(\tau) \\
  \hspace{2cm} \displaystyle + \sum_{k=1}^N\int_0^t\dtau\:\KK_{jk}(t-\tau)q_k(\tau)= \underbrace{4\pi\int_0^t\dtau\:\I(t-\tau)(U_0(\tau)\psi_0)(\yv_j)}_{{:=f_j(t)}},
 \end{array}
\end{equation}
with $\I$ denoting the Volterra function of order $-1$
\begin{equation}
 \label{eq:i}
 \I(t) : = \int_{0}^{\infty}\dtau \: \frac{t^{\tau - 1}}{\Gamma(\tau)},
\end{equation}
and where $\KK_{jk}$, $ j, k = 1, \ldots, N $, is defined by
\begin{equation}
 \label{eq:KK}
 \KK_{jk}(t) : = 
 \begin{cases}
  \displaystyle -2 \left( \log 2 - \gamma +\tfrac{i\pi}{4} \right) \I(t), & \mbox{if } j = k, \\[.3cm]
  \displaystyle 2i\int_0^t\dtau\:\I(t-\tau)U_0(\tau;|\yv_j-\yv_k|), & \mbox{if } j \neq k.
 \end{cases}
\end{equation}
Notice that the choice of the initial time $t=0$ is completely arbitrary: everything we prove in this paper holds as well if the initial time $ t = 0 $ is replaced with any $ s \geq 0 $.

For the sake of completeness we also formulate the weak counterpart of the evolution problem \eqref{eq:ansatz_pre} and \eqref{eq:charge_eq_pre}, which reads as follows: let the initial datum $ \psi_0 $ belong to the form domain $ \dom[\F] $, then $\psi_t\in \dom[\F]$ and 
\[
	\begin{cases}
		 i \frac{d}{dt} \braket{\chi}{\psi_t} =  \lf. \F_{\alv(t)}\big[\chi,\psi_t\big] \ri|_{\lf\{ \alpha_j(t) = \beta_j \lf| q_j(t) \ri|^{2\sigma_j}, \: j = 1,\ldots,N \ri\}},	\\
		 \psi_{t = 0}  =  \psi_0,	
	\end{cases}
\]
for any $ \chi \in \dom[\F] $.

	The form of the Volterra equation \eqref{eq:charge_eq_pre} makes apparent a major difference with the 1 and 3D cases, which is also one of the main reasons why the 2D one called for a more refined analysis: the integral operator with kernel $ \I(t-\tau) $ defined in \eqref{eq:i} is a characteristic feature of the 2D problem and poses hard technical issues (see, e.g., \cite{CF}). In 3D (in 1D the equation is even simpler) the role of  $ I $ is played by the Abel-$1/2$ integral operator with kernel $ 1/\sqrt{t- \tau} $, which enjoys a lot of useful regularizing properties, making the investigation of \eqref{eq:charge_eq_pre} much easier. In that case, by taking smooth enough initial data, the regularity easily propagates to $ \qv(t) $, so that the ansatz \eqref{eq:ansatz_pre} belongs to the operator domain and therefore it provides strong solution to the Cauchy problem. The extension to rougher initial data is then obtained by density. In 2D already the first step, i.e., the regularity of $ \qv(t) $ for smooth initial data, is challenging and the whole proof strategy has to be dramatically changed (see Sect. \ref{sec:proofs}).
	
	Moreover, the lack of regularity of $ \qv(t) $ prevents the use of any density argument, which is precisely the route followed in 1 and 3D: indeed, it is impossible in 2D to restrict the set of initial data, prove the well-posedness and then extend the result to all initial data by density. On the opposite, our strategy relies on a contraction argument, which does not allow to propagate any additional regularity from the initial datum to $ \qv(t) $ (and then to $ \psi_t $). In addition, the appropriate contraction space is $ H^{1/2}(0,T) $, which is known to have a sort of pathological behavior, i.e., failure of the Hardy inequality, absence of natural extensions to $ H^{1/2}(\R) $ etc. (see below), and makes the technical side of the proof really tricky.
	 

\subsection{Main results}
\label{sec:mainres}

{Although, as we pointed out in Section \ref{model}, the model makes sense for an arbitrary number $N$ of nonlinear point interactions, in the following of the paper we will only deal with the case $N=1$. The reason of such a restriction is purely technical (see also the end of Section 2.2). Indeed, with some tricky calculations, it is possible to check that, due to the asymptotic diverging behavior of $\I$ (see next \eqref{eq:Iasympt}) and $U_0(\,\cdot\,;|\yv_j-\yv_k|)$ (for $j\neq k$) near the origin, the off-diagonal kernels in \eqref{eq:KK} are very singular, i.e.,
\[
 	2i\int_0^t\dtau\:\I(t-\tau)U_0(\tau;|\yv_j-\yv_k|)\notin L^1(0,T).
\]
This represents a main issue for the study of the integral equation \eqref{eq:charge_eq_pre}, since all the ``contractive theory'' developed in Section \ref{sec:preliminaries} strongly relies on the integrability of the kernel.}

{It is also worth stressing that in the 1 and the 3D cases such a problem is not present, since if we replace $\I$ with the $\frac{1}{2}$-Abel kernel and $U_0$ with the one- or three-dimensional kernel of the free propagator, then the off-diagonal terms of the charge equation present the same qualitative behavior of the diagonal part.}

\medskip
{Consequently, our discussion only focuses on the case of a single nonlinear interaction placed at $\yv\in\R^2$, with coefficient $\beta_0\in\R$ and nonlinear power $\sigma\in\R^+$ (i.e., $\alpha(t)=\beta_0|q(t)|^{2\sigma}q(t)$); namely we study the properties of the function}
\begin{equation}
 	\label{eq:ansatz}
 	\psi_t (\xv) : = \lf( U_0(t) \psi_0 \ri) (\xv) + \disp\frac{i}{2\pi} \int_0^t \diff \tau \: U_0\lf(t - \tau; |\xv - \yv|\ri) \: q(\tau),
\end{equation}
where $q$ satisfies
\begin{equation}
 \label{eq:charge_eq}
 \boxed{
 \begin{array}{l}
 	\displaystyle q(t)+4\pi\beta_0\int_0^t\dtau\:\I(t-\tau)|q(\tau)|^{2\sigma}q(\tau) \\
	\hspace{1.5cm} \displaystyle -2 \left( \log 2 - \gamma +\tfrac{i\pi}{4}\right)\int_0^t\dtau\:\I(t-\tau)q(\tau)= 4\pi\int_0^t\dtau\:\I(t-\tau)(U_0(\tau)\psi_0)(\yv).
 \end{array}
 }
\end{equation}
More precisely, recalling that in the case of a single point interaction the quadratic form defined by \eqref{eq:from_pre} reads
\[
	\mathcal{F}_{\alpha(t)}[\psi] := \int_{\R^2} \dxv \lf\{ \lf| \nabla \phila \ri|^2 + \la \lf| \phila \ri|^2 - \la |\psi|^2 \ri\} + \lf( \alpha(t) + \tx\frac{1}{2\pi} \log \frac{\sqrt{\la}}{2} + \frac{\gamma}{2\pi} \ri) |q|^2,
\]
with domain
\begin{equation}
\label{eq:form_domain}
\dom[\F]:= \bigg\{ \psi \in L^2(\R^2) \: \big| \: \psi = \phila + \frac{1}{2\pi}qK_0\lf(\sqrt{\la} |\xv - \yv|\ri), \phila \in H^1(\R^2), q \in \C \bigg\},
\end{equation} 
we will show that $\psi_t$ is the unique solution of the weak Cauchy problem
\begin{equation}
\label{eq:cauchyweak}
	\boxed{\begin{cases}
		i \frac{d}{dt} \braket{\chi}{\psi_t} =  \lf. \F_{\alpha(t)}\big[\chi,\psi_t\big] \ri|_{\lf\{ \alpha(t) = \beta_0 \lf| q(t) \ri|^{2\sigma}\ri\}},	\\
		\psi_{t = 0}  =  \psi_0,
	\end{cases}}
\end{equation}
{for any $ \chi=\chi_\lambda+\frac{1}{2\pi} q_{\chi} K_0(\sqrt{\lambda}|\cdot-\yv|) \in \dom[\F] $, where
\bmln{
  \lf.\F_{\alpha(t)}\big[\chi,\psi_t\big] \ri|_{\lf\{ \alpha(t) = \beta_0 \lf| q(t) \ri|^{2\sigma}\ri\}} := \, \int_{\R^2}\dxv\:\left\{\nabla\chi_\lambda^*\cdot\nabla\phi_{\lambda,t}+\lambda\chi_\lambda^*\phi_{\lambda,t}-\lambda\chi^*\psi_t\right\}\\
                                                                                                                                   \, +\bigg(\beta_0|q(t)|^{2\sigma}+\frac{1}{2\pi}\log\frac{\sqrt{\lambda}}{2}+\frac{\gamma}{2\pi}\bigg)q_{\chi}^* q(t).}}

\subsubsection{Local well-posedness and conservation laws}

The first result we prove concerning the evolution problem described above is a local well-posedness for initial data in a suitable subset of the form domain that we define as follows (we set $ p = |\pv| $ for short)
\begin{equation}
 	\label{eq:in_assumption}
 	\dom := \lf\{\psi \in \dom[\F] \: \Big| \: \lf(1 + p^{\epsilon}\ri) \: \widehat{\phi_\lambda}(\pv) \in L^1(\R^2), \mbox{ for some } \epsilon > 0 \ri\},
\end{equation}
where $\widehat{\phi_\lambda}$ stands for the Fourier transform (see \eqref{eq:fourier}) of the regular part $\phi_\lambda$.

	\begin{teo}[Local well-posedness]
		\label{teo:local}
		\mbox{}	\\
		Let $ \psi_0 \in \dom $ and $ \sigma \geq \frac{1}{2} $. Then, there exists $ T > 0 $ such that there is a unique solution to \eqref{eq:cauchyweak} belonging to $ \dom[\F] $ for any $ t \leq T $ and it is given by \eqref{eq:ansatz}, with $ q(t) $ the unique solution to \eqref{eq:charge_eq}.
	\end{teo}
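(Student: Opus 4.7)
The plan is to reduce the weak Cauchy problem \eqref{eq:cauchyweak} to the scalar Volterra equation \eqref{eq:charge_eq} for the charge $q(t)$, solve the latter by a Banach fixed point argument in $H^{1/2}(0,T)$ for $T$ small enough, and then check that the ansatz \eqref{eq:ansatz} with this charge defines a function in $\dom[\F]$ satisfying \eqref{eq:cauchyweak}.

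Concretely, I would rewrite \eqref{eq:charge_eq} as $q = \Phi[q]$, where
\[
\Phi[q](t) := f(t) - 4\pi\beta_0 \int_0^t \I(t-\tau)|q(\tau)|^{2\sigma}q(\tau)\dtau + 2\lf(\log 2 - \gamma + \tfrac{i\pi}{4}\ri)\int_0^t \I(t-\tau) q(\tau)\dtau,
\]
and $f(t) := 4\pi\int_0^t \I(t-\tau)(U_0(\tau)\psi_0)(\yv)\dtau$. The natural space for $q$ is $H^{1/2}(0,T)$: this fractional regularity matches the smoothing expected from convolution with $\I$, which behaves like the $\tfrac{1}{2}$-Abel kernel up to logarithmic corrections.

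I would then establish three key ingredients. First, the forcing $f$ lies in $H^{1/2}(0,T)$: this is where the assumption $\psi_0\in\dom$ enters, since the integrability of $(1+p^\epsilon)\widehat{\phi_\lambda}$ yields H\"older-type regularity in $\tau$ of $(U_0(\tau)\psi_0)(\yv)$, which is then promoted to $H^{1/2}$ regularity after convolution with $\I$. Second, the linear Volterra operator $q \mapsto \int_0^t \I(t-\tau)q(\tau)\dtau$ is bounded on $H^{1/2}(0,T)$ with norm vanishing as $T\to 0$; this would follow from a Fourier-analytic study of the Laplace transform of $\I$ coupled with Gagliardo-seminorm estimates on the finite interval. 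Third, the pointwise map $q \mapsto |q|^{2\sigma}q$ sends $H^{1/2}(0,T)$ into itself and is locally Lipschitz for $\sigma \geq \tfrac{1}{2}$, which precisely makes the nonlinearity of class $C^1$. Combining these, $\Phi$ becomes a contraction on a sufficiently small closed ball of $H^{1/2}(0,T)$ centered at $f$ provided $T$ is small, so it admits a unique fixed point $q\in H^{1/2}(0,T)$.

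Having solved \eqref{eq:charge_eq}, I would prove that $\psi_t$ defined by \eqref{eq:ansatz} belongs to $\dom[\F]$ for every $t\in[0,T]$ and satisfies \eqref{eq:cauchyweak}. This step essentially follows the blueprint of the linear analysis in \cite{CCF}: one differentiates $\braket{\chi}{\psi_t}$ in time exploiting the Duhamel structure of \eqref{eq:ansatz}, recognizes \eqref{eq:charge_eq} as the weak counterpart of the nonlinear boundary condition with $\alpha(t)=\beta_0|q(t)|^{2\sigma}$, and recovers the expression of $\F_{\alpha(t)}[\chi,\psi_t]$. Uniqueness at the level of $\psi_t$ then descends from the uniqueness of the charge via the decomposition in \eqref{eq:form_domain}: any solution in $\dom[\F]$ must be of the form \eqref{eq:ansatz} with a charge satisfying \eqref{eq:charge_eq}.

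The main obstacle is the nonlinear estimate in $H^{1/2}(0,T)$. As the authors stress at the end of Section \ref{model}, this space on a bounded interval is not an algebra, does not embed in $L^\infty$, and offers neither a Hardy inequality nor a canonical zero-extension to $H^{1/2}(\R)$; standard Moser-type nonlinear estimates are therefore unavailable. One has to work directly with the intrinsic Gagliardo seminorm and rely on pointwise inequalities of the type
\[
\bigl| |q|^{2\sigma}q - |p|^{2\sigma}p \bigr| \leq C\lf(|q|^{2\sigma}+|p|^{2\sigma}\ri)|q-p|,
\]
valid exactly for $\sigma \geq \tfrac{1}{2}$, controlling the resulting double integrals through a careful splitting into short and long time-differences. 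A secondary obstacle is the singular behavior of $\I$ near the origin (cf.\ its asymptotics): every operator-norm estimate must absorb the logarithmic corrections and still vanish as $T\to 0$ in order to close the contraction.
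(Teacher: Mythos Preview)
Your overall architecture is right, but the choice of contraction space is a genuine gap. You propose to run the fixed point in $H^{1/2}(0,T)$ alone, and you even note that this space is not an algebra and does not embed in $L^\infty$; yet both of your key ingredients actually need $L^\infty$ control. First, the Volterra operator $I$ is \emph{not} bounded from $H^{1/2}(0,T)$ to itself: the estimate the paper proves (Lemma~\ref{contr_lemma}) is
\[
\lf\| If \ri\|_{H^{1/2}(0,T)} \leq C_T\lf(\lf\|f\ri\|_{L^\infty(0,T)}+\lf\|f\ri\|_{H^{1/2}(0,T)}\ri),
\]
and the $L^\infty$ term on the right is unavoidable, because $I$ carries no smoothing whatsoever --- contrary to your claim that $\I$ ``behaves like the $\tfrac12$-Abel kernel'' and yields fractional regularity. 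Second, the pointwise inequality you write down for $|q|^{2\sigma}q$ holds for every $\sigma\geq 0$ and is not where the restriction $\sigma\geq\tfrac12$ enters; that restriction appears only when estimating the Gagliardo seminorm of the difference, which requires a \emph{second-order} expansion of $z\mapsto|z|^{2\sigma}z$ (see Lemma~\ref{lem-lip} and Remark~\ref{eq:ass_on_sigma}), and the resulting bound again involves $\|q\|_{L^\infty}$ as a multiplicative factor. Without a sup-norm bound on $q$ you cannot close either estimate.

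The paper's fix is to work in $C[0,T]\cap H^{1/2}(0,T)$ with the combined norm $\|\cdot\|_{L^\infty}+\|\cdot\|_{H^{1/2}}$. Continuity is obtained first, independently, from general Volterra theory (Proposition~\ref{pro:charge continuous}); the contraction in Proposition~\ref{pro:charge} is then run on a ball in this intersection space, so that the sup norm is available at every step. Once you include $L^\infty$ in the norm, your three ingredients become exactly Lemmas~\ref{contr_lemma_inf}, \ref{contr_lemma} and~\ref{lem-lip}, and the rest of your outline (forcing term, $\psi_t\in\dom[\F]$, uniqueness via the decomposition) matches the paper.
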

	
	\begin{rem}[Charge $ q(t) $]
		\mbox{}	\\
		The above Theorem contains in fact two results: the most important one is the local well-posedness of the weak Cauchy problem \eqref{eq:cauchyweak}, but that result actually follows from the properties of the solution to the Volterra-type equation \eqref{eq:charge_eq}. In fact, once established the existence and uniqueness of $ q(t) $ in $ C[0,T] \cap H^{1/2}(0,T) $ (see Propositions \ref{pro:charge continuous}, \ref{pro:charge} and \ref{pro:extension q}), one can prove that such a regularity transfers to the wave function defined by \eqref{eq:ansatz} and then, thanks to the special form of \eqref{eq:ansatz}, that $\psi_t$ solves \eqref{eq:cauchyweak}. It has to be stressed that the regularity of $ q $ is, in fact, borderline to make this argument work and a very fine analysis has to be performed.
	\end{rem}
	
	\begin{rem}[Uniqueness of $ \psi_t $]
		\mbox{}	\\
		One could think that the ansatz \eqref{eq:ansatz} might not be the unique solution of the weak problem \eqref{eq:cauchyweak}. However, it is easy to see that this is not the case and $ \psi_t $ is in fact the unique solution of \eqref{eq:cauchyweak}. Suppose that, for a given initial datum $ \psi_0 \in \dom $, there was another solution $ \widetilde\psi_t $. Then, by definition, it should decompose as
		$$
			\widetilde\psi_t = \widetilde\phi_{\la,t} + \frac{1}{2\pi} \widetilde{q}(t) K_0 \lf(\sqrt{\la} |\xv - \yv|\ri),
		$$
		for some bounded charge $ \widetilde{q}(t) $ different from $ q(t) $. However, one could as well decompose $ \widetilde\psi_t $ as (see, e.g., Sect. \ref{sec:charge})
		$$
			\widetilde\psi_t(\xv)  = \chi_{\la,t}(\xv) + \disp\frac{i}{2\pi}\int_0^t \diff \tau \: U_0\lf(t - \tau; |\xv - \yv|\ri) \widetilde{q}(\tau),
		$$
		for some function $ \chi_{\la,t} $. Now, it is not difficult to see (Sect. \ref{sec:charge}) that this function can solve \eqref{eq:cauchyweak} if and only if $ \chi_{\la,t} = U_0(t) \psi_0 $ and $ \widetilde q $ solves the charge equation \eqref{eq:charge_eq}. Uniqueness of the solution of \eqref{eq:charge_eq} implies then the result. In fact, in Sect. \ref{sec:charge} the previous argument is carried out in the case of strong solutions, following the original proof of \cite{A} for the linear problem (and with the extra assumption $q(0)=0$). However, it is possible to prove that it can be adapted to weak solutions.
	\end{rem}
	 
	 \begin{rem}[Condition on the nonlinearity]
	 	\mbox{}	\\
	 	Although not so relevant for most physical applications, it is worth discussing briefly the role of the condition $ \sigma \geq \frac{1}{2} $. There is no analogue of such a condition in the proof of local well-posedness for the 1D and 3D models. We believe it is only a technical assumption needed in a single step of the proof. More precisely it is due to the different strategy we have to follow in the first part of the proof, i.e., the contraction argument used in the analysis of the charge equation, which requires to assume $ \sigma \geq \frac{1}{2} $ (see Lemma \ref{lem-lip} and Remark \ref{eq:ass_on_sigma}). Obviously, the case $ \sigma = 0 $ is also included, corresponding to the linear evolution problem studied in \cite{CCF}. 
	 \end{rem}
	 
	 \begin{rem}[Condition on the initial state]
	 	\mbox{}	\\
	 	We point out that the assumption on the initial state $ \psi_0 \in \dom \subsetneq \dom[\F] $ is more restrictive then one would expect, since not only {$ \phi_{\lambda,0} \in H^1(\R^2)$}, but also the Fourier transform must be in $ L^1(\R^2) $. This, for instance, ensures that the time-evolution of the regular part $ U_0(t) \phi_{\lambda,0} $ is a continuous function, in order to be able to evaluate it at the singular point $ \yv$. The condition is deeply related to the lack of regularizing properties of the operator $ I $ and in this respect the choice of $ \dom $ is the most reasonable. {On the other hand, the further requirement $p^\ep\widehat{\phi_{\lambda,0}}\in L^1(\R^2)$ plays a role only in Lemma \ref{lemma:forzcont} (which is however mandatory for the proof of next Theorem \ref{teo:conservation}).} No analogue of these conditions is however present in the 1 and 3D cases and it might as well be that such extra assumptions are not needed for a weak solution. 
	 \end{rem}
	 
In addition, we can claim a conservation result, that also plays a crucial role in the proof of the global existence of the solutions mentioned above:

	\begin{teo}[Conservation laws]
		\label{teo:conservation}
		\mbox{}	\\	
		Let $ \psi_0 \in \dom $, $\psi_t$ be the wave function defined by \eqref{eq:ansatz} and \eqref{eq:charge_eq} and $ T > 0  $ the existence time provided by Theorem \ref{teo:local}. Then, the mass $ M(t)  = \lf\| \psi_t \ri\|_2 $ and the energy
		\begin{equation}
			\label{eq:energy}
			E(t) = \|\phi_{1,t}\|_{H^1(\R^2)}^2 +\left( \frac{\beta_0}{\sigma+1}|q(t)|^{2\sigma}+ \frac{\gamma-\log 2}{2\pi}\right)	|q(t)|^2
		\end{equation}
		are conserved for every $t\in[0,T]$.
	\end{teo}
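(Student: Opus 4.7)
The strategy is to test the weak equation \eqref{eq:cauchyweak} with the time-dependent function $\chi=\psi_t\in\dom[\F]$ itself. Since the sesquilinear form $\F_{\alpha(t)}[\cdot,\cdot]$ is hermitian, $\F_{\alpha(t)}[\psi_t,\psi_t]\in\R$, so that fixing $s$ and choosing $\chi=\psi_s$ in \eqref{eq:cauchyweak} at time $t=s$ yields $\braket{\psi_s}{\dot\psi_s}=-i\,\F_{\alpha(s)}[\psi_s]\in i\R$, and therefore
\[
	\frac{d}{dt}\|\psi_t\|_2^2\;=\;2\Re\braket{\psi_t}{\dot\psi_t}\;=\;0.
\]
Since $\dot\psi_t$ is a priori only defined distributionally in $t$, this has to be made precise in integrated form by combining the identity $\braket{\chi}{\psi_{t_2}}-\braket{\chi}{\psi_{t_1}}=-i\int_{t_1}^{t_2}\F_{\alpha(s)}[\chi,\psi_s]\,ds$ with the two choices $\chi=\psi_{t_1}$ and $\chi=\psi_{t_2}$, exploiting the continuity in $t$ of both $q$ and $\phi_{\lambda,t}$ coming from Theorem \ref{teo:local}.

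\textbf{Reduction of energy conservation.} Evaluating the form at $\lambda=1$ (recall that $\F_{\alpha(t)}$ is $\lambda$-independent on $\dom[\F]$) and substituting $\alpha(t)=\beta_0|q(t)|^{2\sigma}$, a direct algebraic manipulation gives
\[
	E(t)\;=\;\F_{\alpha(t)}[\psi_t]\big|_{\lambda=1}\;+\;M(t)^2\;-\;\frac{\sigma\beta_0}{\sigma+1}|q(t)|^{2(\sigma+1)}.
\]
By the previous step $M(t)$ is conserved; hence proving $\dot E(t)=0$ is equivalent to establishing the Hamiltonian-type identity
\[
	\frac{d}{dt}\F_{\alpha(t)}[\psi_t]\;=\;\dot\alpha(t)\,|q(t)|^2,
\]
i.e.\ the total $t$-derivative of the quadratic form coincides with its partial $t$-derivative through the explicit dependence on $\alpha$.

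\textbf{Formal Hamiltonian argument.} Splitting $\F_{\alpha(t)}[\psi_t]=\F_0[\psi_t]+\alpha(t)|q(t)|^2$, with $\F_0$ the form at $\alpha=0$, the identity to prove further reduces to $\tfrac{d}{dt}\F_0[\psi_t]=-\alpha(t)\tfrac{d}{dt}|q(t)|^2$. Testing \eqref{eq:cauchyweak} with $\chi=\dot\psi_t$ formally gives $\F_{\alpha(t)}[\dot\psi_t,\psi_t]=i\|\dot\psi_t\|_2^2\in i\R$; combining this with $\F_0[\dot\psi_t,\psi_t]=\F_{\alpha(t)}[\dot\psi_t,\psi_t]-\alpha(t)\,\dot q^*(t)\,q(t)$ and the elementary fact $\Re(\dot q^*(t)q(t))=\tfrac{1}{2}\tfrac{d}{dt}|q(t)|^2$, taking the real part of the resulting identity produces exactly the required cancellation.

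\textbf{Main obstacle.} The previous step is purely formal: it requires $\dot\psi_t\in\dom[\F]$ and, in particular, the pointwise existence of $\dot q(t)\in\C$, whereas Theorem \ref{teo:local} only produces $q\in C[0,T]\cap H^{1/2}(0,T)$, so that $\dot q$ lives at best in $H^{-1/2}$. This is precisely where the extra assumption $p^\eps\widehat{\phi_{\lambda,0}}\in L^1(\R^2)$ built into $\dom$ becomes necessary: through Lemma \ref{lemma:forzcont} it yields Hölder continuity of the forcing $f$ in \eqref{eq:charge_eq} and, via the Volterra equation, a Hölder regularity of $q$ strictly in excess of $H^{1/2}$. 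The Hamiltonian identity above is then recast in integrated form, tested against smooth temporal cutoffs and integrated by parts in $t$ using this improved regularity; the cancellation, now meaningful, is integrated over $[0,t]$ to yield $E(t)=E(0)$. This regularity upgrade and its incorporation in the integrated Hamiltonian identity is expected to be the delicate technical heart of the proof.
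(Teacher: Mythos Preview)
Your strategy is genuinely different from the paper's, and the energy part contains a real gap.

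\textbf{Mass.} Your soft argument (test the weak equation with $\chi=\psi_t$) is morally the standard one and can likely be completed, though you would still need to justify that $t\mapsto\|\psi_t\|_2^2$ is absolutely continuous from the available regularity. The paper does \emph{not} proceed this way: it expands $|\widehat{\psi_t}|^2$ directly from the ansatz \eqref{eq:ansatz}, integrates over $\pv$, and shows by explicit computation---using the inverted charge equation \eqref{eq:charge_inv} and the identity \eqref{eq:sici} for $(U_0(\cdot)K_0)(\zev)$---that all cross terms cancel.

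\textbf{Energy.} Here your plan breaks down. You correctly identify that testing with $\dot\psi_t$ is only formal because $\dot q$ lives in $H^{-1/2}$, but your proposed fix is based on a misreading of Lemma~\ref{lemma:forzcont}: that lemma gives only \emph{log}-H\"older continuity (the class $C_{\log,\beta}$), which is strictly weaker than any H\"older class $C^{0,\alpha}$ and does \emph{not} place $q$ in any $H^{1/2+\eps}$. No regularity upgrade of the type you invoke is available in this setting; indeed, the whole point of Remarks~\ref{rem:meaning} and~\ref{rem:meaning2} and of Proposition~\ref{pro:extension} is to give meaning to pairings $\int f\,\dot q$ \emph{without} improving the Sobolev regularity of $q$, by extending $f\in C_{\log,\beta}\cap H^{1/2}$ to $H^{1/2}(\R)$ and using the $H^{1/2}$--$H^{-1/2}$ duality. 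Your ``integrated form with smooth cutoffs'' remains a placeholder: you have not said how to make $\F_{\alpha(t)}[\dot\psi_t,\psi_t]$ meaningful when $\dot\psi_t\notin\dom[\F]$.

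The paper avoids testing with $\dot\psi_t$ altogether. It writes $\phi_{1,t}$ explicitly via the ansatz, computes $\|\phi_{1,t}\|_{H^1}^2$ as $\|\phi_{1,0}\|_{H^1}^2$ plus cross terms $\widetilde\Psi_t+\widetilde\Phi_t$ that are integrals of $(\dot q-iq)^*$ against quantities shown (Lemmas~\ref{lemma:forzcont}, \ref{lemma:dotq}) to lie in $C_{\log,\beta}\cap H^{1/2}$ with $\beta>1/2$; these integrals are then well defined via the duality of Remark~\ref{rem:meaning2}. A lengthy but direct calculation using \eqref{eq:charge_inv} and the explicit form of $Q(1;\cdot)$ reduces $\widetilde\Psi_t+\widetilde\Phi_t$ to an exact $t$-derivative of the boundary term in \eqref{eq:energy}. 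If you want to salvage your abstract route, you would need either an approximation of $q$ by smoother charges compatible with the nonlinear equation, or a direct interpretation of $\F_0[\dot\psi_t,\psi_t]$ through the same duality machinery---neither of which your proposal supplies.
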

	
	\begin{rem}[Dependence on $ T $]
		\mbox{}	\\
	 We stress that, as it is, the conservation of the mass and the energy does not actually depend on $T$. We claimed that they are  conserved  quantities only for $t\in[0,T]$, since at this point we know that $\psi_t\in\dom[\F]$ only for $t\in[0,T]$. However, it is clear that, as one proves that this is true for every $t\geq0$, then one immediately extends the conservation to any $t\geq0$.
	\end{rem}
	
	\begin{rem}[Choice of the spectral parameter $ \lambda $]
		\mbox{}	\\
		The decomposition of functions in the form domain $ \dom[\F]$ defined in \eqref{eq:form_domain} depends on a spectral parameter $ \lambda > 0 $, although the domain itself is independent of $ \lambda $. In \eqref{eq:energy} we have made the choice to pick $\lambda=1$ (as suggested in \cite{A}). It is worth recalling that this is an arbitrary choice and any other choice would imply an equivalent conservation law, but a different decomposition.
	\end{rem}

	\begin{rem}[Energy form]
		\mbox{}	\\
		Another difference between the 2D case and the 1 and 3D ones is apparent in the form of the energy \eqref{eq:energy}: instead of the $ L^2$ norm of the gradient of the regular part of the wave function, \eqref{eq:energy} contains (first term) the full $ H^1 $ norm of $ \phi_{1,t} $. This is again a consequence of the impossibility to choose $ \lambda = 0 $ as a spectral parameter in the form domain decomposition. 
	\end{rem}

\subsubsection{Global well-posedness and blow-up alternative}

As we are going to see, the energy conservation is the key to prove the global well-posedness of the solution for $ \beta_0>0$. On the opposite, in the focusing case, i.e., if $ \beta_0 < 0 $, the solution might be non-global due to a blow-up at finite time. It is important to remark that, unlike the NLS with concentrated nonlinearity in 3D, one expects that no critical power occurs in the 2D focusing case and hence, as soon as $ \beta_0< 0  $, a blow-up solution might show up \cite{A}. We plan to deal with this question in a forthcoming paper.

	\begin{teo}[Global well-posedness]
		\label{teo:global}	
		\mbox{}	\\
		Let $ \sigma \geq \frac{1}{2} $ and $ \beta_0 > 0 $. Then, the solution to \eqref{eq:cauchyweak} provided by $\psi_t$ defined by \eqref{eq:ansatz} and \eqref{eq:charge_eq} is global in time, for any initial datum $ \psi_0 \in \dom $.
	\end{teo}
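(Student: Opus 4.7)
The plan is a standard continuation argument based on a priori bounds extracted from energy conservation. Let $T^{*}$ denote the supremum of times $T>0$ for which the local solution furnished by Theorem \ref{teo:local} exists on $[0,T]$. The goal is to prove $T^{*}=+\infty$, arguing by contradiction and assuming $T^{*}<\infty$.

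The first step is to derive a uniform-in-time estimate on $|q(t)|$ and $\|\phi_{1,t}\|_{H^{1}(\R^{2})}$ on $[0,T^{*})$ from Theorem \ref{teo:conservation}. Because $\beta_{0}>0$, both $\|\phi_{1,t}\|_{H^{1}}^{2}$ and $\tfrac{\beta_{0}}{\sigma+1}|q(t)|^{2\sigma+2}$ in \eqref{eq:energy} have the good sign, while the term $\tfrac{\gamma-\log 2}{2\pi}|q(t)|^{2}$ is negative since $\gamma-\log 2<0$. As $\sigma\geq 1/2>0$, Young's inequality gives $|q|^{2}\leq \eps|q|^{2\sigma+2}+C_{\eps}$; choosing $\eps$ small enough to absorb the bad term into $\tfrac{\beta_{0}}{\sigma+1}|q|^{2\sigma+2}$ yields
\[
\|\phi_{1,t}\|_{H^{1}}^{2} + c\,|q(t)|^{2\sigma+2} \leq E(0) + C,
\]
for positive constants $c,C$ depending only on $\beta_{0}$ and $\sigma$. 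In particular, both $|q(t)|$ and $\|\phi_{1,t}\|_{H^{1}}$ are uniformly bounded on $[0,T^{*})$ by a quantity depending only on the conserved energy $E(0)$.

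The second step is to extend the solution past $T^{*}$ by re-applying Theorem \ref{teo:local} with initial time $s<T^{*}$ and initial datum $\psi_{s}$. This requires, on the one hand, that $\psi_{s}\in\dom$ (and not merely $\psi_{s}\in\dom[\F]$) for every $s\in[0,T^{*})$, and, on the other hand, that the local existence time furnished by Theorem \ref{teo:local} starting from $\psi_{s}$ is bounded below by some $\tau>0$ independent of $s$. Inspection of the contraction argument proving Theorem \ref{teo:local} shows that $\tau$ depends only on a continuous function of $\|\phi_{\lambda,s}\|_{H^{1}}$, $|q(s)|$ and $\|(1+p^{\eps})\widehat{\phi_{\lambda,s}}\|_{L^{1}(\R^{2})}$, so once these quantities are uniformly bounded in $s$ one obtains $\tau>0$ uniformly; taking $s$ close enough to $T^{*}$ so that $s+\tau>T^{*}$ then contradicts the maximality of $T^{*}$.

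The \emph{main obstacle} is the verification that $\psi_{s}\in\dom$ for every $s\in[0,T^{*})$: the extra Fourier-integrability condition defining $\dom$ is not among the conserved quantities and must be propagated along the flow via the Duhamel representation \eqref{eq:ansatz}. The free unitary propagator $U_{0}(s)$ preserves the condition $(1+p^{\eps})\widehat{\phi_{\lambda,0}}\in L^{1}(\R^{2})$ trivially, acting in Fourier variables as multiplication by $e^{-is|\pv|^{2}}$. The nontrivial part is to show that the Duhamel contribution $\tfrac{i}{2\pi}\int_{0}^{s}U_{0}(s-\tau;|\cdot-\yv|)\,q(\tau)\,d\tau$, once its singular $K_{0}(\sqrt{\lambda}|\cdot-\yv|)$ tail proportional to $q(s)$ has been subtracted off, contributes a regular piece to $\phi_{\lambda,s}$ whose Fourier transform still satisfies the $(1+p^{\eps})$-integrability required by \eqref{eq:in_assumption}. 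This must be done by a fine analysis exploiting the uniform $L^{\infty}$ bound on $q$ together with the regularity $q\in C[0,s]\cap H^{1/2}(0,s)$ provided by the local theory.
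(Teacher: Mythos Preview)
Your first step, the a priori bound on $|q(t)|$ from energy conservation, is correct and matches the paper (the paper states it slightly more tersely by just observing that $x\mapsto \tfrac{\beta_{0}}{\sigma+1}x^{2\sigma+2}+\tfrac{\gamma-\log 2}{2\pi}x^{2}$ diverges at infinity, but your Young-inequality argument is equivalent).

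The divergence is in the second step, and here your proposal has a genuine gap. You want to re-apply Theorem~\ref{teo:local} at time $s$, which forces you to check that $\psi_{s}\in\dom$, i.e.\ that the regular part $\phi_{\lambda,s}$ still satisfies $(1+p^{\eps})\widehat{\phi_{\lambda,s}}\in L^{1}(\R^{2})$. You correctly flag this as the main obstacle but do not carry it out. It is not clear that it \emph{can} be carried out with the available information: from the Fourier representation of $\phi_{\lambda,s}$ (see \eqref{eq:regFourier}), the Duhamel contribution is $\tfrac{1}{p^{2}+\lambda}$ times a function of $p^{2}$ that one only controls in weighted $L^{2}$ via $q\in H^{1/2}$ and $\dot\xi\in H^{-1/2}$; this gives $H^{1}$ membership of $\phi_{\lambda,s}$, but not $L^{1}$ integrability of $(1+p^{\eps})\widehat{\phi_{\lambda,s}}$, since $(1+p^{\eps})/(p^{2}+\lambda)$ alone is not in $L^{1}(\R^{2})$. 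Moreover, even if this were settled, you would still need a \emph{uniform} bound on $\|(1+p^{\eps})\widehat{\phi_{\lambda,s}}\|_{L^{1}}$ as $s\uparrow T^{*}$, which is a further nontrivial step.

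The paper bypasses this obstacle entirely by working at the level of the charge equation rather than the full PDE. Proposition~\ref{pro:charge continuous} already contains a blow-up alternative for the \emph{Volterra} equation \eqref{eq:charge_eq}: either $T_{*}=+\infty$ or $|q(t)|\to\infty$ as $t\to T_{*}$. Once the energy bound gives $\limsup_{t\to T_{*}}|q(t)|<\infty$, one concludes immediately that $T_{*}=+\infty$, so $q\in C[0,\infty)$. Proposition~\ref{pro:extension q} then upgrades this to $q\in H^{1/2}(0,T)$ for every finite $T$, and the argument of Theorem~\ref{teo:local} shows $\psi_{t}\in\dom[\F]$ solves \eqref{eq:cauchyweak} for all $t\geq 0$. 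No propagation of the $\dom$ condition is ever needed.
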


As anticipated, in the focusing case we have a blow-up alternative:

	\begin{pro}[Blow-up alternative]
		\label{pro:blow-up}
		\mbox{}	\\
		Let $ \sigma \geq \frac{1}{2} $, $ \beta_0 < 0 $ and $ \psi_0 \in \dom $. Then, the solution to \eqref{eq:cauchyweak} provided by $\psi_t$ defined by \eqref{eq:ansatz} and \eqref{eq:charge_eq} is  either global in time or it blows-up in a finite time.
	\end{pro}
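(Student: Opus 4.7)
The plan is the standard continuation argument. Set
\[
T^* := \sup\{T > 0 \colon \psi_t \text{ defined by }\eqref{eq:ansatz}\text{--}\eqref{eq:charge_eq}\text{ solves }\eqref{eq:cauchyweak}\text{ on }[0,T]\},
\]
which is positive by Theorem \ref{teo:local}. I need to show that either $T^* = +\infty$, or $|q(t)| \to +\infty$ as $t \to T^{*-}$, the latter being the natural notion of blow-up here because $|q(t)|^2$ controls the singular part of $\psi_t$ in $\dom[\F]$. Assume by contradiction that $T^* < +\infty$ and that $Q := \sup_{t\in[0,T^*)}|q(t)| < +\infty$.

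First, feed the uniform bound $|q(t)| \leq Q$ into the energy identity of Theorem \ref{teo:conservation}: although $\beta_0 < 0$, the quantity
\[
\bigg(\frac{\beta_0}{\sigma+1}|q(t)|^{2\sigma} + \frac{\gamma - \log 2}{2\pi}\bigg)|q(t)|^2
\]
is bounded in absolute value by $|\beta_0|Q^{2\sigma+2}/(\sigma+1) + |\gamma-\log 2|Q^2/(2\pi)$, so $\|\phi_{1,t}\|_{H^1(\R^2)}$ is uniformly bounded on $[0,T^*)$. Combined with mass conservation, this yields a uniform bound on $\|\psi_t\|_{\dom[\F]}$. Next, using the explicit ansatz \eqref{eq:ansatz}, the $L^\infty$ control of $q(\tau)$, and the initial regularity $\psi_0 \in \dom$, I would check that $\psi_t$ stays in $\dom$ on $[0, T^*)$ with $\dom$-norm controlled by a constant depending only on $Q$, $\|\psi_0\|_{\dom}$ and $T^*$. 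Concretely, this amounts to showing that the Fourier integrability $(1+p^\epsilon)\widehat{\phi_{\lambda,t}} \in L^1(\R^2)$ propagates from $\psi_0$ through the free flow of the regular part and the Duhamel-type correction carried by the singular term, an analysis in the spirit of Lemma \ref{lemma:forzcont}.

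Third, pick $t_0 \in [0,T^*)$ and apply Theorem \ref{teo:local} with initial time $t_0$ and initial datum $\psi_{t_0}$; this is legitimate since the local theory is translation invariant in time, as noted after \eqref{eq:KK}. The local existence time $\delta > 0$ provided by the theorem depends only on the $\dom$-norm of the initial datum, which is uniformly bounded on $[0,T^*)$ by the previous step, so $\delta$ may be chosen independent of $t_0$. Choosing $t_0$ with $T^* - t_0 < \delta$ and invoking uniqueness, I glue the new solution on $[t_0, t_0 + \delta]$ to $\psi|_{[0, t_0]}$ and obtain a solution on $[0, t_0 + \delta] \supsetneq [0, T^*]$, contradicting the maximality of $T^*$. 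The hardest step is expected to be the second one: the $L^1$-Fourier condition defining $\dom$ is strictly stronger than membership in $\dom[\F]$ and, as the authors emphasize, is borderline, so its uniform-in-time propagation does not follow from the mass and energy conservation alone and requires a careful Fourier-side estimate on the ansatz \eqref{eq:ansatz}.
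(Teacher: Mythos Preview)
Your restart argument has a real gap: step two, the propagation of the $\dom$-condition $(1+p^{\epsilon})\widehat{\phi_{\lambda,t}}\in L^1(\R^2)$, is not proved in the paper and is genuinely delicate. You yourself flag it as the hardest step, and with good reason: this Fourier-$L^1$ regularity is strictly stronger than the $H^1$ control coming from energy conservation, and the Duhamel term in \eqref{eq:ansatz} only inherits regularity from $q\in C\cap H^{1/2}$, which does not obviously produce an $L^1$ Fourier decay of the regular part. Without this, the local existence time you extract from Theorem~\ref{teo:local} at the new initial time $t_0$ cannot be shown to be uniform in $t_0$, and the gluing fails. A second issue is that, even if $\psi_{t_0}\in\dom$, the contraction time in Proposition~\ref{pro:charge} depends on $\|f\|_{L^\infty}+\|f\|_{H^{1/2}}$ of the forcing term, and you have not argued that this is controlled by a ``$\dom$-norm'' of $\psi_{t_0}$.

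The paper bypasses all of this. The blow-up alternative for the \emph{charge} is already built into Proposition~\ref{pro:charge continuous}, which comes directly from the general theory of nonlinear Volterra equations \cite[Corollary~2.7]{M}: either $T_*=+\infty$ or $\lim_{t\to T_*}|q(t)|=+\infty$. Proposition~\ref{pro:extension q} then extends the $H^{1/2}$ regularity of $q$ to every $[0,T]$ with $T<T_*$ \emph{by working only with the charge equation} \eqref{eq:charge_compact} (splitting the integral at a fixed earlier time and using Lemma~\ref{lem:contr_further}), never by restarting the full problem from $\psi_{t_0}$; hence no $\dom$-propagation is needed. Since $\psi_t$ is determined by $q$ via \eqref{eq:ansatz}, the maximal time for $\psi_t$ coincides with $T_*$. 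If $T_*<+\infty$, then $|q(t)|\to+\infty$ and energy conservation forces $\|\phi_{1,t}\|_{H^1}\to+\infty$, which is the blow-up. The whole proof is three lines once Propositions~\ref{pro:charge continuous} and~\ref{pro:extension q} are in hand.
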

	
	\begin{rem}[Behavior as $ t \to + \infty $]
		\mbox{}	\\
		In fact the proofs of the Theorem \ref{teo:global} and Proposition \ref{pro:blow-up} provide more information than what is contained in the statements. Indeed, while in the defocusing case $ \beta_0 > 0 $, the charge $ q(t) $ is uniformly bounded, i.e., $ \limsup_{t \to + \infty} |q(t)| < + \infty $, in the focusing one, i.e., if $ \beta_0 < 0 $, the global existence of the solution does not imply its boundedness at $ \infty $. More precisely it may happen that the maximal existence time for $ q(t) $ is $ + \infty $ but $ \limsup_{t \to + \infty} |q(t)| = + \infty $.		
	\end{rem}

	\bigskip
	
\noindent
{\bf Acknowledgements.} The authors acknowledge the support of MIUR through the FIR grant 2013 ``Condensed Matter in Mathematical Physics (Cond-Math)'' (code RBFR13WAET). We also thank \textsc{A. Fiorenza} (Universit\`{a} ``Federico II'' di Napoli) and \textsc{A. Teta} (Universit\`{a} degli Studi di Roma ``La Sapienza'') for fruitful discussions about the topic of the paper. {Finally, we acknowledge the anonymous referee of the paper whose comments and remarks lead to a major improvement of the work}.


\section{Proofs}
\label{sec:proofs}

This Sect. is devoted to the proofs of our main results. We divide this section in five steps:
\ben[(i)]
 \item we point out in Sect. \ref{sec:preliminaries} some relevant properties of Sobolev spaces of fractional index and of the integral operator $ I $;
 \item we present in Sect. \ref{sec:charge} a justification for the ansatz \eqref{eq:ansatz_pre} and the charge equation \eqref{eq:charge_eq_pre};
 \item we prove existence, uniqueness and regularity of the solution of \eqref{eq:charge_eq} and show how this allows to prove Theorem \ref{teo:local} (Sect. \ref{sec:local});
 \item we prove in Sect. \ref{sec:conservation} mass and energy conservation (Theorem \ref{teo:conservation});
 \item we use the conservation laws of Sect. \ref{sec:conservation} to prove global existence and blow-up alternative (Theorem \ref{teo:global} and Proposition \ref{pro:blow-up}).
\een

 We stress that the proof strategy differs very much from the one followed in 1 or 3D. In those cases the core of the argument heavily relies on the regularizing properties of the Abel operator, which is involved in the integral version of the charge equation. Such an operator guarantees the minimal amount of regularity on $ q(t) $ needed to ensure that the ansatz $ \psi_t $ solves the weak problem \eqref{eq:cauchyweak}, at least if the initial datum is regular enough. Unfortunately, the 2D analogue of the Abel operator is the integral operator $ I $, which does not provide any improvement of regularity (see Lemma \ref{contr_lemma}). Therefore the strategy itself of the proof needs to be modified: the required regularity of $ q(t) $ is indeed obtained by applying a suitable contraction argument to the charge equation. There are however  some drawbacks in this approach, taking the form of additional conditions on the initial state, i.e., $ \psi_0 \in \dom $, and on the nonlinearity exponent, i.e., $ \sigma \geq 1/2 $.
 

\subsection{Preliminary results}
\label{sec:preliminaries}

We start by recalling briefly some facts on Sobolev spaces with fractional index. Let $-\infty \leq a < b \leq +\infty$ and $\nu \in (0,1)$, we denote by $H^\nu(a,b)$ the Sobolev space defined by
\[
 H^\nu(a,b) = \lf\{f\in L^2(a,b) \: \big| \: \lf[f\ri]_{\dot{H}^\nu(a,b)}^2<\infty \ri\},
\]
where
\[
 \lf[f\ri]_{\dot{H}^\nu(a,b)}^2 := \int_{[a,b]^2}\dt\dtau\:\frac{|f(t)-f(\tau)|^2}{|t-\tau|^{1+2\nu}}.
\]
The space $ H^\nu(a,b) $ is a Hilbert space with the natural norm
\beq
	\label{eq:hnu norm}
 \lf\|f\ri\|_{H^\nu(a,b)}^2 = \lf\|f\ri\|_{L^2(a,b)}^2 + \lf[f\ri]_{\dot{H}^\nu(a,b)}^2.
\eeq
When $a = -\infty$ and $b = +\infty$, $H^\nu(\R)$ can be equivalently defined using the Fourier transform $ \widehat{f} $ of $ f $ and, for any $f\in L^2(\R^d)$, we will use the following convention
\begin{equation}
 \label{eq:fourier}
 \widehat{f}(\pv) : =\frac{1}{(2\pi)^{d/2}}\int_{\R^d} \diff \tv \:e^{-i \pv \cdot \tv}\,f(\tv).
\end{equation}
Consistently, the convolution of two functions $ f, g \in L^2(\R^d) $ is defined as
\[
  	\lf(f*g\ri)(\xv) : = \frac{1}{(2\pi)^{d/2}} \int_{\R^d} \diff \yv \: f(\xv - \yv) g(\yv).
\]

We start by discussing a technical point concerning the extension of functions in $ H^{1/2}(0,T) $: it is known that if $ f \in H^{\nu}(0,T) $, for $ \nu < 1/2 $ (also for $ \nu > 1/2 $ but with the additional assumption that $ f(0) = f(T) = 0 $), then $ \one_{[0,T]}(t) f(t) \in H^{\nu}(\R) $ (see, e.g., \cite[Lemma 2.1]{CFNT}). However, the case $ \nu = 1/2 $ is very special and not included in the above result because the Hardy inequality, which is a key ingredient of the proof, fails in $ H^{1/2} $ (see \cite{KP}). In the Proposition below we show that if $ f \in H^{1/2}(0,T) $ is continuous and satisfies an additional condition, then the extension to an $H^{1/2} $ function of the real line supported on a compact set is possible. We introduce an ad hoc space of continuous functions {(see also, e.g., \cite{CUF})}: for $ \beta > 0 $ we set
\bml{
  		{C_{\log,\beta}[0,T]: = \Big\{ f \in C[0,T] \: \big| \: \exists C >0 \text{ s.t. } \forall t\in[0,T], \: \exists\delta>0 \text{ s.t. }} \\
  		{\forall s \in (t - \delta, t + \delta) \cap [0,T], \: \left| f(t) - f(s) \right| \leq C \left|\log|t - s|\right|^{-\beta} \Big\}}.
}
Hence, functions in $ \clog $ satisfies a sort of {local} ``weak'' H\"{o}lder continuity condition, which is going to play a very important role in the proposition below.

	\begin{pro}[Extension of functions in $ \clog $]
		\label{pro:extension}
		\mbox{}	\\
		Let $ T > 0 $ and $ \beta > 1/2 $, then for any $ f(t) \in \clog[0,T] \cap H^{1/2}(0,T) $ with $ f(T) = 0 $, the function
		\[
			\fe(t) : = 
			\begin{cases}
				f(t),	&	\mbox{if } t \in [0,T],	\\
				f(-t),	&	\mbox{if } t \in [-T,0],	\\
				0,		&	\mbox{otherwise},
			\end{cases}
		\]
		belongs to $ H^{1/2}(\R) $.
	\end{pro}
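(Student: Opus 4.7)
The plan is to verify separately that $\fe \in L^2(\R)$ and that the Gagliardo seminorm $[\fe]_{\dot H^{1/2}(\R)}$ is finite, using the equivalent definition given in \eqref{eq:hnu norm}. The $L^2$ part is immediate, $\|\fe\|_{L^2(\R)}^2 = 2\|f\|_{L^2(0,T)}^2$, so all the real work goes into controlling
\[
	[\fe]_{\dot H^{1/2}(\R)}^2 = \int_\R\int_\R \frac{|\fe(t)-\fe(\tau)|^2}{|t-\tau|^2}\,dt\,d\tau.
\]
First I would split $\R = (-\infty,-T)\cup[-T,0]\cup[0,T]\cup(T,+\infty)$ and partition the double integral according to the interval pair in which $(t,\tau)$ lies. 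The contributions with both variables in $[0,T]$, or both in $[-T,0]$, reduce to $[f]_{\dot H^{1/2}(0,T)}^2$ (in the second case via $\fe(t)=f(-t)$ and the change of variable $t\mapsto-t$), which is finite by assumption. The \emph{mixed} piece with $t\in[0,T]$ and $\tau\in[-T,0]$ becomes, after the substitution $s=-\tau$,
\[
	\int_0^T\int_0^T \frac{|f(t)-f(s)|^2}{(t+s)^2}\,dt\,ds,
\]
which is bounded by $[f]_{\dot H^{1/2}(0,T)}^2$ since $t+s\geq|t-s|$ whenever $t,s\geq 0$.

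The only delicate contributions come from the \emph{tail} regions where one variable lies outside $[-T,T]$, in which case $\fe$ vanishes there. By symmetry I may focus on $t>T$ and $\tau\in[-T,T]$; performing the inner integral in $t$ gives $\int_T^\infty(t-\tau)^{-2}\,dt=(T-\tau)^{-1}$, so the tail reduces to
\[
	\int_{-T}^T \frac{|\fe(\tau)|^2}{T-\tau}\,d\tau.
\]
On $[-T,0]$ this piece is trivially finite since $T-\tau\geq T$. On $[0,T]$ the only potentially singular contribution sits near $\tau=T$, and precisely here the boundary condition $f(T)=0$ together with the $\clog[0,T]$ regularity gives, in a small one-sided neighborhood of $T$,
\[
	|f(\tau)|^2 = |f(\tau)-f(T)|^2 \leq C^2\,|\log(T-\tau)|^{-2\beta}.
\]
Changing variables $u=T-\tau$ and then $v=-\log u$ converts the relevant piece into $\int_{-\log\delta}^\infty v^{-2\beta}\,dv$, which converges precisely when $2\beta>1$, i.e.\ when $\beta>1/2$. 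The far-from-$T$ portion of the integral is harmless by continuity of $f$ and boundedness of $1/(T-\tau)$ away from $T$, and the symmetric tail at $t<-T$ is handled identically using $\fe(-T)=f(T)=0$ and the log-H\"older bound near $-T$.

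The main obstacle is exactly this final endpoint estimate: it is the only place where the combined hypotheses $\beta>1/2$ and $f(T)=0$ are used, and it reflects the well-known failure of extension-by-zero in $H^{1/2}$ (the Hardy inequality is lost precisely at $\nu=1/2$). The log-H\"older modulus encoded in $\clog$ is the minimal extra regularity that compensates for the borderline singularity $\int du/(u|\log u|)$, and $2\beta>1$ is the sharp threshold that makes the resulting integral $\int v^{-2\beta}\,dv$ convergent.
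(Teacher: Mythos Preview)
Your proof is correct and follows essentially the same route as the paper's own argument: both reduce the $H^{1/2}(\R)$ seminorm to the interior piece on $[-T,T]^2$ (controlled by $[f]_{\dot H^{1/2}(0,T)}$ via the even reflection) plus the tail term $\int_{-T}^T |\fe(\tau)|^2\big(\frac{1}{T-\tau}+\frac{1}{T+\tau}\big)\,d\tau$, and both handle the endpoint singularity using $f(T)=0$ together with the $\clog$ bound, which makes $|f(\tau)|^2/(T-\tau)$ integrable precisely when $\beta>1/2$. Your decomposition of the interior piece is slightly more explicit than the paper's (which simply asserts $[\fe]_{\dot H^{1/2}(-T,T)}^2 \leq 4[f]_{\dot H^{1/2}(0,T)}^2$), but the substance is identical.
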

	
	\begin{proof}
		The function $ \fe $ is obtained from $ f $ by reflecting it in an even way, so that $ \mathrm{supp}(\fe) = [-T,T] $. Of course $ \fe(t) = f(t) $ for $ t \in [0,T] $ and
		\bdm
			\lf\| \fe \ri\|_{L^{2}(-T,T)}^2 = 2 \lf\| f \ri\|_{L^{2}(0,T)}^2,	\qquad		\lf[ \fe \ri]_{\dot{H}^{1/2}(-T,T)}^2 \leq 4 \lf[ f \ri]_{\dot{H}^{1/2}(0,T)}^2,
		\edm
		and therefore, if $ f \in H^{1/2}(0,T) $, then $ \fe \in H^{1/2}(-T,T) $. Also $ \lf\| \fe \ri\|_{L^2(\R)} = \lf\| \fe \ri\|_{L^2(-T,T)} $. Now a simple computation yields
		\[
			\lf[ \fe \ri]_{\dot{H}^{1/2}(\R)}^2 = \lf[ \fe \ri]_{\dot{H}^{1/2}(-T,T)}^2 + 2 \int_{-T}^T \diff t \: \lf( \frac{1}{t+T} + \frac{1}{T-t} \ri) |\fe(t)|^2,
		\]
		and, if we show that the second term on the r.h.s. is finite, then we complete the proof. A direct inspection of those integrals reveals that the integrand is an integrable function with possibly some singularity at the boundary of the domain, where we have to verify that it still is integrable. This request can be easily seen to be that
		$$
			\frac{|\fe(t)|^2}{T-t} = \frac{|f(t)|^2}{T -t} 
		$$
		is integrable at $ t = T $. However, since by assumption $ f(T) = 0 $, the fact that $ f \in \clog[0,T] $ implies that, for $ t $ in a neighborhood of $ T  $,
		$$
			\lf| f(t) \ri| \leq \frac{C}{|\log(T-t)|^{\beta}},
		$$
		for some $ \beta > 1/2 $. Hence 
		$$
			 \frac{|\fe(t)|^2}{T-t} \leq  \frac{C}{(T - t)|\log(T-t)|^{2\beta}},
		$$
		which is integrable close to $ t = T $.
	\end{proof}

Another useful result we prove is about the Lipschitz continuity of the map $f\mapsto|f|^{2\sigma}f$ w.r.t. to the $ H^\nu $ and $ L^{\infty}  $ norm. Such a result will play an important role when inspecting the regularity of the solution to the charge equation.

\begin{lem}[Lipschitz continuity of $f\mapsto|f|^{2\sigma}f$]
 \label{lem-lip}
 \mbox{}	\\
 Let $\sigma\geq\frac{1}{2}$, $\nu \in [0,1]$ and $T,\,M > 0$. Assume also that $f$ and $g$ are functions satisfying
 \begin{equation}
  \label{eq-lip_ass}
  \lf\|f\ri\|_{L^\infty(0,T)}+ \lf\|f\ri\|_{H^\nu(0,T)} \leq M, \qquad	 \lf\|g\ri\|_{L^\infty(0,T)}+ \lf\|g\ri\|_{H^\nu(0,T)}\leq M.
 \end{equation}
 Then, there exists a constant $C>0$ independent of $f,\,g,\,M\mbox{ and }T$, such that
 \begin{equation}
  \label{eq:lip_inf}
  \lf\||f|^{2\sigma}f-|g|^{2\sigma}g \ri\|_{L^\infty(0,T)} \leq CM^{2\sigma} \lf\|f-g \ri\|_{L^\infty(0,T)}
 \end{equation}
 and
 \begin{equation}
  \label{eq:lip_sob}
  {\lf\||f|^{2\sigma}f-|g|^{2\sigma}g \ri\|_{H^\nu(0,T)} \leq C\max\lf\{1,\sqrt{T}\ri\} M^{2\sigma}\left(\lf\| f-g \ri\|_{L^\infty(0,T)} + \lf\| f-g \ri\|_{H^\nu(0,T)}\right).}
 \end{equation}
\end{lem}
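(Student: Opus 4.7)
The plan is to treat \eqref{eq:lip_inf} and the two pieces of the $H^\nu$ norm \eqref{eq:hnu norm} (the $L^2$ part and the Gagliardo seminorm) one after the other, building all three estimates on two pointwise bounds for $F(z):=|z|^{2\sigma}z$, viewed as a map from $\C\cong\R^2$ into itself. A direct differentiation gives $|DF(z)|\leq C|z|^{2\sigma}$, while the Hessian behaves as $|D^2F(z)|\sim|z|^{2\sigma-1}$, so that $DF$ is Lipschitz on every ball with $|DF(a)-DF(b)|\leq C\big(|a|^{2\sigma-1}+|b|^{2\sigma-1}\big)|a-b|$ precisely when $2\sigma-1\geq 0$. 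This second bound is where the assumption $\sigma\geq\tfrac{1}{2}$ enters the argument, and I expect this to be the only step genuinely forcing that condition.

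For \eqref{eq:lip_inf}, integrating $DF$ along the segment joining $g(t)$ and $f(t)$ yields the pointwise inequality
\[
  |F(a)-F(b)| \leq C\big(|a|^{2\sigma}+|b|^{2\sigma}\big)|a-b|,\qquad a,b\in\C,
\]
and substituting $a=f(t)$, $b=g(t)$ together with \eqref{eq-lip_ass} gives the $L^\infty$ bound. Taking the $L^2(0,T)$-norm and using $\|\cdot\|_{L^2(0,T)}\leq\sqrt{T}\,\|\cdot\|_{L^\infty(0,T)}$ in turn delivers the $L^2$ piece of the $H^\nu$ norm with a prefactor $C\sqrt{T}\,M^{2\sigma}\,\|f-g\|_{L^\infty(0,T)}$.

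The computational heart of the proof, and the main obstacle, is the estimate of the Gagliardo seminorm $[F(f)-F(g)]_{\dot H^\nu(0,T)}$. Setting $h:=f-g$ and applying the fundamental theorem of calculus, I would split
\begin{multline*}
 [F(f)(t)-F(g)(t)]-[F(f)(\tau)-F(g)(\tau)] = \int_0^1 DF(g(t)+sh(t))\,[h(t)-h(\tau)]\ds \\
 + \int_0^1 \big[DF(g(t)+sh(t))-DF(g(\tau)+sh(\tau))\big]\,h(\tau)\ds.
\end{multline*}
The first summand is pointwise dominated by $CM^{2\sigma}|h(t)-h(\tau)|$ thanks to the bound on $|DF|$, and therefore contributes $CM^{4\sigma}[h]_{\dot H^\nu(0,T)}^2$ to the squared seminorm after dividing by $|t-\tau|^{1+2\nu}$ and integrating. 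For the second summand, the Lipschitz property of $DF$ on the ball of radius $2M$ gives the majorization by $CM^{2\sigma-1}\|h\|_{L^\infty(0,T)}\big(|g(t)-g(\tau)|+|h(t)-h(\tau)|\big)$; invoking $[g]_{\dot H^\nu(0,T)}\leq\|g\|_{H^\nu(0,T)}\leq M$ from \eqref{eq-lip_ass} together with $\|h\|_{L^\infty(0,T)}\leq 2M$, its contribution to the squared seminorm is bounded by $CM^{4\sigma}\big(\|h\|_{L^\infty(0,T)}^2+\|h\|_{H^\nu(0,T)}^2\big)$. Taking square roots, summing the two contributions and combining with the $L^2$ piece established above produces \eqref{eq:lip_sob} with the prefactor $C\max\{1,\sqrt{T}\}$.
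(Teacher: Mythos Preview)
Your proof is correct and follows essentially the same approach as the paper: both rely on the fundamental theorem of calculus to write $F(f)-F(g)$ via an integral of $DF$, and both use the Lipschitz continuity of $DF$ on bounded sets (requiring $\sigma\geq\tfrac12$) to control the Gagliardo seminorm. The only difference is organizational: the paper first factors $\varphi(f)-\varphi(g)=(f-g)\phi_1+(f-g)^*\phi_2$ and invokes a product inequality for the $H^\nu$ norm, reducing matters to bounding $\|\phi_j\|_{H^\nu}$, whereas you work directly with the double difference inside the Gagliardo integral; the underlying estimates are identical.
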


\begin{proof}
	Let us first focus on \eqref{eq:lip_inf}: denote by $\varphi:\C\to\C$ the function $\varphi(z)=|z|^{2\sigma}z$. For $\sigma\geq\frac{1}{2}$, $\varphi\in C^2(\R^2;\C)$, as a function of the real and imaginary parts of $ z $. Moreover for $ z_1, z_2\in\C$,
 \begin{equation}
  \label{eq-diff_repr}
  \varphi(z_1)-\varphi(z_2)=(z_1-z_2)\,\psi_1(z_1,z_2)+(z_2-z_1)^*\,\psi_2(z_1,z_2),
 \end{equation}
 with
 \[
  \psi_1(z_1,z_2)=\int_0^1\ds\:\partial_z\varphi(z_1+s(z_2-z_1)),\qquad \psi_2(z_1,z_2)=\int_0^1\ds\:\partial_{z^*}\varphi(z_1+s(z_2-z_1)).
 \]
 where $\partial_z\varphi = (\sigma+1)|z|^{2\sigma} $ and $\partial_{z^*}\varphi=\sigma|z|^{2(\sigma-1)}z^2$. Consequently, 
 \begin{equation}
  \label{eq-psi_inf}
  |\psi_j(z_1,z_2)|\leq C\int_0^1\ds\:|z_1+s(z_2-z_1)|^{2\sigma},\quad j=1,2.
 \end{equation}
 Thus,
 \begin{equation}
  \label{eq:help}
  |\varphi(z_1)-\varphi(z_2)|\leq C\max\{|z_1|,|z_2|\}^{2\sigma}\lf|z_1-z_2\ri|
 \end{equation}
 and, then, setting $z_1=f(t)$ and $z_2=g(t)$, \eqref{eq-lip_ass} immediately entails \eqref{eq:lip_inf}.
 
 Let us now consider \eqref{eq:lip_sob}. Setting again $z_1=f(t)$ and $z_2=g(t)$ in \eqref{eq-diff_repr}, we have that
 \[
  \varphi(f(t))-\varphi(g(t))=(f(t)-g(t))\,\psi_1(f(t),g(t))+(f(t)-g(t))^*\,\psi_2(f(t),g(t)).
 \]
 For any pair of functions $ f_1, f_2 \in  H^\nu(0,T) \cap L^{\infty}(0,T) $
 \bdm
 	{\lf\| f_1 f_2 \ri\|_{H^{\nu}(0,T)} \leq C\left(\|f_1\|_{L^\infty(0,T)}\|f_2\|_{H^\nu(0,T)}+\|f_2\|_{L^\infty(0,T)}\|f_1\|_{H^\nu(0,T)}\right),}
\edm
as it can be easily seen by exploiting \eqref{eq:hnu norm}. Hence, since by \eqref{eq-lip_ass} and \eqref{eq-psi_inf},
 \[
  \lf\|\psi_j(g(t),f(t))\ri\|_{L^\infty(0,T)}\leq C M^{2\sigma},\quad j=1,2,
 \]
then, denoting $ \phi_j(t) : = \psi_j(f(t),g(t)) $ for short,
\bmln{
	\lf\|\varphi(f(t))-\varphi(g(t)) \ri\|_{H^\nu(0,T)} \leq \lf\| \phi_1  \cdot \lf( f - g \ri) \ri\|_{H^\nu(0,T)} + \lf\| \phi_2  \cdot \lf( f - g \ri)  \ri\|_{H^\nu(0,T)}	\\[.2cm]
	{\leq C \max\left\{M^{2\sigma},\left[ \lf\| \phi_1  \ri\|_{H^\nu(0,T)} + \lf\| \phi_2  \ri\|_{H^\nu(0,T)} \right]\right\} \left(\lf\| f-g \ri\|_{L^\infty(0,T)}+\lf\| f-g \ri\|_{H^\nu(0,T)}\right).}
}
Therefore it remains to verify that $ \phi_j \in H^{\nu}(0,T) $ and estimate its norm: the $ L^2 $ norm of $ \phi_j $ can be bounded as
 \begin{equation}
  \label{eq-psi_ldue}
  \lf\|\phi_j\ri\|_{L^2(0,T)}\leq C\sqrt{T}\,M^{2\sigma},\quad j=1,2.
 \end{equation}
 Hence, it is left to prove that the semi-norms are also bounded. To this aim one notes that, for fixed $z_1,z_2,w_1,w_2\in\C$, one can write
\beq
	\label{eq-diff_psi}
  	\psi_j(z_2,w_2)-\psi_j(z_1,w_1) = \psi_j(z_2,w_2) - \psi_j(z_2,w_1) + \psi_j(z_2,w_1) - \psi_j(z_1,w_1),
\eeq
and, arguing as before,
\bml{
	\label{eq-diff_psi1}
	 \psi_j(z_2,w_2) - \psi_j(z_2,w_1) = (w_2-w_1) \int_0^1 \diff s \:\partial_z \chi_j(w_1+ s(w_2-w_1)) 	\\
	 + (w_2 - w_1)^* \int_0^1 \diff s \:\partial_{z^*} \chi_j(w_1+ s(w_2-w_1)),
}
where we have set $ \chi_j(z) : = \psi_j(z_2,z) $. Similarly
\bml{
	\label{eq-diff_psi2}
	 \psi_j(z_2,w_1) - \psi_j(z_1,w_1) = (z_2-z_1) \int_0^1 \diff s \:\partial_z \xi_j(z_1+ s(z_2-z_1)) 	\\
	 + (z_2 - z_1)^* \int_0^1 \diff s \:\partial_{z^*} \xi_j(z_1+ s(z_2-z_1)),
}
with $ \xi_j(z) : = \psi_j(z,w_1) $. Now, since 
\bdm
	\chi_{1,2}(z) = \int_0^1 \diff s \: \partial_{z/z^*}\varphi(z_2+s(z-z_2)),	\qquad		\xi_{1,2}(z) = \int_0^1 \diff s \: \partial_{z/z^*}\varphi(z+s(z-w_1))	,
\edm 
and
\beqn
	\partial_z^2 \varphi (z) &=& \sigma (\sigma+1) \lf|z\ri|^{2(\sigma - 1)} z^*,	\nonumber \\
	\partial_{z} \partial_{z^*} \varphi(z) &=& \sigma (\sigma +1) \lf|z\ri|^{2(\sigma - 1)} z,	\nonumber \\			\partial_{z^*}^2 \varphi (z) &=& \sigma (\sigma-1) \lf|z\ri|^{2(\sigma - 2)} z^3,	\nonumber 
\eeqn
plugging \eqref{eq-diff_psi1} and \eqref{eq-diff_psi2} into \eqref{eq-diff_psi}, one sees that
\beq
	\label{eq-diff_psi3}
  \lf|\psi_j(z_2,w_2)-\psi_j(z_1,w_1) \ri| \leq C \max\{|z_1|,|z_2|,|w_1|,|w_2|\}^{2\sigma-1}\left(|z_2-z_1|+|w_2-w_1|\right),
\eeq
which yields
 \[
  \lf[\phi_j\ri]_{\dot{H}^\nu(0,T)} = \lf[\psi_j(f(t),g(t))\ri]_{\dot{H}^\nu(0,T)} \leq CM^{2\sigma-1}\left(\lf[f\ri]_{\dot{H}^\nu(0,T)}+ \lf[g\ri]_{\dot{H}^\nu(0,T)}\right)\leq CM^{2\sigma}.
 \]
 Thus, combining with \eqref{eq-psi_ldue},
 \[
  \lf\|\psi_j(f(t),g(t)) \ri\|_{H^\nu(0,T)}\leq C\max\lf\{1,\sqrt{T}\ri\}M^{2\sigma}, 
 \]
 so that \eqref{eq:lip_sob} is proved.
\end{proof}

\begin{rem}(Condition $ \sigma \geq \frac{1}{2} $)
 	\label{eq:ass_on_sigma}
	\mbox{}	\\
 We stress that assuming $\sigma\geq \frac{1}{2}$ is crucial in the proof \eqref{eq:lip_sob}, in particular when assuming that $ |z|^{2\sigma}z \in C^2(\R^2;\C)$ or, equivalently, in assuring that the exponent $ 2 \sigma - 1 $ in \eqref{eq-diff_psi3} is positive and therefore the functions $ f(t) $ and $ g(t) $ can be replaced in the upper bound with their suprema. 

On the other hand, \eqref{eq:lip_inf} only requires $|z|^{2\sigma}z\in C^1(\R^2;\C)$ and hence is valid for $\sigma\geq0$. In fact the estimate \eqref{eq-psi_inf} holds true for $ \sigma \geq 0 $ as well, but the stricter request $ \sigma \geq \frac{1}{2} $ enters into the derivation of the bounds on the $H^\nu$-norm of  $ \phi_j $, as explained above.
\end{rem}

In the second part of the section, we investigate some properties of the integral operator $\I$  associated with the Volterra function of order $-1$, defined by \eqref{eq:i}, i.e.,
\begin{equation}
 \label{eq:integral_i}
 (If)(t)  := \int_0^t \dtau \: \I (t - \tau) f(\tau).
\end{equation}
First, we recall some basic properties of $ \I(t) $ (for further details we refer to \cite[Sec. 18.3]{E}, where $ \I(t) $ is denoted by $ \nu(t,-1) $). The asymptotic expansions of $ \I(t) $ as $ t \to 0 $ and $ t \to \infty $ are
\begin{equation}
 \label{eq:Iasympt}
 \I(t) \underset{t \to 0}{=} \frac{1}{t \log^2 \left(\frac{1}{t}\right)}\left[1 + \mathcal{O}(\left|\log t \right|^{-1}) \right],
\end{equation}
\[
 \I(t) \underset{t \to \infty}{=} e^{t}+\mathcal{O}(t^{-1} ).
\]
Since $\I$ is continuous for $t>0$ the previous expansions entail that
\bdm
	\I(t) \in L^{1}_{\textrm{loc}}(\R^+) \cap L^{\infty}_{\mathrm{loc}}(\R^+\setminus\{0 \}) 
\edm
Furthermore, it is also worth to point out some features of the function $\NN$, defined as
\begin{equation}
 \label{eq:N}
 \NN(t) :=\int_0^t\dtau\: \I(\tau).
\end{equation}
Clearly the fact that $ \I(t) \in L^{1}_{\textrm{loc}}(\R^+)$ implies that the function $\NN$ is absolutely continuous on any bounded interval $[0,T]$, $T>0$, and $\NN(0)=0$. In addition, as $\I$ is strictly positive, $\NN$ is strictly increasing on $[0,\infty)$ and the asymptotic expansion as $ t \to 0 $ is
\begin{equation}
 \label{eq:Nasympt}
 \NN(t) \underset{t \to 0}{=} \int_{-\infty}^{\log t} \diff x \: \frac{1}{x^2} \lf(1 + \OO(x^{-1})  \ri) = \frac{1}{\log\left(\tfrac{1}{t}\right)}+\mathcal{O}(|\log t|^{-2}).
\end{equation}
Another important property of $ \NN(t) $ is stated in the next

\begin{lem}
	\label{lem:N}
 	\mbox{}	\\
 	Let $\NN(t) $ be defined in \eqref{eq:N}, then, for any $ T>0 $, $\NN(t) \in H^\nu(0,T)$, $ \forall \nu\in \lf[0,\frac{1}{2} \ri]$, and
 	\beq
 		\label{eq:Nto0}
 		\lim_{T \to 0} \lf\| \NN \ri\|_{H^{\nu}(0,T)} = 0.
	\eeq
\end{lem}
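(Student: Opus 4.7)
The plan is to reduce the whole argument to the borderline case $\nu=\frac{1}{2}$, by exploiting the elementary estimate $|t-\tau|^{1-2\nu}\leq T^{1-2\nu}$ valid on $(0,T)^2$, which gives $[\NN]^2_{\dot{H}^{\nu}(0,T)}\leq T^{1-2\nu}\,[\NN]^2_{\dot{H}^{1/2}(0,T)}$ for any $\nu\in[0,\frac{1}{2}]$. The $L^2$-part is handled at once: since $\NN$ is continuous, monotone increasing and $\NN(0)=0$, one has $\|\NN\|^2_{L^2(0,T)}\leq T\,\NN(T)^2$, which tends to $0$ as $T\to 0$ thanks to \eqref{eq:Nasympt}. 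Combined with the scaling, this also means that the vanishing of the full $H^\nu$-norm for $\nu<\frac{1}{2}$ is an automatic consequence of the mere boundedness of $[\NN]_{\dot{H}^{1/2}(0,T)}$; the real work is therefore to estimate the $H^{1/2}$-seminorm and to prove that it vanishes as $T\to 0$.

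For the $H^{1/2}$-seminorm I would use symmetry and the substitution $u=t-\tau$ to write
\[
	[\NN]^2_{\dot{H}^{1/2}(0,T)}=2\int_0^T\dt\int_0^t\mathrm{d}u\:\frac{(\NN(t)-\NN(t-u))^2}{u^2},
\]
and split the inner integral at $u=t/2$. From the asymptotic expansion \eqref{eq:Iasympt}, the Volterra function $\I$ is monotonically decreasing on some right-neighborhood of the origin, say on $(0,T_*)$. For $t\in(0,T_*)$ and $u\leq t/2$ one then has $\I(s)\leq\I(t/2)$ for every $s\in[t-u,t]$, and hence $|\NN(t)-\NN(t-u)|=\int_{t-u}^t\I(s)\ds\leq u\,\I(t/2)$; this produces an inner contribution of order $t\,\I(t/2)^2\sim(t\log^4(1/t))^{-1}$. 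For $u\in(t/2,t)$ I would instead use the cruder bound $|\NN(t)-\NN(t-u)|\leq\NN(t)\sim|\log t|^{-1}$, which together with $\int_{t/2}^t\mathrm{d}u/u^2\leq 2/t$ gives a contribution of order $(t\log^2(1/t))^{-1}$. Both upper bounds are integrable near the origin (the substitution $x=\log(1/t)$ reduces them to $\int x^{-2k}\mathrm{d}x$ with $k\geq 1$), and the complementary range $t>T_*$ is handled trivially, since there $\I$ is bounded and $\NN$ Lipschitz.

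The limit $\lim_{T\to 0}\|\NN\|_{H^\nu(0,T)}=0$ for $\nu<\frac{1}{2}$ is then immediate from the scaling argument above, while for $\nu=\frac{1}{2}$ it follows from dominated convergence applied to the outer $t$-integrals just written. I expect the main obstacle to be precisely the treatment of $\nu=\frac{1}{2}$: the kernel $\I$ fails to be square-integrable near $0$, so $\NN\notin H^1(0,T)$, and the classical Hardy-type inequalities that would usually make this kind of estimate routine break down in $H^{1/2}$ (see \cite{KP}, also mentioned right before Proposition \ref{pro:extension}). The whole argument hinges on the merely logarithmic improvement present in the asymptotics of both $\I$ and $\NN$ at the origin, which is barely enough to make the relevant integrals converge.
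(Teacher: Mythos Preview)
Your proof is correct and follows essentially the same route as the paper: both reduce to $\nu=1/2$, split the seminorm integral at $u=t/2$ (equivalently $s=t/2$), and control the two pieces via the asymptotics \eqref{eq:Iasympt}--\eqref{eq:Nasympt}. The only noteworthy difference is in the near-diagonal piece ($u\leq t/2$): you use the local monotonicity of $\I$ to get the pointwise bound $\NN(t)-\NN(t-u)\leq u\,\I(t/2)$, whereas the paper applies Cauchy--Schwarz together with the convexity of $\I$ to reach the same $(t\log^4(1/t))^{-1}$ integrand---your version is marginally more direct.
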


\begin{proof}
The absolute continuity of $ \NN(t) $ in the interval $ [0,T] $ implies that $\NN\in L^2(0,T)$, for any finite $ T $. Consequently it is left to prove that the seminorm $ \lf[\NN \ri]_{\dot{H}^{1/2}(0,T)} $ in bounded. An easy computation shows that
 \[
  	\lf[ \NN \ri]_{\dot{H}^{1/2}(0,T)}^2  = 2\int_0^T\dt\int_0^{\frac{t}{2}}\ds\:\left|\frac{\NN(t)-\NN(s)}{t-s}\right|^2+2\int_0^T\dt\int_{\frac{t}{2}}^t\ds\:\left|\frac{\NN(t)-\NN(s)}{t-s}\right|^2.                              
 \]
 Looking at the first integral and recalling that $\NN$ is increasing, we find
 \[
  \left|\frac{\NN(t)-\NN(s)}{t-s}\right|^2\leq4\,\frac{\NN^2(t)}{t^2},	\quad	\forall s\in \lf(0,\tfrac{t}{2}\ri).
 \]
 Hence,
 \[
  \int_0^T\dt\int_0^{\frac{t}{2}}\ds\:\left|\frac{\NN(t)-\NN(s)}{t-s}\right|^2\leq 2 \int_0^T\dt\:\frac{\NN^2(t)}{t} < \infty,
 \]
 since, by \eqref{eq:Nasympt}, $\frac{\NN^2(t)}{t}\sim\I(t)$, when $t\to0$, and thus is integrable over $[0,T]$, for any $ T $ finite.
 
Applying Cauchy inequality to the second integral, we get
\bmln{
  \int_0^T\dt\int_{\frac{t}{2}}^t\ds\:\left|\frac{\NN(t)-\NN(s)}{t-s}\right|^2 = \int_0^T\dt\int_{\frac{t}{2}}^t\ds\:\left|\frac{1}{t-s}\int_s^t\dtau\:\I(\tau)\right|^2	\\
                                                                               \leq \int_0^T\dt\int_{\frac{t}{2}}^t\ds\:\frac{1}{t-s}\int_s^t\dtau\:\I^2(\tau).
}
 Furthermore since $\I$ is positive and convex (see \cite{CF}), it is  $\I^2(\tau)\leq \I^2(t)+\I^2(s)$ for every $\tau\in[s,t]$, so that
 \[
  \int_0^T\dt\int_{\frac{t}{2}}^t\ds\:\frac{1}{t-s}\int_s^t\dtau\:\I^2(\tau)\leq \int_0^T\dt\:\int_{\frac{t}{2}}^t\ds\:(\I^2(t)+\I^2(s)).
 \]
 Now, noting that $\log^{-4}(1/s)\leq\log^{-4}(1/t)$ for all $s\in(t/2,t)$ and using again \eqref{eq:Iasympt},
 \[
  \int_0^T\dt\int_{\frac{t}{2}}^t\ds\:\I^2(s) \sim \int_0^T\dt\int_{\frac{t}{2}}^t\ds\:\frac{1}{s^2\log^4(\frac{1}{s})} \leq C\int_0^T\dt\:\frac{1}{t\log^4(\frac{1}{t})}<\infty,
 \]
 whereas, on the other hand, 
 \[
  \int_0^T\dt\int_{\frac{t}{2}}^t\ds\:\I^2(t)\leq C \int_0^T\dt\:t\,\I^2(t)\sim C\int_0^T\dt\:\frac{1}{t\log^4(\frac{1}{t})}<\infty.
 \]
 Thus
 \[
  \int_0^T \diff t \int_{\frac{t}{2}}^t \diff s \: \left|\frac{\NN(t)-\NN(s)}{t-s}\right|^2 <\infty.
 \]
In conclusion we proved that $ \lf[\NN\ri]_{\dot{H}^{1/2}(0,T)} < \infty$. The same inequalities also imply \eqref{eq:Nto0}.
\end{proof}

In \cite{CF} the operator $ I $ is investigated in details and several useful properties are established. Here, we only show the most relevant ones for our application (we also mention some proofs for the sake of completeness).

	\begin{lem}
 		\label{contr_lemma_inf}
 		\mbox{}	\\
		 Let $T>0$ and $f\in L^\infty(0,T)$, then  $ If\in C[0,T]$ and 
		 \begin{equation}
			 \label{eq:est_I_inf}
			 \lf\|If\ri\|_{L^\infty(0,T)} \leq C_T \lf\|f\ri\|_{L^\infty(0,T)},
		\end{equation}
		with $C_T>0$ independent of $f$ and such that $C_T \xrightarrow[T \to 0]{} 0$.
	\end{lem}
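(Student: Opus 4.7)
The plan is to establish the two assertions separately, using essentially only the pointwise positivity of $\I$, the asymptotic \eqref{eq:Nasympt}, and the continuity of translations in $L^1$.

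\textbf{Uniform bound.} For the $L^\infty$ estimate, I would simply dominate pointwise: since $\I\geq0$ on $(0,\infty)$, for any $t\in[0,T]$,
\[
  \lf|(If)(t)\ri| \leq \lf\|f\ri\|_{L^\infty(0,T)} \int_0^t \I(t-\tau)\,\dtau = \lf\|f\ri\|_{L^\infty(0,T)}\,\NN(t) \leq \lf\|f\ri\|_{L^\infty(0,T)}\,\NN(T).
\]
Setting $C_T:=\NN(T)$, the asymptotic expansion \eqref{eq:Nasympt} gives $\NN(T)\to 0$ as $T\to 0$, which yields \eqref{eq:est_I_inf} together with the required vanishing of the constant.

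\textbf{Continuity.} For $t_0=0$, the bound above gives $|(If)(t)|\leq\|f\|_{L^\infty(0,T)}\,\NN(t)\to 0$, so $If$ is continuous at the origin (with value $0$). For $t_0\in(0,T]$, take $t>t_0$ (the other case is symmetric) and split
\[
  (If)(t)-(If)(t_0) = \int_{t_0}^t \I(t-\tau) f(\tau)\,\dtau + \int_0^{t_0}\lf[\I(t-\tau)-\I(t_0-\tau)\ri] f(\tau)\,\dtau.
\]
The first term is bounded by $\|f\|_{L^\infty(0,T)}\,\NN(t-t_0)$ and therefore vanishes as $t\to t_0^+$ by \eqref{eq:Nasympt}. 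For the second term, the change of variables $s=t_0-\tau$ and the bound $\|f\|_\infty$ reduce the claim to
\[
  \int_0^{t_0}\lf|\I(s+(t-t_0))-\I(s)\ri|\,\ds \xrightarrow[t\to t_0^+]{} 0,
\]
which is exactly the continuity of translations in $L^1(0,T+1)$ applied to $\I$, extended by $0$ outside $(0,\infty)$. Since $\I\in L^1_{\mathrm{loc}}(\R^+)$ by \eqref{eq:Iasympt}, this well-known property applies.

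\textbf{Main difficulty.} There is no serious obstacle: the only point that needs a little care is the singularity of $\I$ at the origin, which rules out arguments based on uniform continuity of $\I$. The remedy, as outlined, is to isolate a small neighborhood of $\tau=t_0$ (controlled via $\NN$) and invoke the $L^1$-continuity of translations on the remaining interval, where $\I$ is merely integrable. This is why the lemma is stated in $C[0,T]$ rather than in a more regular space: the operator $I$ gives continuity but no modulus improvement, a feature that will be crucial in the subsequent contraction argument for the charge equation.
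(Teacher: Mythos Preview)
Your proof is correct and follows essentially the same route as the paper: the same pointwise bound via $\NN(T)$ for \eqref{eq:est_I_inf}, and the same splitting of $(If)(t)-(If)(t_0)$ into a tail controlled by $\NN(t-t_0)$ plus a translated-kernel term. The only cosmetic difference is that the paper dispatches the second term by an explicit dominated-convergence argument (dominating by $\I(\tau)+\sup_{\delta\in[0,T]}\I(\tau+\delta)$), while you invoke the continuity of translations in $L^1$; these are equivalent, and your packaging is arguably cleaner.
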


	\begin{proof}
		Recalling \eqref{eq:integral_i} and \eqref{eq:Nasympt}, \eqref{eq:est_I_inf} is immediate. Then, it is left to prove that $If$ is continuous. To this aim, fix $t_0\in[0,T)$ and $t\in\;(t_0,T]$. Easy computations yield
 		\bdm
  I f(t)-I f(t_0) =  \int_0^T \diff \tau \: \I (t-\tau)\one_{[t_0,t]}(\tau)f(\tau)	\\
                            - \int_0^T\diff \tau \:\lf(\I(t_0-\tau)-\I(t-\tau) \ri)\one_{[0,t_0]}(\tau)f(\tau)
 		\edm
 		and therefore
 		\bml{
 			\label{eq:dominated}
 			\left |I f(t)-I f(t_0)\right|\leq \int_0^T\dtau\: \I(t-\tau)\one_{[t_0,t]}(\tau)|f(\tau)| +\int_0^T\dtau\: \lf|\I(t_0-\tau)-\I(t-\tau) \ri|\one_{[0,t_0]}(\tau)|f(\tau)| \\
 			\leq \NN(t-t_{0})\|f\|_{L^\infty(0,T)}+\|f\|_{L^\infty(0,T)} \int_{0}^{t_0}\dtau\: \lf| \I (\tau)-\I(t - t_0 + \tau) \ri|, 
 		}
 		
 		Therefore, the first term converges  to zero by the continuity of $ \NN$, while the second one tends to zero by dominated convergence. Indeed, it suffices to bound from above the integrand in the second term by an integrable function independent of $ t $. Since $ t $ varies in a bounded set and $ \I(t) $ is bounded for $ t > 0 $ finite, we have
 		\bdm
 			\lf| \I (\tau)-\I(t - t_0 + \tau) \ri| \leq \I(\tau) + \sup_{\delta \in [0,T]} \I(\tau + \delta),
		\edm
		and the r.h.s. is integrable for $ \tau \in [0,t_0] $. 		  
 		   
 		Since the same holds if $ t<t_0$, one has that $I f(t)\to I f(t_0)$ as $ t \to t_0 $, which concludes the proof.
\end{proof}

	\begin{lem}
		\label{contr_lemma}
		\mbox{}	\\
 		Let  $f\in H^{1/2}(0,T)\cap L^\infty(0,T)$, $T>0$, then $ If \in H^{1/2}(0,T)$ and, in particular, there exists $C_T>0$ independent of $f$ and satisfying $C_T \xrightarrow[T \to 0]{} 0$, such that
 		\begin{equation}
			\label{eq:est_I_lim}
			\lf\| If \ri\|_{H^{1/2}(0,T)} \leq C_T \left( \lf\| f \ri\|_{L^\infty(0,T)}+ \lf\| f \ri\|_{H^{1/2}(0,T)}\right).
 		\end{equation}
	\end{lem}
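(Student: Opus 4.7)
The $L^2$-part of the $H^{1/2}$-norm is immediate from Lemma \ref{contr_lemma_inf}: since $If\in C[0,T]$ we have
\[
\|If\|_{L^2(0,T)}\leq\sqrt{T}\,\|If\|_{L^\infty(0,T)}\leq C_T\sqrt{T}\,\|f\|_{L^\infty(0,T)}.
\]
So the work lies in estimating the Gagliardo seminorm $[If]_{\dot H^{1/2}(0,T)}^2=\int\int|If(t)-If(s)|^2/(t-s)^2\,dt\,ds$. Mimicking the proof of Lemma \ref{lem:N}, I would split the double integral symmetrically as $2(I_{\mathrm{far}}+I_{\mathrm{near}})$ with $I_{\mathrm{far}}$ corresponding to $s\in(0,t/2)$ and $I_{\mathrm{near}}$ to $s\in(t/2,t)$, and estimate each region separately.

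\medskip
\textbf{The far region.} For $s<t/2$ we have $t-s\geq t/2$, and using the bounds $|If(\tau)|\leq\NN(\tau)\|f\|_{L^\infty}$ together with the monotonicity of $\NN$, the integrand is controlled by $16\|f\|_{L^\infty}^2\NN(t)^2/t^2$. Integrating $ds$ over $(0,t/2)$ produces a factor of $t/2$, so $I_{\mathrm{far}}\leq C\,\|f\|_{L^\infty}^2\int_0^T\NN(t)^2/t\,dt$. By \eqref{eq:Nasympt}, $\NN(t)^2/t\sim 1/(t\log^2(1/t))$ near the origin, which is integrable; moreover the integral vanishes as $T\to 0$.

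\medskip
\textbf{The near region.} For $t/2<s<t$ I would use the algebraic decomposition
\[
If(t)-If(s)=f(s)\,[\NN(t)-\NN(s)]+\int_s^t\I(t-\tau)\,[f(\tau)-f(s)]\,d\tau+\int_0^s[\I(t-\tau)-\I(s-\tau)]\,[f(\tau)-f(s)]\,d\tau,
\]
obtained by writing $f(\tau)=f(s)+[f(\tau)-f(s)]$ in both $\int_s^t\I(t-\tau)f(\tau)\,d\tau$ and $\int_0^s[\I(t-\tau)-\I(s-\tau)]f(\tau)\,d\tau$ and using $\int_0^s[\I(t-\tau)-\I(s-\tau)]\,d\tau=\NN(t)-\NN(s)-\NN(t-s)$. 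Call the three summands $A_1,A_2,A_3$. The contribution of $A_1$ is at most $\|f\|_{L^\infty}^2\,[\NN]_{\dot H^{1/2}(0,T)}^2$, which is controlled (and is vanishing as $T\to 0$) by Lemma \ref{lem:N}. For $A_2$ I would apply Cauchy--Schwarz with weight $\I(t-\tau)$, writing $|A_2|^2\leq\NN(t-s)\int_s^t\I(t-\tau)|f(\tau)-f(s)|^2\,d\tau$, and then use the estimate $(\tau-s)^2\leq(t-s)^2$ (valid for $s<\tau<t$) together with Fubini to recognize the Gagliardo density of $f$; integrating the remaining kernel in $t$ yields a factor bounded by $\NN(T)^2$, producing the bound $\NN(T)^2\,[f]_{\dot H^{1/2}(0,T)}^2$.

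\medskip
\textbf{Main obstacle.} The most delicate term is $A_3$, whose kernel $\I(t-\tau)-\I(s-\tau)$ is singular at $\tau=s$. To handle it I would further split $f(\tau)-f(s)=[f(\tau)-f(t)]+[f(t)-f(s)]$: the part multiplied by $f(t)-f(s)$ gives, after integrating the $\I$-difference, a factor $|\NN(t)-\NN(s)-\NN(t-s)|\leq\NN(T)$ (using subadditivity of $\NN$, which follows from its concavity near the origin), and produces $\NN(T)^2\,[f]_{\dot H^{1/2}(0,T)}^2$. The residual piece involving $f(\tau)-f(t)$ requires Cauchy--Schwarz with weight $|\I(t-\tau)-\I(s-\tau)|$, exploiting the monotonicity of $\I$ on $(0,T)$ for small $T$ (which follows from \eqref{eq:Iasympt}) so that $|\I(t-\tau)-\I(s-\tau)|=\I(s-\tau)-\I(t-\tau)\leq\I(s-\tau)$, splitting $\tau$ according to whether $s-\tau\lessgtr t-s$, and rebuilding the Gagliardo density of $f$ by inserting $(\tau-t)^2/(\tau-t)^2$. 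This step is where the full strength of the asymptotic expansions of $\I$ and $\NN$ is used; collecting every contribution yields a bound of the form $C_T(\|f\|_{L^\infty}^2+\|f\|_{H^{1/2}(0,T)}^2)$ with $C_T\to 0$ as $T\to 0$, which combined with the $L^2$ estimate completes the proof of \eqref{eq:est_I_lim}.
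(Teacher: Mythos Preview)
Your $L^2$ estimate, the far region, and the terms $A_1$, $A_2$, and the $[f(t)-f(s)]$-part of $A_3$ are all fine. The genuine gap is in the ``residual piece'' of $A_3$, i.e.
\[
A_3'=\int_0^s[\I(t-\tau)-\I(s-\tau)]\,[f(\tau)-f(t)]\,d\tau,
\]
which you only sketch. Your proposed split $s-\tau\lessgtr t-s$ handles the case $s-\tau<t-s$ (then $t-\tau<2(t-s)$ and the argument closes as for $A_2$), but the case $s-\tau>t-s$ does not: after Cauchy--Schwarz and the bound $|\I(t-\tau)-\I(s-\tau)|\le\I(s-\tau)$ you are left with a factor $\NN(t-s)/(t-s)^2$ to integrate in $t$, and since $\NN(u)/u^2\sim 1/(u^2|\log u|)$ near $0$ this integral diverges. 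No simple use of the asymptotics of $\I,\NN$ rescues this, because the singularity of the kernel at $\tau=s$ and the singularity of the Gagliardo weight at $t=s$ are decoupled in your variables: the increment $f(\tau)-f(t)$ (or $f(\tau)-f(s)$) has length $|t-\tau|$ (or $|s-\tau|$) which is not comparable to $|t-s|$.

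The paper sidesteps this entirely by first rewriting $If(t)=\int_0^t\I(\tau)f(t-\tau)\,d\tau$ via the change of variable $\tau\mapsto t-\tau$. Then, with no near/far split,
\[
(If)(t)-(If)(s)=\int_s^t\I(\tau)f(t-\tau)\,d\tau+\int_0^s\I(\tau)\,[f(t-\tau)-f(s-\tau)]\,d\tau.
\]
The first term is bounded by $\|f\|_{L^\infty}|\NN(t)-\NN(s)|$ and contributes $\|f\|_{L^\infty}^2[\NN]_{\dot H^{1/2}(0,T)}^2$. In the second term the increment of $f$ has length exactly $|t-s|$, so after Cauchy--Schwarz (weight $\I(\tau)$) and Fubini one recovers $\NN(T)^2[f]_{\dot H^{1/2}(0,T)}^2$ directly, with no residual term of $A_3$-type. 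This is the missing idea: moving the convolution variable onto the kernel aligns the $f$-increments with the Gagliardo denominator.
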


	\begin{proof}
 		Let us divide the proof in two parts: we first estimate the $ L^2$ norm of $ If $ and then the semi-norm $ \lf[ If \ri]_{H^{1/2}(0,T)} $.
 
		Let $T>0$ be finite and $f\in H^{1/2}(0,T)\cap L^\infty(0,T)$. In order to extend the operator $ I $ to an operator on the line, we set $ f_e(t) := \one_{[0,T]}(t)  f(t) $ and define 
 \[
  (I_e f)(t) : = \int_0^t\dtau \: \I_e(t-\tau)f_e(\tau), \qquad t \in \R,
 \]
 where
 \[
  	\I_e(t) := \one_{[0,T]}(t) \I(t).
 \]
 Since  $  (I_e f)(t) = (If)(t)$ for all $t\in[0,T]$,
 \begin{equation}
 	\label{eq:da_T_a_R}
 	\lf\|If \ri\|_{L^2(0,T)} = \lf\| I_e f \ri\|_{L^2(0,T)} \leq \| I_e f \|_{L^2(\R)}.
 \end{equation}
 Now, applying the Fourier transform on $ \R $ to $ I_e f $ and using the identity $$ \one_{[0,t]}(\tau) = \one_{\R^+}(\tau) - \one_{\R^+}(\tau-t), $$ one gets
 \[
  \widehat{I_e f} = \widehat{\I_e * \lf( \one_{\R^+} f_e \ri)} - \widehat{\lf( \one_{\R^-} \I_e \ri) * f_e} = \widehat{\I_e} \widehat{ \one_{\R^+} f_e } - \widehat{\one_{\R^-} \I_e} \widehat{f_e} =  \widehat{\I_e} \widehat{f_e},
 \]
since by construction $ \one_{\R^+}(t) f_e(t) = f_e(t) $ and $  \one_{\R^-}(t) \I_e(t) = 0$. Hence by \eqref{eq:da_T_a_R} and the above identity
 \[
 \lf\|I f \ri\|_{L^2(0,T)}^2 \leq \int_\R \dk \: \big| \widehat{\I_e}(k) \big|^2 \, \big|\widehat{f_e}(k) \big|^2,
 \]
but $|\widehat{\I_e}(k)| \leq C \NN(T)$ and therefore
\beq
	\label{eq:est_I_ldue}
  \lf\| If \ri\|_{L^2(0,T)}^2 \leq C \NN^2(T) \lf\|f_e \ri\|_{L^2(\R)}^2 =C \NN^2(T) \lf\| f \ri\|_{L^2(0,T)}^2,
 \eeq
 which implies the result via Lemma \ref{lem:N}.
 
 We now focus on the seminorm $ \lf[ If \ri]_{\dot{H}^{1/2}(0,T)}$. First, we note that, for every $0<s<t<T$,
 \[
  (If)(t)-(If)(s)=\int_s^t\dtau\:\I(\tau)f(t-\tau) + \int_0^s\dtau\:\I(\tau)(f(t-\tau)-f(s-\tau)),
 \]
so that
\bml{
  \label{eq:Naux_1}
  \lf[ If \ri]_{\dot{H}^{1/2}(0,T)}^2 \leq  4\int_0^T\dt\:\int_0^t\ds\:\left|\frac{1}{t-s}\int_s^t\dtau\:\I(\tau)f(t-\tau)\right|^2 \\
                                 + 4\int_0^T\dt\:\int_0^t\ds\:\left|\int_0^s\dtau\:\I(\tau)\frac{f(t-\tau)-f(s-\tau)}{t-s}\right|^2.
}
 Now, one can easily see that, since $f\in L^\infty(0,T)$,
\bml{
  	\label{eq:Naux_2}
  	4 \int_0^T\dt\:\int_0^t\ds\:\left|\frac{1}{t-s}\int_s^t\dtau\:\I(\tau)f(t-\tau)\right|^2 \leq 4\|f\|_{L^\infty(0,T)}^2\int_0^T\dt\:\int_0^t\ds\:\left|\frac{\NN(t)-\NN(s)}{t-s}\right|^2	\\
  	= 2\|f\|_{L^\infty(0,T)}^2[\NN]_{\dot{H}^{1/2}(0,T)}^2 \leq 2\|f\|_{L^\infty(0,T)}^2\|\NN\|_{H^{1/2}(0,T)}^2
}
where the last factor $\|\NN\|_{H^{1/2}(0,T)}$ is finite by Lemma \ref{lem:N}. On the other hand by Cauchy-Schwarz inequality, monotonicity of $\NN$ and positivity of $ \I $, we have
\bmln{
   \displaystyle 4\int_0^T\dt\:\int_0^t\ds\:\left|\int_0^s\dtau\:\I(\tau)\frac{f(t-\tau)-f(s-\tau)}{t-s}\right|^2	\\ 
  \displaystyle \leq4\NN(T)\int_0^T\dt\:\int_0^t\ds\:\int_0^s\dtau\:\I(\tau)\left|\frac{f(t-\tau)-f(s-\tau)}{t-s}\right|^2	\\
  \displaystyle \leq 4 \NN(T) \int_0^T\dtau\:\I(\tau)\int_0^{T-\tau}\dt\int_0^{t}\ds\left|\frac{f(t)-f(s)}{t-s}\right|^2 	\leq 2 \NN^2(T) \lf[ f \ri]_{\dot{H}^{1/2}(0,T)}^2
}
and plugging the above inequality and \eqref{eq:Naux_2} into \eqref{eq:Naux_1},
 \[
  \lf[ If \ri]_{\dot{H}^{1/2}(0,T)}^2\leq C\max\lf\{\|\NN\|_{\dot{H}^{1/2}(0,T)}^2,\NN^2(T) \ri\} \left(\lf\| f \ri\|_{L^\infty(0,T)}^2+\lf\| f \ri\|_{\dot{H}^{1/2}(0,T)}^2\right).
 \]
 Finally, the above estimate in combination with \eqref{eq:est_I_ldue} yields
 \[
  	\lf\| If \ri\|_{\dot{H}^{1/2}(0,T)}\leq C\max\lf\{\|\NN\|_{\dot{H}^{1/2}(0,T)},\NN(T) \ri\}\left( \lf\| f \ri\|_{L^\infty(0,T)}+ \lf\| f \ri\|_{\dot{H}^{1/2}(0,T)}\right)
 \]
 and, since both $\NN(T)$ and $\|\NN\|_{\dot{H}^{1/2}(0,T)}$ converges to zero as $T\to 0$ by Lemma \ref{lem:N}, the proof is complete.
\end{proof}

Finally, we point out some relevant properties of the integral operator $ J $, defined by
\begin{equation}
 \label{eq:integral_j}
 (Jf)(t) : = \int_0^t \dtau \: \J(t - \tau) f(\tau),\qquad \J(t-\tau) : = - \gamma - \log (t - \tau).
\end{equation}

	\begin{lem}
 		\label{lem:i_identity}
 		\mbox{}	\\
 		For any $ t \in \R^+ $ and $ f \in L^1(0,t) $,
 		\begin{equation}
  			\label{eq:i_inverse}
  			\lf(J I f \ri)(t) = \lf( I J f \ri)(t) = \int_0^t \dtau \: f(\tau).
 		\end{equation}
	\end{lem}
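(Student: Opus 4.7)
The crux is to recognise that both $I$ and $J$ are Volterra (half-line) convolution operators: writing $\star$ for the convolution $(a\star b)(t)=\int_0^t a(t-\tau)b(\tau)\,d\tau$, one has $If=\I\star f$ and $Jf=\J\star f$, with kernel $\J(t)=-\gamma-\log t$. Volterra convolution is commutative and associative on locally integrable functions, so $JIf=IJf$ is immediate, and by associativity both sides coincide with $(\I\star\J)\star f$. Hence the lemma reduces to the single \emph{resolvent} identity
\beq
\label{eq:plan_key}
(\I\star\J)(s):=\int_0^s\dtau\:\I(s-\tau)\bigl(-\gamma-\log\tau\bigr)=1,\qquad\forall\,s>0,
\eeq
from which $(IJf)(t)=\int_0^t\dtau\,f(\tau)$ follows by one further application of Fubini.

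To prove \eqref{eq:plan_key} I would use the Laplace transform $\LL$. Starting from the integral representation \eqref{eq:i} and Tonelli, for every $s>1$,
\[
\LL[\I](s)=\int_0^{\infty}\dt\:e^{-st}\I(t)=\int_0^{\infty}\dtau\:\frac{1}{\Gamma(\tau)}\int_0^{\infty}\dt\:e^{-st}t^{\tau-1}=\int_0^{\infty}\dtau\:s^{-\tau}=\frac{1}{\log s},
\]
while the classical formula $\LL[\log t](s)=-(\gamma+\log s)/s$ yields $\LL[\J](s)=\log s/s$. The convolution theorem then gives $\LL[\I\star\J](s)=1/s=\LL[\mathbf{1}](s)$ for every $s>1$. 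Since $\I\star\J$ is continuous on $(0,\infty)$ (as follows from $\I\in L^\infty_{\mathrm{loc}}(\R^+\setminus\{0\})$ together with $\J\in L^1_{\mathrm{loc}}(\R^+)$), uniqueness of the Laplace transform forces $(\I\star\J)(s)=1$ for every $s>0$, which is \eqref{eq:plan_key}.

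\textbf{Main obstacle.} The principal technicality is the justification of Fubini in the presence of the singularities of the two kernels: $\I(t)$ diverges at $t=0^+$ (though it remains integrable there by \eqref{eq:Iasympt}) and $\J(t)$ has a logarithmic singularity at $0$. The required absolute convergence of the triple integral reduces, however, to the $L^1_{\mathrm{loc}}$ character of $\I$ and $\J$ together with the hypothesis $f\in L^1(0,t)$, so no new estimate beyond those already used in Sect.~\ref{sec:preliminaries} is needed. If one prefers to avoid Laplace transforms altogether, a purely elementary alternative is to plug the representation \eqref{eq:i} of $\I$ directly into \eqref{eq:plan_key}, exchange the order of integration, and evaluate the inner integral in closed form via the Beta function; recognising that the outer integrand then equals $-\frac{d}{d\tau}\bigl[s^{\tau}/\Gamma(\tau+1)\bigr]$, the $\tau$-integral telescopes to $1$ thanks to the Stirling asymptotics $s^{\tau}/\Gamma(\tau+1)\to 0$ as $\tau\to\infty$.
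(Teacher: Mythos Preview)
Your proof is correct and follows the same overall strategy as the paper: both reduce the lemma to the single resolvent identity $\int_0^s\I(s-\tau)(-\gamma-\log\tau)\,\diff\tau=1$ and then conclude by Fubini. The only substantive difference lies in how that identity is obtained. The paper does not prove it but quotes it from the literature, combining \cite[Lemma~32.1]{SKM} (stated for the Volterra function $\nu$ of order~$0$) with the relation $\partial_t\nu(t)=\I(t)$ from \cite[Sect.~18.3]{E}. You instead give a self-contained Laplace-transform computation, which is a genuine gain in transparency; your alternative Beta-function/telescoping argument is also valid and closer in spirit to the direct verification one finds in \cite{SKM}. A minor further difference: you obtain $JIf=IJf$ in one line from commutativity of Volterra convolution, whereas the paper treats the two cases as ``analogous'' and writes out only one of them.
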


	\begin{proof}
 		{For the meaning of $Jf$, when $f$ is only an integrable function we refer the reader to \cite[proof of Theorem 5.3]{CF}: the argument is similar to the one in the first part of the proof of Lemma \ref{contr_lemma} and exploits the properties of convolutions on $ \R $, via a suitable extension of the functions involved.} 
 		
In addition, we observe that one has the identity
 		\begin{equation}
 	 		\label{eq:i_identity}
  			\int_{0}^{t} \dtau \: \I(\tau) \lf(-\gamma-\log (t - \tau) \ri) = \int_{0}^{t} \dtau \: \I(t - \tau) \lf(-\gamma-\log \tau\ri) = 1.
 		\end{equation}
 		In \cite[Lemma 32.1]{SKM} it is indeed proven that (in the formula stated in the cited Lemma  one has to take $ \alpha = 1, h = 0 $)
 		\[
  			\int_{0}^{t} \dtau \: \left(\log \tau - \psi(1) \right) \partial_t \nu(t - \tau) = -1,
 		\]
 		where $\nu$ here denotes the Volterra function of order 0. However, using \cite[Eq. (12), Sect. 18.3]{E}, one can recognize that $ \partial_t \nu(t) = \I(t) $ (and that $\psi(1)=-\gamma$).

 		Let us then prove the identity involving $ I J $. The proof of the other one is perfectly analogous and we omit it for the sake of brevity. First of all, in the expression
 		\[
  			\lf( I J f \ri)(t) = \int_0^{t} \dtau  \int_0^{t-\tau} \dsigma \: \I(\tau) \J(t-\tau-\sigma) f(\sigma),
 		\]
 		one can exchange the order of the integration, since
 		\bdm
   				\displaystyle \int_0^{t} \dtau  \int_0^{t-\tau} \dsigma \: \I(\tau) \J(t-\sigma-\tau) f(\sigma)  = \int_0^{t} \dsigma  \int_0^{t-\sigma} \dtau \: \I(\tau) \J(t-\sigma-\tau) f(\sigma).
 		\edm
 		Using \eqref{eq:i_identity}, we conclude that
 		\[
  			\lf( I J f \ri)(t) = \int_0^{t} \dsigma\:  \int_0^{t-\sigma} \dtau \: \I(\tau) \J(t-\sigma-\tau) f(\sigma) = \int_0^{t} \dsigma \: f(\sigma).
 		\]
	\end{proof}
	

\subsection{A derivation of the charge equation}
\label{sec:charge}

Before starting to discuss the charge equation, it is worth making a brief excursus on a heuristic computation, which motivates ansatz \eqref{eq:ansatz_pre} and equation \eqref{eq:charge_eq_pre}. {Note that we derive them in the case of an arbitrary number $N$ on interactions for the sake of generality}. In the following, we also assume for the sake of simplicity that $q_j(0) = 0$, for every $j=1,\dots,N$. However, one can prove that such an assumption is not restrictive.

Neglecting any regularity issue, we can compute the time derivative of \eqref{eq:ansatz_pre} and obtain that, at least formally,
\bml{
	\label{eq:time_derivative}
 	i \partial_t \psi_t (\xv) =  ( - \Delta U_0(t) \psi_0 ) (\xv) - \frac{1}{2\pi}\sum_{j=1}^N q_j(t)	+ \frac{1}{2\pi} \sum_{j=1}^N \int_0^t \dtau \:  \partial_\tau U_0 (t - \tau; |\xv - \yv_j|) \: q_j(\tau)\\
                                     =  ( - \Delta U_0(t) \psi_0 ) (\xv)  - \frac{1}{2\pi} \sum_{j=1}^N \int_0^t \dtau \:  U_0 (t - \tau; |\xv - \yv_j|) \: \dot q_j(\tau),
}
where we used the fact that (as $q_j(0)=0$) $\psi_0\in H^2(\R^2)$ and that (by definition) $i\partial_tU_0(t)\psi_0=-\Delta U_0(t)\psi_0$. Hence, applying the Fourier transform on $ \R^2 $, the above expression becomes (we set $ p = |\pv| $)
\begin{equation}
 \label{eq:t_derivative_fourier}
 i \widehat{\partial_t\psi_t}(\pv) =  p^2 e^{-ip^2 t} \widehat{\psi_0} (\pv)  - \frac{1}{2\pi} \sum_{j=1}^N \int_0^t \dtau \:  e^{- i \pv \cdot \yv_j} e^{-ip^2(t- \tau)} \: \dot q_j(\tau).
\end{equation}
The l.h.s. of \eqref{eq:time_derivative} equals the action of $ H_0 $ on the regular part of the wave function $ \psi_t $ (see \eqref{eq:lin op} and \eqref{eq:lin domain}), i.e., 
\bml{
  	\label{eq:ham_psit_fourier}
  	p^2 \bigg( \widehat{\psi_t}(\pv) - \frac{1}{2\pi} \sum_{j=1}^N \frac{q_j(t) e^{-i \pv \cdot \yv_j}}{p^2 + \lambda} \bigg) - \frac{\lambda}{2\pi} \sum_{j=1}^N \frac{q_j(t) e^{-i \pv \cdot \yv_j}}{p^2 + \lambda}\\	
  	\displaystyle = p^2  e^{-ip^2 t} \widehat{\psi_0} (\pv)  + \frac{1}{2\pi} \sum_{j=1}^N \int_0^t \dtau \:  e^{- i \pv \cdot \yv_j} \partial_\tau \left( e^{-ip^2(t- \tau)} \right) \: q_j(\tau) - \frac{1}{2\pi} \sum_{j=1}^N  q_j(t) e^{-i \pv \cdot \yv_j}\\	
  	\displaystyle =   p^2 e^{-ip^2 t} \widehat{\psi_0} (\pv)  - \frac{1}{2\pi} \sum_{j=1}^N \int_0^t \dtau \:  e^{- i \pv \cdot \yv_j} e^{-ip^2(t- \tau)} \: \dot q_j(\tau),
}
which is equal to \eqref{eq:t_derivative_fourier}. Therefore,  for any  $ \qv(t) $ and $\psi_0$ such that the r.h.s. of \eqref{eq:ham_psit_fourier} makes sense, the ansatz \eqref{eq:ansatz_pre} does solve the time-dependent Schr\"{o}dinger equation, at least in a weak sense.

Under restrictive assumptions on $ \psi_0 $, however, the ansatz $ \psi_t $ must belong to the (nonlinear) operator domain $ \dom(\ham) $, with $ \alpha_j = \beta_j |q_j(t)|^{2\sigma_j} $, $ j = 1, \ldots, N $, i.e., it must satisfy the boundary conditions \eqref{eq:boundary_condition}, which can be cast in the form
\[
 	\frac{1}{2\pi} \int_{\R^2} \dpv \: e^{i \pv \cdot \yv_j} \widehat{\phi_{\lambda,t}}(\pv) =  \left(\beta_j |q_j(t)|^{2\sigma_j} + \tfrac{1}{2\pi} \log \tfrac{\sqrt{\lambda}}{2} - \tfrac{\gamma}{2\pi} \right) q_j(t) - \frac{1}{2\pi}\sum_{k \neq j} q_k(t)  K_0\lf(\sqrt{\lambda} |\yv_j - \yv_k|\ri),
\]
In fact, as we are going to see, the above condition will force $ \qv(t) $ to be a solution to the charge equation \eqref{eq:charge_eq_pre}. Indeed, since $$\phi_{\lambda,t}=\psi_t-\frac{1}{2\pi}\sum_{k=1}^Nq_k(t)K_0 \lf(\sqrt{\lambda}|\cdot-\yv_k|\ri),$$ by \eqref{eq:ansatz_pre},
\bmln{
  	\displaystyle \frac{1}{2\pi} \int_{\R^2} \dpv \: e^{i \pv \cdot \yv_j} \bigg\{ e^{-ip^2 t} \widehat{\psi_0} (\pv) + \frac{i}{2\pi} \sum_{k=1}^N \int_0^t \dtau \:  e^{- i \pv \cdot \yv_k} e^{-ip^2(t- \tau)} \: q_k(\tau)- \frac{1}{2\pi} \sum_{k =1}^N \frac{q_k(t) e^{-i \pv \cdot \yv_k}}{p^2 + \lambda} \bigg\}\\
  	\displaystyle  = \left(\beta_j|q_j(t)|^{2\sigma_j} + \tfrac{1}{2\pi} \log \tfrac{\sqrt{\lambda}}{2} - \tfrac{\gamma}{2\pi} \right) q_j(t) - \frac{1}{2\pi}\sum_{k \neq j} q_k(t)  K_0 \lf(\sqrt{\lambda} |\yv_j - \yv_k| \ri).
}
The last off-diagonal term cancels exactly and thus the identity becomes
\bmln{
  	\displaystyle \frac{1}{2\pi} \int_{\R^2} \dpv \: e^{i \pv \cdot \yv_j} \bigg\{ e^{-ip^2 t} \widehat{\psi_0} (\pv) + \frac{i}{2\pi} \sum_{k=1}^N \int_0^t \dtau \:  e^{- i \pv \cdot \yv_k} e^{-ip^2(t- \tau)} \: q_k(\tau) - \frac{1}{2\pi} \frac{q_j(t) e^{-i \pv \cdot \yv_j}}{p^2 + \lambda} \bigg\}\\	
  	 =\displaystyle  \left( \beta_j|q_j(t)|^{2\sigma_j} + \tfrac{1}{2\pi} \log \tfrac{\sqrt{\lambda}}{2} + \tfrac{\gamma}{2\pi} \right) q_j(t).
}
Combining the last diverging term on the l.h.s. with the second one via an integration by parts (here we implicitly assume that the charge is regular enough), we get
\bmln{
  	\displaystyle \frac{1}{2\pi} \int_{\R^2} \dpv \: \bigg\{ e^{i \pv \cdot \yv_j} e^{-ip^2 t} \widehat{\psi_0} (\pv) - \frac{1}{2\pi (p^2+\lambda)} \int_0^t \dtau \:  e^{-ip^2(t- \tau)} \: \left[ \dot q_j(\tau) - i \lambda q_j(\tau) \right] \\	
 	\displaystyle + \frac{i}{2\pi } \sum_{k \neq j} \int_0^t \dtau \:  e^{ i \pv \cdot ( \yv_j - \yv_k)} e^{-ip^2(t- \tau)} \: q_k(\tau) \bigg\} = \left( \beta_j|q_j(t)|^{2\sigma_j} + \tfrac{1}{2\pi} \log \tfrac{\sqrt{\lambda}}{2} + \tfrac{\gamma}{2\pi} \right) q_j(t),
}
The $\pv$ integral of the second term on the l.h.s. contains an infrared singularity for $ t = \tau $ which is proportional to $ \log(t - \tau) $: in fact by \cite[Eqs. 3.722.1 \& 3.722.3]{GR}
\bml{
 \label{eq:sici}
 2\pi\lf(U_0(t) K_0 \big(\sqrt{\la} \cdot \big)\ri)(\mathbf{0}) = \int_{\R^2} \dpv \: \frac{e^{-i p^{2}(t-\tau)}}{p^{2}+\lambda}  = - \pi e^{i \lambda(t-\tau)} \left[ \mathrm{ci}(\lambda(t-\tau)) -i\, \mathrm{si}(\lambda(t-\tau))  \right] \\
                                                                 = - \pi e^{i \lambda (t - \tau)} \left( \gamma + \log \lambda + \log (t - \tau) \right) + e^{i \lambda (t - \tau)} Q(\lambda; t-\tau),
}
where $\textrm{si}( \: \cdot \:)$ and $ \mathrm{ci}( \: \cdot \:) $ stand for the sine and cosine integral functions \cite[Eqs. 5.2.1 \& 5.2.2]{AS} and (see, e.g., \cite[Eq. 5.2.16]{AS})
\begin{equation}
 \label{eq-Q}
 Q(\lambda; t - \tau) : = - \pi \bigg( \sum_{n=1}^{\infty}\frac{({-}(t-\tau)^{2} \lambda^{2})^{n}}{2n(2n)!} - i \, \textrm{si}((t - \tau) \lambda) \bigg)
\end{equation}
(note that $ Q(0; t - \tau) = -\tfrac{i\pi^2}{2} $). Hence, we obtain
\bmln{
	 \left(U_0(t) \psi_0\right)(\yv_j) + \frac{i}{2\pi } \sum_{k \neq j} \int_0^t \dtau \:  U_0(t-\tau; |\yv_j - \yv_k|) \: q_k(\tau) - \left( \beta_j|q_j(t)|^{2\sigma_j} + \tfrac{1}{2\pi} \log \tfrac{\sqrt{\lambda}}{2} + \tfrac{\gamma}{2\pi} \right) q_j(t)   \\
 	= - \frac{1}{4 \pi} \int_0^t \dtau \: \left( \gamma +\log (t - \tau) + \log\lambda  - \tfrac{1}{\pi} Q(\lambda; t-\tau) \right)  \partial_{\tau} \left( e^{i \lambda (t - \tau)} q_j(\tau) \right) 
}
and taking the limit $ \lambda \to 0 $ (notice the exact cancellation of the diverging $ \log \lambda $ terms)
\bmln{
 	(U_0(t) \psi_0)(\yv_j) + \frac{i}{2\pi } \sum_{k \neq j} \int_0^t \dtau \:  U_0(t-\tau; |\yv_j - \yv_k|) \: q_k(\tau)	\\
		 - \left(\beta_j|q_j(t)|^{2\sigma_j} - \tfrac{1}{2\pi} \log 2 + \tfrac{\gamma}{2\pi} -\tfrac{i}{8} \right) q_j(t)= - \frac{1}{4 \pi} \int_0^t \dtau \: \left( \gamma +\log (t-\tau)    \right)  \dot q_j(\tau).
}
If we now apply to both sides the integral operator $ I $ defined in \eqref{eq:integral_i} and exploit the property proven in Lemma \ref{lem:i_identity}, we find
\bmln{
  \displaystyle \int_0^t \dtau \: \I(t-\tau) (U_0(\tau)\psi_0)(\yv_j) + \frac{i}{2\pi}\sum_{k\neq j}\int_0^t \dtau \: \I(t-\tau)\int_0^\tau \dmu \: U_0(\tau-\mu;|\yv_j-\yv_k|)q_k(\mu) \\
  \displaystyle - \int_0^t \dtau \: \I(t-\tau) \beta_j|q_j(\tau)|^{2\sigma_j}q_j(\tau) + \frac{1}{2\pi}\left(\log 2 - \gamma + \tx\frac{i\pi}{4}\right)\int_0^t \dtau \: \I(t-\tau)q_j(\tau) = \frac{q_j(t)}{4\pi}
}
{and thus multiplying each term by $4\pi$ and exchanging the integration order in the sum one obtains \eqref{eq:charge_eq_pre}.}

{This formal derivation deserves two further comments. The first one concerns the limit $\lambda\to0$ that one performs in order to arrive to the actual form of the charge equation. The two issues concerning this point are both that the choice $\lambda=0$ is forbidden in the 2D case (for the domain decomposition) and that the charge equation must be independent of the choice of $\lambda$. However, using the Laplace transform (see, e.g., \cite{A}) one can check (with long computations) that the same form of the charge equation can be obtained for any other choice of $\lambda>0$.}

{Finally, it is worth giving some further details on the reason why the off-diagonal terms cannot be treated in this paper. As we stressed at the beginning of Section \ref{sec:mainres}, the point is that the kernel of the off-diagonal terms, i.e.,
\begin{equation}
\label{eq:problema}
\int_0^t\dtau\:\I(t-\tau)U_0(\tau;|\yv_j-\yv_k|)
\end{equation}
is not integrable. The mathematical reason for such a lack of integrability is the following. Using the asymptotics provided by \eqref{eq:Iasympt}, one can check that, up to some constants, the local singularity of \eqref{eq:problema} is given by
\begin{equation}
\label{problema2}
 \frac{1}{t\log^2t}\int_0^1\dtau\:\frac{e^{i\frac{|\yv_j-\yv_k|^2}{4t\tau}}}{\tau(1 - \tau)} \frac{1}{\left(1+\frac{\log(1-\tau)}{\log t}\right)^2}.
\end{equation}
If we split the integral in two parts, one can see that the integral between 0 and $\frac{1}{2}$ converges to a constant when $t\to0$, whereas the integral between $\frac{1}{2}$ and 1 yields a logarithmic divergence as $t\to0$, thus implying that \eqref{problema2} behaves like $ \frac{1}{t \log t} $ as $ t \to 0 $ and therefore it is not integrable.}

{As a further remark, one could note that, in fact, the term which is logarithmic-divergent is multiplied by a factor that is strongly oscillating as $t\to0$. This, although not relevant concerning the integrability of the kernel, could provide nevertheless the possibility of studying this types of 
kernels, but the corresponding theory is yet to be developed.}


\subsection{Local well-posedness}
\label{sec:local}

In order to prove Theorem \ref{teo:local}, the first step is to discuss existence, uniqueness and Sobolev regularity of any solution of \eqref{eq:charge_eq}. We split the results into two separate Propositions to make the proof strategy clearer: by general arguments about Volterra-type integral equations and the properties of \eqref{eq:charge_eq}, we obtain existence and uniqueness of a continuous solution $ q(t) $ up to some maximal existence time $ T_* $, which might as well be $ +\infty $. Then, in order to derive the Sobolev regularity of $ q(t) $, we use the aforementioned contraction, which works on some a priori shorter interval $ [0,T] $, $ T < T_* $. In Proposition \ref{pro:extension q} we will however show how one can extend such a regularity to the whole existence interval, provided a property of the source term holds true.

Preliminarily, note that \eqref{eq:charge_eq} can be written in a compact form as
\begin{equation}
 \label{eq:charge_compact}
 q(t) + \int_0^t \diff \tau \: \bigg(g(t,\tau,q(\tau))+\kappa\I(t-\tau) \: q(\tau)\bigg) = f(t),
\end{equation}
where $\kappa:=-2\big(\log 2-\gamma+i\frac{\pi}{4}\big)$ and $g$ and $f$ are defined respectively by
\beq
 g(t,\tau,q(\tau))=4\pi\beta_0\I(t-\tau)|q(\tau)|^{2\sigma}q(\tau),
\eeq
\begin{equation}
 \label{eq:mapf}
 f(t)=4\pi\int_0^t\dtau\:\I(t-\tau)(U_0(\tau)\psi_0)(\yv).
\end{equation}

	\begin{pro}[Continuity of $ q(t) $]
		\label{pro:charge continuous}
 		\mbox{}	\\
 		Let $\sigma\geq\frac{1}{2}$ and $  \psi_0\in \dom $. Then, there exists $T_{*} >0$ such that \eqref{eq:charge_compact} admits a unique solution $ q(t) \in C[0,T_*) $. Moreover either $ T_{*} = +\infty $, i.e., the solution is global in time, or $ T_* < + \infty $ and $ \lim_{t \to T_*} \lf| q(t) \ri| = + \infty $.
	\end{pro}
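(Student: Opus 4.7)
My plan is to set up \eqref{eq:charge_compact} as a fixed-point problem on $C[0,T]$ for $T$ small and then extend via a standard maximal continuation argument. Consider the map
\[
	\Phi(q)(t) := f(t) - 4\pi\beta_0\int_0^t\dtau\: \I(t-\tau)|q(\tau)|^{2\sigma}q(\tau) - \kappa\int_0^t\dtau\:\I(t-\tau)q(\tau).
\]
A fixed point of $\Phi$ in $C[0,T]$ is precisely a continuous solution of the charge equation.

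The first step is to check that $f\in C[0,T]$ for every $T>0$ and, more quantitatively, that $\|f\|_{L^\infty(0,T)}\to 0$ as $T\to 0$. Since $\psi_0\in\dom$, the regular part decomposition together with the assumption $(1+p^\eps)\widehat{\phi_\lambda}\in L^1(\R^2)$ makes $(U_0(\tau)\psi_0)(\yv)$ a bounded (in fact continuous) function of $\tau$; combined with Lemma \ref{contr_lemma_inf}, this gives $f\in C[0,T]$ and $\|f\|_{L^\infty(0,T)}\leq C_T\|(U_0(\cdot)\psi_0)(\yv)\|_{L^\infty(0,T)}$ with $C_T\to 0$.

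The second step is the contraction estimate on a closed ball $B_R\subset C[0,T]$ of radius $R=2\|f\|_{L^\infty(0,T)}+1$, say. For $q_1,q_2\in B_R$, Lemma \ref{contr_lemma_inf} applied to the two integral terms yields
\[
	\lf\|\Phi(q_1)-\Phi(q_2)\ri\|_{L^\infty(0,T)} \leq C_T\Big(4\pi|\beta_0|\,\big\||q_1|^{2\sigma}q_1-|q_2|^{2\sigma}q_2\big\|_{L^\infty(0,T)}+|\kappa|\,\|q_1-q_2\|_{L^\infty(0,T)}\Big),
\]
and the Lipschitz estimate \eqref{eq:lip_inf} of Lemma \ref{lem-lip} (which requires $\sigma\geq \tfrac{1}{2}$, matching the hypothesis) bounds the first term by $CR^{2\sigma}\|q_1-q_2\|_{L^\infty(0,T)}$. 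Analogously one checks $\Phi(B_R)\subset B_R$ for $T$ small. Since $C_T\to 0$, for $T$ sufficiently small both self-mapping and contractivity hold, and Banach's theorem yields a unique $q\in C[0,T]$ satisfying $\Phi(q)=q$.

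The last step is to glue local solutions. Define
\[
	T_* := \sup\{T>0 \mid \exists\, q\in C[0,T]\text{ solution of }\eqref{eq:charge_compact}\}.
\]
Standard Volterra uniqueness on overlapping intervals shows that the family of local solutions patches together to a single $q\in C[0,T_*)$. Suppose $T_*<+\infty$ and, by contradiction, that $\liminf_{t\to T_*}|q(t)|<+\infty$; then there exists $t_n\uparrow T_*$ with $|q(t_n)|\leq M$. Pick $t_n$ close enough to $T_*$, rewrite \eqref{eq:charge_compact} as an integral equation on $[t_n,T_*+\delta]$ with new forcing incorporating the contribution of $q|_{[0,t_n]}$ (finite by continuity), and apply the same contraction argument—legitimate since, as emphasised in the paper after \eqref{eq:KK}, the choice of initial time is arbitrary and all estimates depend only on the length of the interval. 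This extends $q$ past $T_*$, contradicting maximality. Hence either $T_*=+\infty$ or $\lim_{t\to T_*}|q(t)|=+\infty$.

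The step I expect to be the most delicate is the continuation argument, because the Volterra kernel $\I(t-\tau)$ is singular at $\tau=t$ and the whole forcing has to be re-expressed when restarting from $t_n$; one must check that the "memory term" $\int_0^{t_n}\I(t-\tau)(\cdots)\dtau$ for $t\in [t_n,T_*+\delta]$ is continuous in $t$ and does not spoil the contraction constant, which ultimately rests again on the smallness provided by Lemma \ref{contr_lemma_inf} on short intervals. The continuity of $f$ beyond $T_*$ causes no trouble since it is defined globally from $\psi_0$.
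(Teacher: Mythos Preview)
There is a genuine gap in your first step. You assert that $(U_0(\tau)\psi_0)(\yv)$ is bounded (even continuous) in $\tau$, appealing to the hypothesis $(1+p^{\eps})\widehat{\phi_\lambda}\in L^1(\R^2)$. But that hypothesis concerns only the \emph{regular} part $\phi_{\lambda,0}$ of $\psi_0$. Since $\psi_0\in\dom\subset\dom[\F]$, one has
\[
  \psi_0=\phi_{\lambda,0}+\tfrac{q(0)}{2\pi}K_0\big(\sqrt{\lambda}\,|\cdot-\yv|\big),
\]
and whenever $q(0)\neq 0$ the contribution of the singular part to $(U_0(\tau)\psi_0)(\yv)$ is \emph{not} bounded: by \eqref{eq:sici} it behaves like $-\tfrac{q(0)}{2}e^{i\lambda\tau}(\gamma+\log\tau)$ as $\tau\to 0$. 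Hence Lemma~\ref{contr_lemma_inf} does not apply directly to the integrand, and your conclusion that $f\in C[0,T]$ (and that $\|f\|_{L^\infty(0,T)}\to 0$) is unsupported. The paper deals with this by splitting $4\pi(U_0(\tau)\psi_0)(\yv)=A_1(\tau)+A_2(\tau)$ as in \eqref{eq:Aunoduetre}; the bounded term $A_1$ is handled by Lemma~\ref{contr_lemma_inf}, while for $A_2$ one isolates the logarithmic piece $A_{2,1}(\tau)=e^{i\lambda\tau}\J(\tau)$ and uses the crucial identity of Lemma~\ref{lem:i_identity}, namely $\int_0^t\I(t-\tau)\J(\tau)\,\diff\tau=1$, to show that $IA_{2,1}(t)=1+Ia_{2,1}(t)$ with $a_{2,1}$ bounded. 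This cancellation between the kernel $\I$ and the $\log$-singularity is the key missing ingredient; without it you cannot even start the contraction.

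Two further remarks. First, your parenthetical that \eqref{eq:lip_inf} ``requires $\sigma\geq\tfrac12$'' is inaccurate: as noted in Remark~\ref{eq:ass_on_sigma}, \eqref{eq:lip_inf} holds for all $\sigma\geq 0$; the restriction $\sigma\geq\tfrac12$ enters only for the $H^{1/2}$ estimate \eqref{eq:lip_sob}. Second, your continuation argument from a sequence $t_n\uparrow T_*$ with $|q(t_n)|\leq M$ is delicate: the restart forcing $f_1$ carries the full memory $\int_0^{t_n}\I(t-\tau)\big(\cdots\big)\,\diff\tau$, whose size depends on $\|q\|_{L^\infty(0,t_n)}$, not merely on $|q(t_n)|$. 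If $\limsup_{t\to T_*}|q(t)|=+\infty$ while $\liminf<+\infty$, the local existence time $\delta_n$ obtained from your contraction could shrink to $0$ and you would not overshoot $T_*$. The paper sidesteps this by invoking a general result on nonlinear Volterra equations (Miller, Corollary~2.7), verifying its hypotheses (i)--(iv) directly; that theorem already contains the sharp blow-up alternative with $\lim$ rather than $\limsup$.
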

	
	{\begin{rem}
	 The assumptions $\sigma\geq\frac{1}{2}$ and $\psi\in\dom$ are not necessary in the above statement: indeed, everything works as well even if we require only that $\sigma\geq0$ and $\widehat{\phi_{\lambda,0}}\in L^1(\R^2)$. However, since the main results of the paper request those more limiting assumptions, we keep them even in the intermediate results, in order to not give rise to misunderstandings. Analogous considerations hold for Proposition \ref{pro:charge} concerning the smoothness of the regular part of the initial datum.
	\end{rem}}

	\begin{proof}
	 	The result is obtained by directly applying {\cite[Corollary 2.7]{M}}: it claims that there exists $T_*>0$ for which \eqref{eq:charge_compact} admits a unique solution $q\in C[0,T^*)$, with the claimed properties, provided
		\ben[(i)]
			\item 	$ f $ is continuous on $\R^+ $;
 			\item 	for every $t'>0$ and every bounded set $\BB \subset\C$, there exists a measurable function $m(t,\tau)$ such that
					\[
					\lf| g(t,\tau,q)+\kappa\I(t-\tau)q \ri| \leq m(t,\tau),\quad \forall\: 0\leq\tau\leq t\leq t', \quad \forall q\in \BB,
					\]
 					with 
 					\bdm
 						\sup_{t \in [0, t']} \int_0^t\dtau\:m(t,\tau) < \infty, \qquad	\int_0^t\dtau\:m(t,\tau) \xrightarrow[t \to 0]{} 0;
					\edm
 			\item 	for every compact interval $I\subset\R^+$, every continuous function $\varphi:I\to\C$ and every $t_0\in\R^+$,
					\begin{equation}
  						\label{eq-lim_I}
  						{\lim_{t\to t_0}\int_I\dtau\:\lf[ g(t,\tau,\varphi(\tau))-g(t_0,\tau,\varphi(\tau))+\kappa(\I(t-\tau)-\I(t_0-\tau))\varphi(\tau) \ri] =0;}
 					\end{equation}
 
 			\item 	for every $t'>0$ and every bounded $\BB\subset\C$, there exists a measurable function $h(t,\tau)$ such that
					\[
  						{\lf|g(t,\tau,q_1)-g(t,\tau,q_2)+\kappa\I(t-\tau)(q_1-q_2) \ri|\leq h(t,\tau) \lf|q_1-q_2 \ri|,}
 					\]
 					for all $0\leq\tau\leq t\leq t'$ and all $q_1,\,q_2\in \BB$, with $h(t,\cdot\,)\in L^1(0,t)$ for all $t\in[0,t']$ and 
					\bdm
						\int_t^{t+\ep}\dtau\:h(t+\ep,\tau) \xrightarrow[\ep \to 0]{} 0.
					\edm
		\een

		Let us now verify all the hypothesis. First, consider point (i): since $ \psi_0 \in \dom[\F] $,
		\begin{equation}
			\label{eq:Aunoduetre}
  			4\pi(U_0(\tau)\psi_0)(\yv)=\underbrace{4\pi \lf(U_0(\tau)\phi_{\lambda,0} \ri)(\yv)}_{:=A_1(\tau)} + \underbrace{2q(0) \lf(U_0(\tau)K_0\lf(\sqrt{\lambda}|\cdot -\yv|\ri) \ri)(\yv)}_{:=A_2(\tau)}.
		\end{equation}
 		Observing that
 		\[
  			A_1(\tau) = 2 \int_{\R^2} \dpv \: e^{i\pv \cdot \yv} e^{-i p^2\tau} \widehat{\phi_{\lambda,0}}(\pv)
 		\]
 		and recalling that $\widehat{\phi_{\lambda,0}}\in L^1(\R^2)$ by assumption, one sees that $A_1$ is bounded and therefore $IA_1$ is continuous as well by Lemma \ref{contr_lemma_inf}. On the other hand, exploiting \eqref{eq:integral_j}, \eqref{eq:sici} and \eqref{eq-Q},
		\bml{
			\label{eq:a2}
  			A_2(\tau) = \frac{1}{\pi} q(0) \int_{\R^2}\dpv\:\frac{e^{-ip^2\tau}}{p^2+\lambda}  
             	= q(0) \lf[ -   e^{i\lambda\tau}(\gamma+\log\lambda+\log\tau)+\frac{e^{i\lambda\tau}}{\pi}Q(\lambda;\tau) \ri] \\[.2cm] 
              	=  q(0) \underbrace{e^{i\lambda\tau}\J(\tau)}_{A_{2,1}(\tau)}+ q_j(0) \underbrace{\tfrac{e^{i\lambda\tau}}{\pi}\left(-\pi\log\lambda+Q(\lambda;\tau)\right)}_{A_{2,2}(\tau)}.
		}
 		Now, it is clear that $A_{2,2}(\tau)$ is smooth, so that $IA_{2,2}$ is continuous. Furthermore, by \eqref{eq:i_identity},
 		\[
  			(IA_{2,1})(t)=1+\int_0^t\dtau\:\I(t-\tau) a_{2,1}(\tau), 	\qquad	 a_{2,1}(\tau) := \lf(e^{i\lambda\tau}-1 \ri)\J(\tau).
 		\]
		Since $ a_{2,1} $ is continuous (actually belongs to $H^1(0,T)$), then $IA_{2,1}$ is continuous too. Summing up, we have thus shown that $f$ (defined by \eqref{eq:mapf}) is continuous.
 
		For every $q\in \BB$, with $ \BB$ bounded,
		\[
 			{\lf|g(t,\tau,q)+\kappa\I(t-\tau)q \ri|\leq C\,\I(t-\tau)}
		\]
		and, since $\I\in L_{\mathrm{loc}}^1(\R^+)$, (ii) is satisfied.

		In addition, let $I=[a,b]$ be an interval, $\varphi:I\to\C$ a continuous function and $t_0\in\R^+$. The integral in \eqref{eq-lim_I} consists, up to some constants, of terms like
		\[
 			{\int_a^b\dtau\: \lf[\I(t-\tau)-\I(t_0-\tau) \ri]\,\lf[\beta_0|\varphi(\tau)|^{2\sigma}\varphi(\tau)+\kappa\varphi(\tau)\ri]}
		\]
		Hence (iii) is satisfied by dominated convergence (see, e.g., the discussion of \eqref{eq:dominated}). 
		
		Finally, we see that, as $q_1,\,q_2\in \BB$,
		\bmln{
 			{\displaystyle \lf|g(t,\tau,q_1)-g(t,\tau,q_2)+\kappa\I(t-\tau)(q_1-q_2) \ri| }\\ 
 	 		{\displaystyle \leq C\I(t-\tau)\left||q_1|^{2\sigma}q_1-|q_2|^{2\sigma}q_2\right|+|\I(t-\tau)||q_1-q_2|	 \leq C\,\I(t-\tau)|q_1-q_2|.}
}
		Consequently, setting $h(t,\tau)=C\,\I(t-\tau)$, (iv) is satisfied.
	\end{proof}

	\begin{pro}[Sobolev regularity of $ q(t) $]
		\label{pro:charge}
 		\mbox{}	\\
 		Let $\sigma\geq\frac{1}{2}$ and $  \psi_0\in \dom $. Then, there exists $ 0 < T < T_{*} $, such that $ q(t) \in H^{1/2}(0,T)$.
	\end{pro}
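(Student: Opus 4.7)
The strategy is a Banach fixed-point argument in the space
\[
X_T := C[0,T] \cap H^{1/2}(0,T), \qquad \|q\|_{X_T} := \|q\|_{L^\infty(0,T)} + \|q\|_{H^{1/2}(0,T)},
\]
followed by identification of the fixed point with the continuous solution produced by Proposition \ref{pro:charge continuous}. Rewriting \eqref{eq:charge_compact} as $q = \Phi(q)$, where
\[
\Phi(q)(t) := f(t) - 4\pi\beta_0\,I(|q|^{2\sigma}q)(t) - \kappa\,(Iq)(t), \qquad \kappa := -2\bigl(\log 2 - \gamma + \tfrac{i\pi}{4}\bigr),
\]
the building blocks are Lemmas \ref{contr_lemma_inf}--\ref{contr_lemma}, which ensure that $I$ is bounded from $X_T$ into itself with operator norm $C_T \to 0$ as $T \to 0$, and Lemma \ref{lem-lip}, which---thanks to the hypothesis $\sigma \geq 1/2$---controls the nonlinear map $q\mapsto|q|^{2\sigma}q$ in both factors of $X_T$.

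The delicate preliminary step is to verify $f \in X_T$. Decomposing $\psi_0 = \phi_{\lambda,0} + \frac{q(0)}{2\pi}K_0(\sqrt{\lambda}|\,\cdot\, - \yv|)$, the raw boundary value $4\pi(U_0(\tau)\psi_0)(\yv)$ splits as in \eqref{eq:Aunoduetre}--\eqref{eq:a2} into a bounded contribution $A_1$ (bounded thanks to $\widehat{\phi_{\lambda,0}} \in L^1(\R^2)$, which is precisely the requirement built into the definition \eqref{eq:in_assumption} of $\dom$), a smooth term $A_{2,2}$, and the logarithmically singular piece $A_{2,1}(\tau) = e^{i\lambda\tau}\mathcal{J}(\tau)$. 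The bounded contributions land in $X_T$ via Lemmas \ref{contr_lemma_inf}--\ref{contr_lemma}. For the singular piece I would invoke the identity \eqref{eq:i_identity} of Lemma \ref{lem:i_identity}, namely $I\mathcal{J} \equiv 1$, to write
\[
I A_{2,1}(t) = 1 + I\bigl((e^{i\lambda\,\cdot}-1)\mathcal{J}(\cdot)\bigr)(t),
\]
where the factor $e^{i\lambda\tau}-1$ cancels the logarithmic singularity and the new integrand sits in $H^1(0,T) \hookrightarrow X_T$, so the two regularity lemmas apply once more.

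On the closed ball $B_R := \{q \in X_T : \|q\|_{X_T} \leq R\}$ with $R := 2\|f\|_{X_T}$, combining the three lemmas yields, for $q_1, q_2 \in B_R$,
\[
\|\Phi(q_1)-\Phi(q_2)\|_{X_T} \leq C_T\bigl(C_\sigma R^{2\sigma} + |\kappa|\bigr)\|q_1-q_2\|_{X_T},
\]
together with a parallel estimate showing $\Phi(B_R) \subseteq B_R$. Since $C_T \to 0$ as $T \to 0$, picking $T$ small enough (and in any case with $T < T_*$) turns $\Phi$ into a strict contraction on $B_R$, whose unique fixed point $\tilde q$ sits in $X_T \subset C[0,T]$ and satisfies \eqref{eq:charge_compact} pointwise. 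The uniqueness part of Proposition \ref{pro:charge continuous} then forces $\tilde q \equiv q$ on $[0,T]$, so $q \in H^{1/2}(0,T)$.

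The principal obstacle is the source term: the bare boundary value $(U_0(\tau)\psi_0)(\yv)$ carries a $\log\tau$ singularity at the origin that lies outside $L^\infty$, and would block a direct application of Lemma \ref{contr_lemma}; the identity $I\mathcal{J} = 1$ is precisely the tool that converts this singularity into a harmless constant before the regularity estimates are triggered. A secondary---but crucial---design choice is to run the contraction in $C[0,T] \cap H^{1/2}$ rather than $L^\infty \cap H^{1/2}$, so that the resulting fixed point is automatically continuous and can be identified with the solution of Proposition \ref{pro:charge continuous}, without having to redo the uniqueness argument within a larger class.
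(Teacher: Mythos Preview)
Your approach is exactly the paper's: the same contraction space $C[0,T]\cap H^{1/2}(0,T)$, the same ball of radius $\sim 2\|f\|_{X_T}$, the same splitting of the forcing term into $A_1$, $A_{2,1}$, $A_{2,2}$, the same use of the identity $I\mathcal{J}\equiv 1$ to tame the logarithmic piece, and the same identification with the continuous solution of Proposition~\ref{pro:charge continuous}.

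There is, however, one step you have underspecified. To push $IA_1$ into $H^{1/2}(0,T)$ you invoke Lemma~\ref{contr_lemma}, but that lemma requires its input to lie in $L^\infty(0,T)\cap H^{1/2}(0,T)$, not merely $L^\infty$. Boundedness of $A_1$ (which is all that $\widehat{\phi_{\lambda,0}}\in L^1$ gives you) is therefore not enough; you must also verify $A_1\in H^{1/2}(0,T)$. The paper does this by a short Fourier computation: writing $A_1(\tau)=(2\pi)^{3/2}\widecheck{G_1}(-\tau)$ with $G_1(\varrho)=\one_{[0,\infty)}(\varrho)\langle\widehat{T_{-\yv}\phi_{\lambda,0}}\rangle(\sqrt{\varrho})$, one obtains
\[
\|A_1\|_{H^{1/2}(\R)}^2 \leq C\int_{\R^2}\dpv\,(1+p^4)^{1/2}\bigl|\widehat{\phi_{\lambda,0}}(\pv)\bigr|^2,
\]
which is finite because $\phi_{\lambda,0}\in H^1(\R^2)$. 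So the $H^1$ part of the form-domain hypothesis is what drives this step, not the $L^1$ condition on the Fourier transform. Once this is filled in, your argument is complete and coincides with the paper's.
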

	
	\begin{proof} The key observation is that, if one proves that the map
		\beq
  			\mathcal{G}(q)[t]=f(t)-\int_0^t\dtau\:\bigg(g(t,\tau,q(\tau))+\kappa\I(t-\tau)q(\tau)\bigg)
 		\eeq
 is a contraction in a suitable subset of $C[0,T]\cap H^{1/2}(0,T)$, for a sufficiently small $T\in(0,T^*)$, then \eqref{eq:charge_compact} has a unique solution in this subset. Hence such a solution must coincide with the unique continuous solution provided by Proposition \ref{pro:charge continuous}, which thus belongs to $ H^{1/2}(0,T) $.
 
 For fixed $0<T < T^*$, the first point is to investigate the Sobolev regularity of the forcing term $f$. We know that $4\pi(U_0(\tau)\psi_0)(\yv)=A_1(\tau)+A_2(\tau)$, with $A_i$ defined in \eqref{eq:Aunoduetre}. Concerning $A_1$, we write
\bmln{
	A_1(\tau) = 2 \int_{\R^2} \dpv \: e^{i\pv \cdot \yv} e^{-i p^2\tau} \widehat{\phi_{\lambda,0}}(\pv) = 2 \int_{\mathbb{R}^{2}} \dpv \: e^{-i p^2\tau} \lf(\widehat{ T_{-\yv} \phi_{\lambda,0}} \ri)(\pv) \\	
  	= 2\pi \int_{0}^{\infty} \drho \: e^{-i \varrho \tau} \mean{\widehat{ T_{-\yv} \phi_{\lambda,0} }}(\sqrt{\varrho}) = (2\pi)^{3/2}  \widecheck{G_1}(-\tau)
}
 where $T_{\yv}$ is the translation operator, i.e., $ (T_{\yv} \psi)(\xv) : = \psi(\xv - \yv) $,
 \[
  	G_1(\varrho) : = \one_{[0,+\infty)}(\varrho) \: \mean{\widehat{ T_{-\yv} \phi_{\lambda,0} }}(\sqrt{\varrho}),
 \]
 and $ \mean{f} $ denotes the angular average of a function on $ \R^2 $, i.e.,
 \[
  \mean{f}(\varrho) = \frac{1}{2\pi} \int_0^{2\pi} \diff \vartheta \: f(\varrho\cos\vartheta, \varrho\sin\vartheta).
 \]
 Consequently, one finds that 
\bmln{
	\lf\| A_1 \ri\|_{H^{\nu}(\R)}^2 =  (2\pi)^3 \int_\R\drho\: \lf(1+\varrho^2 \ri)^\nu \lf|G_1(-\varrho) \ri|^2  = 
	 16 \pi^3 \int_{0}^{\infty} \dpi \: \lf(1+p^4 \ri)^{\nu} p \lf| \mean{\widehat{T_{-\yv}  \phi_{\lambda,0} }}(p) \right|^{2} \\                                                    
	 \leq C \int_{\R^2} \dpv \: \lf(1+p^4\ri)^{\nu}\lf|  \lf( \widehat{T_{-\yv}  \phi_{\lambda,0} } \ri) (\pv) \right|^{2}
}
so that $A_1\in H^{1/2}(0,T)$, since $\phi_{\lambda,0}\in H^1(\R^2)$ by assumption. As $A_1$ is bounded too we have, by Lemma \ref{contr_lemma}, that $IA_1\in H^{1/2}(0,T)$. On the other hand, since $A_{2,2}$ is smooth, $IA_{2,2}$ is smooth as well and, as $IA_{2,1}=1+I a_{2,1}$ with $a_{2,1}\in H^1(0,T)$, we have that $IA_{2,1}\in H^{1/2}(0,T)$. Summing up, recalling \eqref{eq:mapf} and \eqref{eq:Aunoduetre}, we proved that $f\in H^{1/2}(0,T)$.
 
We introduce now the contraction space: let
 \[
  	\A_{T} = \lf\{q\in C[0,T]\cap H^{1/2}(0,T) \: \big| \: \lf\|q\ri\|_{L^\infty(0,T)}+\lf\|q\ri\|_{H^{1/2}(0,T)}\leq \mathrm{b}_{T} \ri\},
 \]
 with $\mathrm{b}_{T}=2\max\{\|f\|_{L^\infty(0,T)}+\|f\|_{H^{1/2}(0,T)},1\}$. The set $ \A_{T}$ is a complete metric space with the norm induced by $C[0,T]\cap H^{1/2}(0,T)$, i.e.,
 \[
  	\lf\| \cdot \ri\|_{\A_{T}}= \lf\|\cdot \ri\|_{L^\infty(0,T)}+ \lf\|\cdot \ri\|_{H^{1/2}(0,T)}.
 \]
 In order to prove that $ \G $ defines a contraction on $ \A_{T} $, we need to show that $ \G $ maps $ \A_T $  into  itself and that the contraction condition on the norms is satisfied.

 We start by proving that $\mathcal{G}(\A_{T})\subset \A_{T}$. Letting $q\in \A_{T}$, one immediately sees that $\mathcal{G}(q)[t]$ is continuous. Then, we split the homogenous part of $\mathcal{G}(q)[t]$ into two terms:
 \[
  \mathcal{G}_1(q)[t]=\int_0^t\dtau\;g(t,\tau,q(\tau)),\qquad \mathcal{G}_2(q)[t]=\kappa\int_0^t\dtau\;\I(t-\tau)q(\tau).
 \]
 From \eqref{eq:lip_inf} and \eqref{eq:lip_sob}, \eqref{eq:integral_i}, \eqref{eq:est_I_lim}, one obtains
 \bdm
  	\lf\|\mathcal{G}_1(q) \ri\|_{H^{1/2}(0,T)} \leq C\lf\|I\,|q|^{2\sigma}q \ri\|_{\A_{T}}\leq C_T \lf\| |q|^{2\sigma}q \ri\|_{\A_{T}} \leq C_T \mathrm{b}_{T}^{2\sigma} \lf\| q \ri\|_{\A_T} \leq C_T  \mathrm{b}_{T}^{2\sigma+1},
\edm
where, from now on, $C_T$ stands for a generic positive constant such that $C_T \to 0$, as $ T \to 0 $, and which may vary from line to line. In addition, using \eqref{eq:lip_inf} and \eqref{eq:est_I_inf}, one sees that
\bdm
  	\lf\|\mathcal{G}_1(q)\ri\|_{L^\infty(0,T)} \leq C \lf\|I\,|q|^{2\sigma}q \ri\|_{L^\infty(0,T)} \leq C_T  \lf\||q|^{2\sigma}q \ri\|_{L^\infty(0,T)} \leq C_T\mathrm{b}_{T}^{2\sigma+1},
\edm
 so that
 \begin{equation}
  \label{eq:stimaguno}
  \lf\|\mathcal{G}_1(q)\ri\|_{\A_{T}}\leq C_T\mathrm{b}_{T}^{2\sigma+1}.
 \end{equation}
 On the other hand, we find that $ \lf\|\mathcal{G}_2(q) \ri\|_{H^{1/2}(0,T)}\leq C_T \lf\|q \ri\|_{\A_{T}}\leq C_T \mathrm{b}_{T}$, while, from \eqref{eq:est_I_inf}, $ \lf\| \mathcal{G}_2(q) \ri\|_{L^\infty(0,T)} \leq C_T\|q\|_{L^\infty(0,T)} \leq C_T\mathrm{b}_{T}$. Thus, we have
 \bdm
	\lf\|\mathcal{G}_2(q) \ri\|_{\A_{T}} \leq C_T \lf\| q \ri\|_{\A_{T}}\leq C_T \mathrm{b}_{T}.
\edm
Putting it together with \eqref{eq:stimaguno}, we finally get
\[
  \lf\|\mathcal{G}(q) \ri\|_{\A_{T}} \leq\mathrm{b}_{T} \left[  \frac{1}{2}+ C_T \left(1+\mathrm{b}_{T}^{2\sigma}\right)\right].
 \]
 Consequently, as the term in brackets is equal to $\frac{1}{2}+o(1)$ as $T\to0$, for $T$ sufficiently small $\mathcal{G}(q)\in \A_{T}$.
 
 Therefore, it is left to prove that $\mathcal{G}$ is actually a norm contraction. Given two functions $q_1,\,q_2\in \A_{T}$, we have
 \[
  \mathcal{G}(q_1)-\mathcal{G}(q_2)=\mathcal{G}_1(q_1)-\mathcal{G}_1(q_2)+\mathcal{G}_2(q_1-q_2).
 \]
 Arguing as before, one sees that $ \lf\|\mathcal{G}_2(q_1-q_2) \ri\|_{\A_{T}}\leq C_T\lf\|q_1-q_2 \ri\|_{\A_{T}}$. On the other hand, using again \eqref{eq:est_I_inf} and Lemma \ref{lem-lip} and \ref{contr_lemma},
 \bdm
  	\displaystyle \left\|I\left(|q_1|^{2\sigma}q_1-|q_2|^{2\sigma}q_2\right)\right\|_{\A_{T}} \leq C_T \left\||q_1|^{2\sigma}q_1-|q_2|^{2\sigma}q_2\right\|_{\A_{T}} \leq C_T \mathrm{b}_{T}^{2\sigma} \lf\|q_1-q_2 \ri\|_{\A_{T}}.
\edm
 Then,
 \[
  	\lf\|\mathcal{G}(q_1)-\mathcal{G}(q_2) \ri\|_{\A_{T}} \leq C_T \big(1+\mathrm{b}_{T}^{2\sigma}\big) \lf\|q_1-q_2 \ri\|_{\A_{T}}.
 \]
 Hence, since $C_T\to0$ as $T\to0$ and $\mathrm{b}_{T}$ is bounded, $\mathcal{G}$ is a contraction on $\A_{T}$, provided that $T$ is small enough.
\end{proof}

	The contraction time $ T $ provided by Proposition \ref{pro:charge} is a priori shorter than the maximal existence time of a continuous solution $ T_* $ given by Proposition \ref{pro:charge continuous}. However, we can extend the Sobolev regularity of $ q(t) $ up to $ T_{*} $. {In order to do that, however, two further auxiliary results are required. The first one concerns the log-H\"older regularity of the solution of the charge equation.}
	
	\begin{lem}
		\label{lemma:qlogcont}
		\mbox{}	\\
		Let $q(t)$ be the solution of \eqref{eq:charge_eq} provided by Proposition \ref{pro:charge} and $ T_* $ the maximal existence time given in Proposition \ref{pro:charge continuous}, then
		\[
			q(t) \in \clog[0,T],	\qquad		\forall \beta \leq 1,
		\]
		for any $ T < T_* $.
	\end{lem}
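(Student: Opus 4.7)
The plan is to extract log-Hölder continuity directly from the charge equation \eqref{eq:charge_eq}, which may be recast as
\[
q(t) = f(t) + \int_0^t d\tau\, \I(t-\tau)\, h(q(\tau)),
\]
where $f$ is the forcing term \eqref{eq:mapf} and $h(q) = -4\pi\beta_0 |q|^{2\sigma}q + 2(\log 2 - \gamma + i\pi/4)q$. Fix $T < T_*$: by Proposition \ref{pro:charge continuous}, $q \in C[0,T]$ and in particular $M := \sup_{[0,T]} |h(q(\tau))| < \infty$. For $0 \leq s < t \leq T$ we then write
\[
q(t) - q(s) = [f(t) - f(s)] + \int_0^s d\tau\,[\I(t-\tau) - \I(s-\tau)] h(q(\tau)) + \int_s^t d\tau\, \I(t-\tau) h(q(\tau)).
\]
Since $C_{\log,\beta_1} \subset C_{\log,\beta_2}$ for $\beta_1 \geq \beta_2$, it suffices to show $q \in C_{\log,1}[0,T]$, and the goal is therefore to bound each of the three terms above by $C|\log(t-s)|^{-1}$ uniformly when $t-s$ is small.

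The last term is immediate: bounded in modulus by $M\,\NN(t-s)$, it is $O(|\log(t-s)|^{-1})$ by the asymptotics \eqref{eq:Nasympt}. For the middle term, setting $\epsilon = t - s$ and changing variable $\sigma = s - \tau$, we need to control $\int_0^s |\I(\sigma) - \I(\sigma+\epsilon)|\, d\sigma$. The plan is to split this integral at $\sigma = \epsilon$: on $(0,\epsilon)$ the triangle inequality and $\NN(2\epsilon) - \NN(\epsilon) \leq \NN(2\epsilon)$ give a contribution of order $|\log\epsilon|^{-1}$; on $(\epsilon,s)$ we use $\I(\sigma+\epsilon) - \I(\sigma) = \int_0^\epsilon \I'(\sigma+u)\,du$, combined with the near-origin estimate $|\I'(t)| \leq C/(t^2\log^2(1/t))$ derived from \eqref{eq:Iasympt}, yielding a contribution bounded by $C|\log\epsilon|^{-2}$.

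For the forcing term $f$, I invoke the decomposition used in the proof of Proposition \ref{pro:charge continuous}, namely $f = IA_1 + q(0)\,IA_{2,1} + q(0)\,IA_{2,2}$, with $A_1$ bounded and continuous (being the Fourier transform of an $L^1$ function), $IA_{2,1} = 1 + Ia_{2,1}$ with $a_{2,1}$ continuous and bounded on $[0,T]$, and $A_{2,2}$ smooth. For the first two pieces, the bound from \eqref{eq:dominated}, i.e.
\[
|If_0(t) - If_0(s)| \leq \|f_0\|_{L^\infty(0,T)}\Bigl(\NN(t-s) + \int_0^s |\I(\tau) - \I(\tau+(t-s))|\, d\tau\Bigr),
\]
applied to $f_0 \in \{A_1, a_{2,1}\}$, gives exactly the same estimate as above, of order $|\log(t-s)|^{-1}$; the smooth term $IA_{2,2}$ is Lipschitz. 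Adding up the three contributions to $q(t) - q(s)$ yields $|q(t) - q(s)| \leq C |\log(t-s)|^{-1}$ for $|t-s|$ sufficiently small, uniformly in $t \in [0,T]$, which is precisely the condition $q \in C_{\log,1}[0,T]$.

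The main technical obstacle is the estimate on $\int_\epsilon^s |\I'(\sigma+u)|\, d\sigma$: one must convert the pointwise asymptotic of $\I$ near the origin into a manageable bound on $\I'$ and integrate carefully to recover the $|\log\epsilon|^{-2}$ gain (rather than a mere $|\log\epsilon|^{-1}$), which would otherwise blow up after integrating in $u \in (0,\epsilon)$. All other pieces of the argument reduce to standard manipulations, but this one requires the precise order-of-magnitude analysis of $\I$ and $\NN$ collected in Section \ref{sec:preliminaries}.
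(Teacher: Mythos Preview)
Your overall structure matches the paper's proof exactly: rewrite $q$ via the charge equation, bound the increment $q(t)-q(s)$ by the three terms you list, control the last by $M\,\NN(t-s)$, and handle the forcing term $f$ via the decomposition $IA_1 + q(0)(IA_{2,1}+IA_{2,2})$ with $IA_{2,1}=1+Ia_{2,1}$. The reduction to $C_{\log,1}$ and the use of \eqref{eq:Nasympt} are also the same.

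The one substantive difference is how you treat the key integral $\int_0^s|\I(\sigma+\epsilon)-\I(\sigma)|\,d\sigma$. You split at $\sigma=\epsilon$ and on $(\epsilon,s)$ invoke a pointwise bound $|\I'(t)|\leq C/(t^2\log^2(1/t))$. The paper instead uses that $\I$ is strictly convex with a unique minimum $t_{\min}$ (established in \cite{CF}): this determines the sign of $\I(\sigma+\epsilon)-\I(\sigma)$ on each side of $t_{\min}$, so the absolute value drops out and the integral collapses to telescoping differences of $\NN$, i.e.\ quantities like $\NN(\delta)+\NN(t)-\NN(t+\delta)$, controlled by the log-H\"older continuity of $\NN$. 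Your route works too and even yields $|\log\epsilon|^{-2}$ on $(\epsilon,s)$, but be aware that the bound on $\I'$ cannot be ``derived from \eqref{eq:Iasympt}'': asymptotic expansions are not differentiable in general, so you would have to go back to the integral representation $\I(t)=\int_0^\infty t^{\tau-1}/\Gamma(\tau)\,d\tau$ and differentiate under the integral sign to justify it. The convexity argument sidesteps this issue entirely.
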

	
	\begin{proof}
		Fix $T<T_*$. We first remark that 
		the proof of $ f \in \clog[0,T] $ is equivalent to show that there exists $C>0$ for which
		\beq
			\label{eq:clogcondition}
			\lim_{\delta \to 0} |\log \delta|^{\beta} \lf| f(s+\delta) - f(s) \ri| \leq C < +\infty,
		\eeq
		for any $ s \in [0,T] $ (where at the extreme points the limit has to be suitably adjusted).
		
		From the charge equation \eqref{eq:charge_eq}, we get
		\bmln{
			q(t) = - 4\pi\beta_0\int_0^t\dtau\:\I(t-\tau)|q(\tau)|^{2\sigma}q(\tau)	\\
 			+2\left(\log 2-\gamma+i\tfrac{\pi}{4}\right)\int_0^t\dtau\:\I(t-\tau)q(\tau) + 4\pi\int_0^t\dtau\:\I(t-\tau)(U_0(\tau)\psi_0)(\yv),
		}
		i.e., $ q(t) = I_1(t) + I_2(t) + I_3(t) $ (with obvious meaning of the three terms).
		 
		Let us consider first $ I_1(t) $. The case $t=0$, $\delta>0$ is easier to deal with: since $q(t) $ is bounded on $[0,T]$,
		\[
			\lf| I_1(\delta)-I_1(0) \ri| \leq C \lf\| q \ri\|^{2\sigma + 1}_{L^\infty(0,T)}\int_0^\delta\dtau\:\I(\delta-\tau) \leq C \NN(\delta)\underset{\delta \to 0}{\sim}\frac{C}{|\log\delta|}
		\]
		where we recall the definition of $\NN$ given by \eqref{eq:N} and its asymptotic behavior in \eqref{eq:Nasympt}. On the other hand, if we consider the case $t\in(0,T)$, $\delta>0$ ($\delta<0$ is analogous), then we see that
		\bmln{
		 	I_1(t+\delta) - I_1(t) = C\int_t^{t+\delta} \diff \tau \: \I(t + \delta - \tau) |q(\tau)|^{2\sigma}q(\tau) \\
		 	+  C\int_0^t \diff \tau \: \lf[ \I(t + \delta - \tau) - \I(t- \tau) \ri] |q(\tau)|^{2\sigma_j}q(\tau) : = C\big(I_{1,1}(\delta,t) + I_{1,2}(\delta,t)\big).
		}
		Now, arguing as before, one easily finds that $I_{1,1}(\delta,t) \sim \frac{1}{|\log\delta|}$ as $ \delta \to 0 $ (independently of $t$). Furthermore, again by the boundedness of $ q $, one has
		\begin{equation}
		 	\label{eq:Ilog_aux}
		 	\lf| I_{1,2}(\delta,t) \ri| \leq C \int_0^t\dtau\: \lf|\I(\tau+\delta)-\I(\tau) \ri|.
		\end{equation}
		Since $\I$ is continuous, coercive and strictly convex \cite{CF,H}, it has a unique minimum point $t_{\mathrm{min}}>0$. If $t+\delta \leq t_{\mathrm{min}}$, then
		\[
		 	\int_0^t\dtau\:|\I(\tau+\delta)-\I(\tau)|=\NN(\delta)+\NN(t)-\NN(t+\delta)\underset{\delta \to 0}{\sim}\frac{1}{|\log\delta|}
		\]
		independently of $t$, by the log-H\"older continuity of $\NN$. Thus, combining with \eqref{eq:Ilog_aux}, one has $ I_{1,2}(\delta,t) \sim \frac{1}{|\log\delta|} $ as $  \delta \to 0 $. If, on the opposite, $t \geq t_{\mathrm{min}}$ (the case $ t < t_{\mathrm{min}} < t + \delta $ can be excluded for $ \delta $ small enough), then
		\bmln{
		 	\int_0^t\dtau\:|\I(\tau+\delta)-\I(\tau)| =  \int_0^{t_{\mathrm{min}}-\delta}\dtau\:\lf(\I(\tau) - \I(\tau+\delta) \ri) + \int_{t_{\mathrm{min}}-\delta}^{t_{\mathrm{min}}}\dtau\:  \lf|\I(\tau+\delta)-\I(\tau) \ri| \\
		 	+\int_{t_{\mathrm{min}}}^t \dtau\: \lf( \I(\tau+\delta)-\I(\tau) \ri) \leq  \NN(t_{\mathrm{min}}-\delta) - \NN(t_{\mathrm{min}}) + \NN(\delta) \\
		 	+ \NN(t + \delta) - \NN(t) - \NN(t_{\mathrm{min}}+\delta) + \NN(t_{\mathrm{min}}) + C \delta \leq  \NN(t + \delta) - \NN(t) + \NN(\delta) + C \delta
		}
		and, arguing as before, we obtain $I_{1,2}(\delta,t) \sim \frac{1}{|\log\delta|} $, as $ \delta  \to 0 $.
		
		Therefore, it is left to investigate the behavior of $I_2(t)$ and $I_3(t)$. First, one can easily see that $I_2(t)$ can be studied in the same way as $I_1(t)$. On the contrary, $I_3(t)$ requires some further remark, since $(U_0(\tau)\psi_0)(\yv)$ is not bounded on $[0,T]$. However, from \eqref{eq:Aunoduetre} it can be split into the sum of two terms  $A_1$ and $A_2$. The first one is bounded and hence it is possible to use the previous strategy to prove that $IA_1$ have the needed property. Concerning $A_2$, arguing as in the proof of Proposition \ref{pro:charge continuous}, one sees that it can be split, in turn, in two terms $A_{2,1}$ and $A_{2,2}$, where the second one is bounded and the first one satisfies the following property
		\beq
			\label{eq:log cancellation}
			\int_0^t\dtau\:\I(t-\tau)A_{2,1}(\tau)=1+\int_0^t\dtau\:\I(t-\tau)a_{2,1}(\tau)
		\eeq
		with $a_{2,1}(\tau)$ bounded. Consequently, $IA_2$ can be bounded exactly as above.
	\end{proof}
	
	{The second auxiliary result is a slight modification of Lemma \ref{contr_lemma_inf} and Lemma \ref{contr_lemma}.}
	
	{\begin{lem}
	\label{lem:contr_further}
	\mbox{}	\\
	Let $T>0$ and $h\in C[0,T]\cap H^{1/2}(0,T)$. Then
	$$
	 	\h(t):=\int_0^T\dtau\:\I(t+T-\tau)h(\tau)
	$$
	 belongs to $C[0,\T]\cap H^{1/2}(0,\T)$ for any $\T>0$.
	\end{lem}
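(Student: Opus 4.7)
The plan is to mimic the proofs of Lemmas \ref{contr_lemma_inf} and \ref{contr_lemma}, exploiting a key structural difference: unlike the operator $I$, here the $\tau$-integration is over the fixed interval $[0,T]$ while the argument of $\I$ always lies in $[t,t+T]$. Consequently the only potentially problematic point is $t=0$ paired with $\tau=T$, while for every $\T>0$ the function $\I$ is bounded on $[T,T+2\T]$, which will kill one of the three terms appearing in the Gagliardo seminorm estimate.

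For the continuity statement $\h\in C[0,\T]$, I would copy the strategy of Lemma \ref{contr_lemma_inf} by bounding
\[
|\h(t)-\h(t_0)|\leq \|h\|_{L^\infty(0,T)}\int_0^T\bigl|\I(t+T-\tau)-\I(t_0+T-\tau)\bigr|\diff\tau
\]
and applying dominated convergence: the integrand tends pointwise to zero as $t\to t_0$, while the domination $|\I(t+T-\tau)-\I(t_0+T-\tau)|\leq \I(t+T-\tau)+\I(t_0+T-\tau)\in L^1(0,T)$ holds uniformly for $t$ close to $t_0$ thanks to the local integrability of $\I$ on $\R^+$.

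For the Sobolev regularity, the $L^2$ bound is immediate from $|\h(t)|\leq \|h\|_{L^\infty(0,T)}\NN(T+\T)$. The real work is the Gagliardo seminorm: after the substitution $g(w):=h(T-w)\in C[0,T]\cap H^{1/2}(0,T)$ one rewrites $\h(t)=\int_0^T\I(w+t)g(w)\diff w$; then, for $t>s$ and $\delta:=t-s$, the shift $v=w+\delta$ in the first of $\int_0^T\I(w+s+\delta)g(w)\diff w-\int_0^T\I(w+s)g(w)\diff w$ produces the decomposition $\h(t)-\h(s)=A(\delta,s)+B(\delta,s)-C(\delta,s)$, where
\begin{align*}
A(\delta,s) &:= \int_\delta^T \I(v+s)\,[g(v-\delta)-g(v)]\diff v,\\
B(\delta,s) &:= \int_T^{T+\delta}\I(v+s)\,g(v-\delta)\diff v,\\
C(\delta,s) &:= \int_0^\delta \I(v+s)\,g(v)\diff v,
\end{align*}
the natural analogue of the splitting used in the proof of Lemma \ref{contr_lemma}.

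Each piece is then estimated separately. The term $B$ is harmless because $v+s\in[T,T+2\T]$, where $\I$ is bounded, so $|B|\leq C\delta$ and its contribution to the double integral defining $[\h]^2_{\dot H^{1/2}(0,\T)}$ is trivially finite. For $C$, the pointwise bound $|C|\leq\|g\|_{L^\infty(0,T)}[\NN(s+\delta)-\NN(s)]$ reduces its contribution to a multiple of $\|g\|_{L^\infty(0,T)}^2[\NN]^2_{\dot H^{1/2}(0,\T)}$, finite by Lemma \ref{lem:N}. The most delicate term is $A$: applying Cauchy--Schwarz with respect to the positive measure $\I(v+s)\diff v$ and using $\int_\delta^T\I(v+s)\diff v\leq \NN(T+\T)$ yields
\[
|A(\delta,s)|^2\leq \NN(T+\T)\int_\delta^T \I(v+s)\,|g(v-\delta)-g(v)|^2\diff v,
\]
and then swapping the orders of integration in $(\delta,v)$ and integrating in $s$ via $\int_0^\T\I(v+s)\diff s\leq \NN(T+\T)$ leads to a bound of the form $[\NN(T+\T)]^2[g]^2_{\dot H^{1/2}(0,T)}$, which is finite by assumption. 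The main obstacle will be the bookkeeping for these Fubini interchanges together with the careful treatment of the case $\delta$ close to $T$ (which however is irrelevant for the $H^{1/2}$ regularity since $\h$ is smooth in $t$ on any interval bounded away from zero); conceptually, the whole argument is a localization of the proof of Lemma \ref{contr_lemma}, the additional simplification coming from $\I$ being bounded away from its singularity on $[T,T+2\T]$.
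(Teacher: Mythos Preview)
Your approach is correct and essentially identical to the paper's: after the substitution $g(w)=h(T-w)$, your three-term decomposition $A+B-C$ is exactly the splitting \eqref{eq:division} rewritten in the variable $v=\tau-s$, and your estimates for $A$, $B$, $C$ mirror the paper's treatment of $B_3$, $B_2$, $B_1$ respectively (with the harmless variant that you bound $B$ via $\|\I\|_{L^\infty(T,T+2\T)}$ rather than via $[\NN(\cdot+T)]_{\dot H^{1/2}}$). The only point to tighten is the regime $\delta>T$ (which occurs when $\T>T$): the correct disposal is not smoothness of $\h$ away from zero but simply the bound $1/\delta^2\le 1/T^2$ combined with the boundedness of $\h$ on $[0,\T]$, which is exactly how the paper handles its term $D_2$ after splitting into the cases $\T\le T$ and $\T>T$.
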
}
	
	{\begin{rem}
	 The key point in Lemma above is that, unlike in Lemma \ref{contr_lemma_inf} and Lemma \ref{contr_lemma}, the integration kernel do not present singularities, being shifted of $T>0$. In addition, in Lemma \ref{lem:contr_further} only the preservation of the regularity is investigated and not the contractive properties.
	\end{rem}}

	\begin{proof}
	 One can easily see that $\h\in L^2(0,\T)$ for all $\T>0$. In addition, simply repeating the argument of Lemma \ref{contr_lemma_inf}, one finds that $\h\in C[0,\T]$.
	 
	 Then, it is left to prove that
	 \begin{equation}
	 \label{eq:newt}
	 \tfrac{1}{2}\big[\h\big]_{\dot{H}^{1/2}(0,\T)}^2 =\int_0^{\T}\dt\int_0^t\ds\:\frac{\big|\h(t)-\h(s)\big|^2}{|t-s|^2}<\infty.
	 \end{equation}
	Preliminarily, we note that $\h$ can be rewritten as
	 \[
	 \h(t)=\int_t^{t+T}\dtau\:\I(\tau)h(t+T-\tau)
	 \]
	and hence that, when $s\leq t\leq s+T$,
	\bml{
	  \label{eq:division}
	 \h(t)-\h(s)= -\int_s^t\dtau\:\I(\tau)h(s+T-\tau)+\int_{s+T}^{t+T}\dtau\:\I(\tau)h(t+T-\tau)	\\
	                      +\int_t^{s+T}\dtau\I(\tau)\big[h(t+T-\tau)-h(s+T-\tau)\big].
	 }
	 
	Assume, first, that $\T\leq T$ (whence $t\leq s+T$). By \eqref{eq:division}
	 \bmln{
	 \tfrac{1}{2}[\h]_{\dot{H}^{1/2}(0,\T)}^2\leq C\bigg[ \underbrace{ \int_0^{\T}\dt\int_0^t\ds\:\frac{1}{|t-s|^2}\bigg|\int_s^t\dtau\:\I(\tau)h(s+T-\tau)\bigg|^2}_{:=B_1}	\\
	                                                        + \underbrace{\int_0^{\T}\dt\int_0^t\ds\:\frac{1}{|t-s|^2}\bigg|\int_{s+T}^{t+T}\dtau\:\I(\tau)h(t+T-\tau)\bigg|^2}_{:=B_2}	\\
	                                                      +\underbrace{ \int_0^{\T}\dt\int_0^t\ds\:\frac{1}{|t-s|^2}\bigg|\int_{t}^{t+T}\dtau\:\I(\tau)\big[h(t+T-\tau)-h(s+T-\tau)\big]\bigg|^2}_{:=B_3}\bigg].
	 }
	 Now, one can easily see that
	 \[
	  B_1+B_2\leq\|h\|_{L^\infty(0,T)}^2 \lf( [\NN(\: \cdot \; ) ]_{\dot{H}^{1/2}(0,\T)}^2 + [\NN( \: \cdot + T)]_{\dot{H}^{1/2}(0,\T)}^2 \ri)<\infty.
	 \]
	 On the other hand, by Jensen inequality and monotonicity of $\NN$,
	 \[
	  B_3\leq\NN(T+\T)\int_0^{\T}\dt\int_0^t\ds\int_t^{s+T}\dtau\:\I(\tau)\frac{|h(t+T-\tau)-h(s+T-\tau)|^2}{|t-s|^2}.
	 \]
	Consequently, splitting the integral in $\tau$ in the two domains $[t,T]$ and $[T,T+s]$ and using twice (for each of the two terms) Fubini theorem, e.g., as in
	 \bdm
	 	\int_0^t \diff s \int_T^{s+T} \diff \tau \: F(s,\tau) = \int_T^{t+T} \diff \tau \int_0^{\tau - T} \diff s \:  F(s,\tau),
	\edm
	 one obtains that
	 \[
	 B_3\leq2\NN^2(2T)[h]_{\dot{H}^{1/2}(0,T)}^2<\infty.
	 \]
	Thus we proved that, if $\T\leq T$, then $\h\in H^{1/2}(0,\T)$.
	 
	 Consider the opposite case when $\T>T$. Here
	 \[
	  \tfrac{1}{2}[\h]_{\dot{H}^{1/2}(0,\T)}^2=\int_0^T\dt\int_0^t\ds\:\frac{|\h(t)-\h(s)|^2}{|t-s|^2}+\int_T^{\T}\dt\int_0^t\ds\:\frac{|\h(t)-\h(s)|^2}{|t-s|^2}.
	 \]
	 First, one observes that the first term above can be managed exactly as in the case $\T\leq T$ and therefore is finite. The second one on the other hand can be split as
	 \[
	  \underbrace{\int_T^{\T}\dt\int_{t-T}^t\ds\:\frac{|\h(t)-\h(s)|^2}{|t-s|^2}}_{:=D_1}+\underbrace{\int_T^{\T}\dt\int_0^{t-T}\ds\:\frac{|\h(t)-\h(s)|^2}{|t-s|^2}}_{:=D_2}.
	 \]
	 The estimate of $D_2$ is immediate since $t-s \geq T$, so that
	 \[
	 D_2\leq\frac{1}{T^2}\int_T^{\T}\dt\int_0^{t-T}\ds\:|\h(t)-\h(s)|^2<\infty.
	 \]
	On the other hand, recalling \eqref{eq:division},
	 \bmln{
	  D_1\leq C\bigg[\underbrace{\int_T^{\T}\dt\int_{t-T}^t\ds\:\frac{1}{|t-s|^2}\bigg|\int_s^t\dtau\:\I(\tau)h(s+T-\tau)\bigg|^2}_{:=D_{1,1}}	\\
	                  + \underbrace{\int_T^{\T}\dt\int_{t-T}^t\ds\:\frac{1}{|t-s|^2}\bigg|\int_{s+T}^{t+T}\dtau\:\I(\tau)h(t+T-\tau)\bigg|^2}_{:=D_{1,2}}	\\
	                  +\underbrace{\int_T^{\T}\dt\int_{t-T}^t\ds\:\frac{1}{|t-s|^2}\bigg|\int_{t}^{t+T}\dtau\:\I(\tau)\big[h(t+T-\tau)-h(s+T-\tau)\big]\bigg|^2}_{:=D_{1,3}}\bigg].
	 }
	Now, arguing as in the first part of the proof, namely exploiting the regularity of $\NN$, the assumptions on $h$, Jensen inequality and Fubini theorem, one can check that
	 \[
	 D_{1,1}+D_{1,2}\leq \|h\|_{L^\infty(0,T)}^2[\NN]_{\dot{H}^{1/2}(0,T+\T)}^2<\infty
	 \]
	 and
	 \[
	  D_{1,3}\leq C\NN^2(T+\T)[h]_{\dot{H}^{1/2}(0,T)}^2<\infty,
	 \]
	 which, then, concludes the proof.
	\end{proof}
	
	\begin{pro}[Regularity extension of $ q(t) $]
		\label{pro:extension q}
		\mbox{}	\\
		Let $q(t)$ be the solution of \eqref{eq:charge_eq} provided by Proposition \ref{pro:charge} and $ T_* $ the maximal existence time given in Proposition \ref{pro:charge continuous}, 
		then $ q(t) \in H^{1/2}(0,T) $ for any $ T < T_* $.
	\end{pro}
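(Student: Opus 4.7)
The strategy is to bootstrap the short-time $H^{1/2}$ regularity of Proposition \ref{pro:charge} up to any $T < T_*$ via a shifted contraction argument plus a gluing step based on the log-H\"older continuity. Fix $T < T_*$, set
\[
 T_0 := \sup\lf\{ t \in [0, T] \: : \: q \in H^{1/2}(0, t) \ri\} \geq T_c > 0,
\]
where $T_c$ is the contraction time of Proposition \ref{pro:charge}, and argue by contradiction assuming $T_0 < T$. Pick $T' \in (T_c, T_0)$, so that $q \in H^{1/2}(0, T')$, set $\tilde q(s) := q(T' + s)$, and split $\int_0^{T' + s} = \int_0^{T'} + \int_{T'}^{T' + s}$ in \eqref{eq:charge_eq}: this produces a Volterra equation for $\tilde q$ of exactly the same form as \eqref{eq:charge_eq} on $[0, \delta]$, with shifted source
\[
 \tilde f(s) := f(T' + s) - 4\pi \beta_0 \int_0^{T'} \dtau \: \I(s + T' - \tau) |q(\tau)|^{2\sigma} q(\tau) - \kappa \int_0^{T'} \dtau \: \I(s + T' - \tau) q(\tau).
\]

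The first step is to verify that $\tilde f \in C[0, \delta] \cap H^{1/2}(0, \delta)$ for any $\delta > 0$. The translate $f(T' + \cdot)$ lies in $H^{1/2}(0, \delta)$, since the proof of Proposition \ref{pro:charge} actually establishes $f \in H^{1/2}(0, \tau)$ for every finite $\tau$. For the two correction terms, Lemma \ref{lem:contr_further} applies with $h = q$ and $h = |q|^{2\sigma} q$ on $[0, T']$: both are in $C[0, T'] \cap H^{1/2}(0, T')$ (for the nonlinear composition, apply Lemma \ref{lem-lip} with $g \equiv 0$), so both contributions are in $C[0, \delta] \cap H^{1/2}(0, \delta)$ for every $\delta > 0$. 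Applying then verbatim the contraction argument of Proposition \ref{pro:charge} to the shifted equation produces a unique $\tilde q \in C[0, \delta] \cap H^{1/2}(0, \delta)$ on a sufficiently small interval (with $T' + \delta < T_*$). By uniqueness of continuous solutions (Proposition \ref{pro:charge continuous}), $\tilde q(s) = q(T' + s)$, and hence $q \in H^{1/2}(T', T' + \delta)$.

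The main technical obstacle is the gluing step: $q \in H^{1/2}(0, T')$ combined with $q \in H^{1/2}(T', T' + \delta)$ does not immediately yield $q \in H^{1/2}(0, T' + \delta)$, since the decomposition
\[
 \lf[q\ri]^2_{\dot{H}^{1/2}(0, T' + \delta)} = \lf[q\ri]^2_{\dot{H}^{1/2}(0, T')} + \lf[q\ri]^2_{\dot{H}^{1/2}(T', T' + \delta)} + 2 \int_0^{T'} \dt \int_{T'}^{T' + \delta} \ds \: \frac{|q(t) - q(s)|^2}{(s-t)^2}
\]
contains a cross term with a diagonal singularity at $s = t = T'$ that is not controlled by boundedness alone. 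Here Lemma \ref{lemma:qlogcont} with $\beta = 1 > \tfrac{1}{2}$ is decisive: one has $|q(r) - q(T')| \leq C |\log|r - T'||^{-1}$ in a neighborhood of $T'$, and after the bound $|q(t) - q(s)|^2 \leq 2 |q(t) - q(T')|^2 + 2 |q(s) - q(T')|^2$ and the change of variables $u = T' - t$, $v = s - T'$, the singular part of the cross term reduces to integrals of the form $\int_0^\epsilon \frac{du}{u |\log u|^2} < \infty$. This gives $q \in H^{1/2}(0, T' + \delta)$; choosing $T'$ sufficiently close to $T_0$, with $\delta$ uniformly bounded below thanks to the a priori $L^\infty$-bound on $q$ on $[0, T]$ (which governs the size of $\tilde f$ entering the contraction), we obtain $T' + \delta > T_0$, contradicting the definition of $T_0$. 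Therefore $T_0 = T$ and $q \in H^{1/2}(0, T)$.
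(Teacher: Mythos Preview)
Your overall scheme (shifted charge equation, Lemma~\ref{lem:contr_further} for the new source, contraction, then gluing via the log-H\"older regularity of Lemma~\ref{lemma:qlogcont}) is exactly the one the paper uses, and your treatment of the cross term in the gluing step is correct. The gap is in the very last sentence, where you assert that the contraction time $\delta$ is uniformly bounded below as $T'\uparrow T_0$ ``thanks to the a priori $L^\infty$-bound on $q$ on $[0,T]$ (which governs the size of $\tilde f$ entering the contraction)''. This claim is not justified: the contraction argument of Proposition~\ref{pro:charge} fixes $\delta$ through the quantity $\mathrm{b}_\delta = 2\max\{\|\tilde f\|_{L^\infty(0,\delta)}+\|\tilde f\|_{H^{1/2}(0,\delta)},1\}$, and it is the $H^{1/2}$ part of $\tilde f$ that is problematic. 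By Lemma~\ref{lem:contr_further} (see the term $B_3$ in its proof), the $H^{1/2}(0,\delta)$ norm of the correction integrals in $\tilde f$ is controlled in terms of $[h]_{\dot H^{1/2}(0,T')}$ with $h=4\pi\beta_0|q|^{2\sigma}q+\kappa q$, hence in terms of $[q]_{\dot H^{1/2}(0,T')}$. You have no uniform bound on this quantity as $T'\to T_0$ --- that is precisely what you are trying to prove --- so $\mathrm{b}_\delta$ may blow up and force $\delta\to 0$. The paper flags exactly this obstruction (``A priori this procedure could stop before $T_*$, e.g., because $T_1'$ tends to $0$'').

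The paper closes the argument differently: rather than claiming a uniform $\delta$, it sets $\widehat T:=\sup\{T>0:q\in H^{1/2}(0,T)\}$, assumes $\widehat T<T_*$, and proves directly that $q\in H^{1/2}(0,\widehat T)$ by an absorption estimate. Namely, one fixes $T_\varepsilon=\widehat T-\varepsilon$ with $\varepsilon$ small enough that $C\,\NN(\widehat T-T_\varepsilon)(\|q\|_{L^\infty(0,\widehat T)}^{2\sigma}+1)<\tfrac12$, writes the charge equation on $(T_\varepsilon,T_\delta)$ for $T_\delta<\widehat T$, and bounds $\|q\|_{H^{1/2}(T_\varepsilon,T_\delta)}$ by a constant depending only on $\widehat T$ and $T_\varepsilon$ plus $\tfrac12\|q\|_{H^{1/2}(T_\varepsilon,T_\delta)}$; absorbing the latter and letting $T_\delta\to\widehat T$ gives $q\in H^{1/2}(T_\varepsilon,\widehat T)$, and then the log-H\"older gluing gives $q\in H^{1/2}(0,\widehat T)$. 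At this point your shifted contraction \emph{can} be run from $T'=\widehat T$ (now $\|\tilde f\|_{H^{1/2}}$ is finite), producing regularity beyond $\widehat T$ and the desired contradiction. To fix your proof you need to supply this (or an equivalent) uniform-in-$T'$ control of $\|q\|_{H^{1/2}}$; the $L^\infty$ bound alone does not do it.
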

		
	\begin{proof}
		{Let $q$ be the solution of \eqref{eq:charge_eq} provided by Proposition \ref{pro:charge continuous}. First, by Proposition \ref{pro:charge}, there exists $T_1\in(0,T_*)$ such that $q\in H^{1/2}(0,T_1)$.}
	
	{Now consider the equation
	\begin{equation}
	 \label{eq:charge_new}
	q_1(t) + \int_0^t \diff \tau \: \bigg(g(t,\tau,q_1(\tau))+\kappa\I(t-\tau) \: q_1(\tau)\bigg) = f_1(t),
	\end{equation}
	where
	\[
	 f_1(t):=f(t+T_1)-4\pi\beta_0\int_0^{T_1}\dtau\:\I(t+T_1-\tau)|q (\tau)|^{2\sigma}q (\tau)-\kappa\int_0^{T_1}\dtau\:\I(t+T_1-\tau)q (\tau).
	\]
	From the regularity properties of $f$ established in Propositions \ref{pro:charge continuous} and \ref{pro:charge} and exploiting Lemma \ref{lem:contr_further} with $T=T_1$ and $h=4\pi\beta_0|q|^{2\sigma}q+\kappa q$, one can see that $f_1\in C[0,T]\cap H^{1/2}(0,T)$ for every $T<T_*-T_1$ (recall that $|q|^{2\sigma}q\in C[0,T_1]\cap H^{1/2}(0,T_1)$ by Lemma \ref{lem-lip}). Consequently, arguing as in the proofs of Propositions \ref{pro:charge continuous} and \ref{pro:charge}, there exists $T_1'<T_*-T_1$ and $q_1\in C[0,T_1']\cap H^{1/2}(0,T_1')$ which solves \eqref{eq:charge_new}. In addition, an easy computations shows that $q(t)=q_1(t-T_1)$ for every $t\in[T_1,T_1+T_1']$, so that we have found a solution to the charge equation such that $q\in H^{1/2}(0,T_1)$ and $q\in H^{1/2}(T_1,T_1+T_1')$. In principle, this would not be sufficient in order to claim that $q\in H^{1/2}(0,T_1+T_1')$ due to the non-locality of the $H^{1/2}$-norm and the failure of the Hardy inequality (as explained before Proposition \ref{pro:extension}). However, thanks to Lemma \ref{lemma:qlogcont}, we know a priori that $q\in C_{\log,1}[0,T]$ for all $T<T_*$, so that one can argue as in the proof of Proposition \ref{pro:extension} and obtain that actually $q\in H^{1/2}(0,T_1+T_1')$.}
	
	{This shows that once the regularity is proven up to a time $ T_1 \in (0, T_{*}) $, then it can be extended up to $ T_1 + T_1' < T_{*} $. A priori this procedure could stop before $ T_* $, e.g., because $ T_1' $ tends to $ 0 $ as we get closer and closet to $ T_{*}$. We can prove however that this is not the case.	Define $\widehat{T}:\sup\{T>0:q\in H^{1/2}(0,T)\}$, which is strictly positive by Proposition \ref{pro:charge}. In order to conclude, we must prove that $\widehat{T}=T_*$. Assume, then, by contradiction that $\widehat{T}<T_*$. Consequently, $q\in H^{1/2}(0,T)$ for every $T<\widehat{T}$ and $\|q\|_{L^\infty(0,\widehat{T})} < +\infty$. In addition, fix $\ep>0$ such that $\NN(\widehat{T}-T_\ep)\big(\|q\|_{L^\infty(0,\widehat{T})}^{2\sigma}+1\big)<1/2C$, where $T_\ep:=\widehat{T}-\ep$ and $C$ is a fixed constant that will be specified in the following, and $0<\delta<\ep$, so that $T_\delta:=\widehat{T}-\delta\in(T_\ep,\widehat{T})$.}
	
	{At this point we can estimate $\|q\|_{H^{1/2}(T_\ep,T_\delta)}$ by using \eqref{eq:charge_compact}. First we note that (let $h$ be defined as before) for $t\in(T_\ep,T_\delta)$
	\[
	 q(t)=f(t)-\int_0^{T_\ep}\dtau\:\I(t-\tau)h(\tau)-\int_{T_\ep}^t\dtau\:\I(t-\tau)h(\tau).
	\]
	Since $f\in H^{1/2}(0,T)$ for every $T>0$ (see the proof of Proposition \ref{pro:charge}), its $H^{1/2}(T_\ep,T_\delta)$-norm can be easily estimated independently of $\delta$. The same can be proved for the second term, arguing as in the proof of Lemma \ref{lem:contr_further} and noting that $\I(t-\tau)=\I(t'+T_\ep-\tau)$ with $t'\in[0,T_\delta-T_\ep]$. Summing up, 
	\begin{equation}
	 \label{eq:auxxxx}
	 \lf\|q\ri\|_{H^{1/2}(T_\ep,T_\delta)}\leq C_{\widehat{T},T_\ep}+\bigg\|\int_{T_\ep}^{(\cdot)}\dtau\:\I(\cdot-\tau)h(\tau)\bigg\|_{H^{1/2}(T_\ep,T_\delta)}
	\end{equation}
	(precisely, $C_{\widehat{T},T_\ep}$ depends only on $\|q\|_{L^\infty(0,\widehat{T})}$ and $\|q\|_{H^{1/2}(0,T_\ep)}$, which are finite quantities). Therefore, we have to estimate the last term on the r.h.s.. Since the $L^2$ norm can be easily estimated independently of $\delta$, it suffices to consider the seminorm of such a term: for any $T_\ep<s<t<T_\delta$, 
	 \bmln{
	  \displaystyle \int_{T_\ep}^t\dtau\:\I(t-\tau)h(\tau)-\int_{T_\ep}^s\dtau\:\I(s-\tau)h(\tau)  \\
	  \displaystyle =\int_{s-T_\ep}^{t-T_\ep}\dtau\:\I(\tau)h(t-\tau)+\int_0^{s-T_\ep}\dtau\:I(\tau)[h(t-\tau)-h(s-\tau)]
	}
	and hence, arguing exactly as in the proof of Proposition \ref{contr_lemma}, we can find that
	\[
	\left[\int_{T_\ep}^{(\cdot)}\dtau\:\I(\cdot-\tau)h(\tau)\right]_{\dot{H}^{1/2}(T_\ep,T_\delta)}\leq C \lf(\|h\|_{L^\infty(0,\widehat{T})}[\NN]_{\dot{H}^{1/2}(0,\widehat{T}-T_\ep)}+[h]_{\dot{H}^{1/2}(T_\ep,T_\delta)}\NN(\widehat{T}-T_\ep) \ri).
	\]
	Now, we can note that
	\[
	\|h\|_{L^\infty(0,\widehat{T})}\leq C\big(\|q\|_{L^\infty(0,\widehat{T})}^{2\sigma}+1\big)\|q\|_{L^\infty(0,\widehat{T})}
	\]
	and, using \eqref{eq:help}, that
	\[
	[h]_{\dot{H}^{1/2}(T_\ep,T_\delta)}\leq C\big(\|q\|_{L^\infty(0,\widehat{T})}^{2\sigma}+1\big)\|q\|_{H^{1/2}(T_\ep,T_\delta)}.
	\]
	Consequently, recalling \eqref{eq:auxxxx} and the definition of $\ep$ (and possibly redefining $C_{\widehat{T},T_\ep}$)
	\[
	 \lf\|q\ri\|_{H^{1/2}(T_\ep,T_\delta)}\leq C_{\widehat{T},T_\ep}+C\NN(T-T_\ep)\big(\|q\|_{L^\infty(0,\widehat{T})}^{2\sigma}+1\big)\lf\|q\ri\|_{H^{1/2}(T_\ep,T_\delta)}\leq C_{\widehat{T},T_\ep}+ \tx\frac{1}{2} \lf\|q\ri\|_{H^{1/2}(T_\ep,T_\delta)}.
	\]
	Hence, moving the last term to the l.h.s., we see that $\lf\|q\ri\|_{H^{1/2}(T_\ep,T_\delta)}$ can be estimated independently of $\delta$ and thus, letting $\delta \to 0$, there results that $\lf\|q\ri\|_{H^{1/2}(T_\ep,\widehat{T})}<\infty$. Summing up, we have that $q\in H^{1/2}(0,T_\ep)$ and $q\in H^{1/2}(T_\ep,\widehat{T})$ and, by log-H\"older continuity this means that $q\in H^{1/2}(0,\widehat{T})$. However, now one can use the first part of the proof with $T_1=\widehat{T}$ thus proving that there exists the possibility of a contraction argument beyond $\widehat{T}$, but this contradicts the definition of $\widehat{T}$, so that we proved that $\widehat{T}=T_*$.}
	\end{proof}

We can now prove Theorem \ref{teo:local}, since the existence and uniqueness of the charge $ q(t) $ will imply that the ansatz \eqref{eq:ansatz} is a solution to the weak Cauchy problem \eqref{eq:cauchyweak}. In order to see that and make the derivation of the charge equation discussed in Sect. \ref{sec:charge} correct, we need to handle integral expressions involving the derivative of $ q(t) $. This will be done as explained in the following Remark.

	\begin{rem}[Integration of $ \dot{q} $ -- part I]
 		\label{rem:meaning}
 		\mbox{}	\\
 In the following of the paper we will often manage integrals involving the distributional derivative of the charge $q(t)$. Clearly, such a notation is purely formal since we do not actually know whether $ q(t) $ is an absolutely continuous function. Hence, its derivatives might not be integrable in Lebesgue sense. However, $ q(t) \one_{[0,T]} $ is a compactly supported distribution belonging to $ \mathcal{E}^{\prime} $, the dual of $ \mathcal{E} = C^{\infty}(\R) $. Hence, its distributional derivative is well defined and it still belongs to $ \mathcal{E}^{\prime} $. On the other hand, for any continuous function $ f $, one obviously has
		\[
			{\dot{f}(t) \one_{[0,T]} = \frac{\diff}{\diff t} \lf( f(t) \one_{[0,T]} \ri) + f(T) \delta(t - T) - f(0) \delta(t)},
		\]
		and since the r.h.s. is in $ \mathcal{E}^{\prime} $, the same holds for the l.h.s.. Hence we can give a meaning to the expression
		\[
			\int_{0}^t \diff \tau \: g(\tau) \dot{q}(\tau), 
		\]		
		whenever $ g \in C^{\infty}(\R) $, as the distributional pairing between $ \mathcal{E}^{\prime} $ and $ \mathcal{E} $. Of course the above is not the Lebesgue integral and we should have used a different symbol. However, in order to avoid a too heavy notation, we make a little abuse and keep the same integral symbol. Note that with such a convention we actually have
 		\[
			\int_{0}^t \diff \tau \: \dot{q}(\tau) = q(t) - q(0),
		\]
		since the function $ 1 $ is smooth. 
		
		Of course if we knew a priori that $ q \in W^{1,1}(0,T) $, then there would be no problem in the definition of any integral involving $ \dot{q} $ against a continuous function.
	\end{rem}

	\begin{proof}[Proof of Theorem \ref{teo:local}] 
		Let $\psi_t$ be the function defined by \eqref{eq:ansatz} and \eqref{eq:charge_eq}. For the sake of simplicity we split the proof in two steps. In the former we show that $\psi_t\in\dom[\F]$, in the latter,  we prove that $ \psi_t $ is a solution of the weak problem \eqref{eq:cauchyweak}.

		In order to prove that $ \psi_t \in \dom[\F] $, it is sufficient to show that
		\begin{equation}
 			\label{eq:regHuno}
 			\psi_t(\xv) - \frac{1}{2\pi}q(t)K_0 \lf(\sqrt{\lambda}|\xv-\yv|\ri)\in H^1(\R^2).
		\end{equation}

		Exploiting \eqref{eq:ansatz} and the Fourier transform, we can see that the previous expression reads
\[
 e^{-ip^2t}\widehat{\psi_0}(\pv)+\frac{i}{2\pi}\int_0^t\dtau\:e^{-i\pv\cdot\yv}e^{-ip^2(t-\tau)}q(\tau)-\frac{1}{2\pi}\frac{q(t)e^{-i\pv\cdot\yv}}{p^2+\lambda}.
\]
Hence, integrating by parts (in view of Remark \ref{rem:meaning}), one finds
\begin{equation}
 \label{eq:regFourier}
 e^{-ip^2t}\left(\widehat{\psi_0}(\pv)-\frac{q(0)e^{-i\pv\cdot\yv}}{2\pi(p^2+\lambda)}\right)-\frac{e^{-i\pv\cdot\yv}}{2\pi(p^2+\lambda)}\int_0^t\dtau\:e^{-ip^2(t-\tau)}(\dot{q}(\tau)-i\lambda q(\tau)).
\end{equation}
Note that the integral of $ \dot q $ on the r.h.s. has to be understood as explained in Remark \ref{rem:meaning}, which can be done since $ e^{-ip^2(t - \tau)} $ is a smooth function of $ \tau $.

Now, if this function belongs to $L^2(\R^2,(p^2+1)\dpv)$, then \eqref{eq:regHuno} is fulfilled. For the first term this is immediate since it represents the Fourier transform of $U_0(t)\phi_{\lambda,0}$, which is in $H^1(\R^2)$, since $ \phi_{\lambda,0} $ does. Concerning the second term, we first set $ \lambda = 1 $ for the sake of simplicity, and change variables to get
\bmln{
  \displaystyle \int_{\R^2}\dpv\:(1+p^2)\bigg|\frac{e^{-i\pv\cdot\yv}}{2\pi(p^2+1)}\int_0^t\dtau\:e^{-ip^2(t-\tau)}(\dot{q}(\tau)-i q(\tau))\bigg|^2	\\
  \displaystyle \leq C\int_0^\infty\drho\:\frac{1}{1+\varrho}\bigg[\bigg|\int_0^t\dtau\:e^{i\varrho\tau}\dot{q}(\tau)\bigg|^2+\bigg|\int_0^t\dtau\:e^{i\varrho\tau}q(\tau)\bigg|^2\bigg].
}
Now, one can check that
\[
 \int_0^t\dtau\:e^{i\varrho\tau}\dot{q}(\tau) = \sqrt{2\pi} \: \widehat{{\dot \xi}} (-\varrho),		\qquad	\int_0^t\dtau\:e^{i\varrho\tau}q(\tau)=  \sqrt{2\pi} \: \widehat{\one_{[0,t]}q}(-\varrho),
\]
where
\beq
	\label{eq:xi}
	\xi(\tau):=
 	\begin{cases}
 		q(0),    & \mbox{if }\tau\leq0,\\	
  		q(\tau), & \mbox{if }0<\tau<t,\\	
 		q(t),    & \mbox{if }\tau\geq t.
 \end{cases}
\eeq
Note that $ \dot{\xi}  $ is a distribution belonging to $ \mathcal{E}^{\prime}$, as $ \dot{q} \one_{[0,t]} $ does, therefore we can define its Fourier transform, which is in fact a smooth function {\cite[Theorem 7.1.14]{Ho}}. Consequently,
\[
 \int_0^\infty\drho\:\frac{1}{1+\varrho}\bigg[\bigg|\int_0^t\dtau\:e^{i\varrho\tau}\dot{q}(\tau)\bigg|^2+\bigg|\int_0^t\dtau\:e^{i\varrho\tau}q(\tau)\bigg|^2\bigg]\leq  2\pi \int_\R\drho\:\frac{ | \widehat{{\dot \xi}} (\varrho) |^2}{1+|\varrho|}+\int_\R\drho\:\frac{|\widehat{\one_{[0,t]}q}(\varrho)|^2}{1+|\varrho|}.
  \]
{Now, as $q\in C_{\log,1}[0,T]\cap H^{1/2}(0,T)$ (arguing as in the proof of Proposition \ref{pro:extension}) one can see that $\xi\in H_{\mathrm{loc}}^{1/2}(\R)$. Consequently, $\dot{\xi}\in H_{\mathrm{loc}}^{-1/2}(\R)$ and, recalling that it is also compactly supported (on $[0,t]$), this entails that $\dot{\xi}\in H^{-1/2}$. Hence,} the r.h.s. of the previous inequality is finite. Summing up, \eqref{eq:regFourier} belongs to $L^2(\R^2,(p^2+1)\dpv)$ and then \eqref{eq:regHuno} is satisfied.

{The fact that the ansatz \eqref{eq:ansatz_pre} provides the unique solution to the weak problem \eqref{eq:cauchy_lin_form}, if $ q(t) $ solves the charge equation, is proven in detail in a slightly different setting (linear moving point interactions) \cite[Theorem 2.1]{DFT2} (see also \cite{CCF}) and we omit this discussion here for the sake of brevity.}
\end{proof}


\subsection{Conservation laws}
\label{sec:conservation}

In this section we prove the conservation of mass and energy claimed in Theorem \ref{teo:conservation}, which in turn will be the key to prove the global existence stated in Theorem \ref{teo:global}. We recall the results proven in Propositions \ref{pro:charge continuous}, \ref{pro:charge} and \ref{pro:extension q}: there exists some $ T_* > 0  $ such that there is a unique continuous solution of \eqref{eq:charge_eq} in $[0,T_*)$, which also belongs to $ H^{1/2}(0,T) $ for any $ 0 < T < T_* $. 

Before proceeding further, however, another Remark is in order about the use we will make of the derivative of $ q $. In view of Remark \ref{rem:meaning} it can be ``integrated'' against smooth functions by exploiting the distributional pairing. Here we aim at giving a meaning to some more singular expressions:

	\begin{rem}[Integration of $ \dot q(t) $ -- part II]
		\label{rem:meaning2}
		\mbox{}	\\
 		Thanks to Proposition \ref{pro:charge}, we know that as $ T < T_* $, $ q \in H^{1/2}(0,T) $. We claim that this is sufficient to give a rigorous meaning to the expression
 		\bdm
 			\int_0^T \diff t \: f(t) \dot q(t),
		\edm
		for any function $ f \in \clog[0,T] \cap H^{1/2}(0,T) $, $ \beta > 1/2 $. The idea is to use the pairing provided by the duality between $ H^{1/2}(\R) $ and $ H^{-1/2}(\R) $, which allows to interpret the integral of $ f^* g $, with $ f \in  H^{1/2}(\R) $ and $ g \in H^{-1/2}(\R) $, as
		\beq
			\label{eq:hpairing}
			\int_{\R} \diff t \: f^*(t) g(t) = \int_{\R} \diff p \: \lf( \sqrt{p^2 + 1} \widehat{f}^*(p) \ri) \lf( \tx\frac{1}{\sqrt{p^2 +1}} \widehat{g}(p) \ri),
		\eeq
		where the symbol on the l.h.s. is not the Lebesgue integral, while on the r.h.s. we are integrating the product of two $ L^2 $ functions. Note that such a duality fails in general on a compact subset of the real line.
		
		So, if $f \in \clog[0,T] \cap H^{1/2}(0,T) $, we can rewrite
		\bdm
			\int_{0}^T \diff t \: f(t) \dot{q}(t) = f(T) \lf( q(T) - q(0) \ri) + \int_{0}^T \diff t \: \lf( f(t) - f(T) \ri) \dot{q}(\tau)
		\edm
		and, since both $ f $ and $ q $ are continuous, $ f(T) $ is well defined as well as $ q(T) $ and $ q(0) $. Next we observe that $ f(t) - f(T) $ satisfies the hypothesis of Proposition \ref{pro:extension} with $ \beta > 1/2 $ and therefore there exists an extension $ \fe \in H^{1/2}(\R) $ of $ f(t) - f(T) $, such that
		\bdm
			\int_{0}^T \diff t \: \lf( f(t) - f(T) \ri) \dot{q}(\tau) = \int_{\R} \diff t \:  \fe(t)  \dot{\xi}(\tau),
		\edm
		where $ \xi $ is defined in \eqref{eq:xi}. Here we are using that $ \mathrm{supp}(\dot\xi) \subset  [0,T] $.
		Now, since $  f_{\mathrm{e}} \in H^{1/2}(\R) $ and $ \dot \xi \in H^{-1/2}(\R) $ (see \cite{CCF} and the proof of Theorem \ref{teo:local}), then the last integral is meant as in \eqref{eq:hpairing}.
	\end{rem}

Before attacking the proof, we state a technical Lemma, which is a consequence of the charge equation \eqref{eq:charge_eq} and which will be used in the derivation of the mass and energy conservation.

	\begin{lem}
 		\label{lem-inv}
		\mbox{}	\\
 		Let $q(t)$ be the solution of \eqref{eq:charge_eq} provided by Proposition \ref{pro:charge} and $ T_* $ the maximal existence time given in Proposition \ref{pro:charge continuous}, then 
 		\begin{equation}
  			\label{eq:charge_inv}
  			\lf(U_0(t)\psi_0\ri)(\yv) = \left(\beta_0|q(t)|^{2\sigma}+\frac{\gamma-\log 2}{2\pi}\right)q(t)-\frac{iq(t)}{8}+\frac{1}{4\pi}\frac{\diff}{\diff t}\int_0^t\dtau\:(-\gamma-\log(t-\tau))q(\tau),
		\end{equation}
 		for a.e. $t\in[0,T]$ with $ T < T_* $.
	\end{lem}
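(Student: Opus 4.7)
The strategy is to invert the action of the operator $I$ in the charge equation \eqref{eq:charge_eq} by applying the integral operator $J$ of \eqref{eq:integral_j} and then differentiating in $t$. Setting $f_0(t) := (U_0(t)\psi_0)(\yv)$ and writing $\kappa := -2(\log 2 - \gamma + \tfrac{i\pi}{4})$ for short, \eqref{eq:charge_eq} takes the form
\[
	q(t) + 4\pi\beta_0 \lf( I (|q|^{2\sigma}q) \ri)(t) + \kappa (Iq)(t) = 4\pi (I f_0)(t).
\]

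The first step is to apply $J$ to both sides. This is legitimate because $q \in C[0,T]$ and $f_0 \in L^1(0,T)$ (as recalled in the proof of Proposition \ref{pro:charge continuous} via the decomposition \eqref{eq:Aunoduetre}--\eqref{eq:a2}), so all three $I$-terms on the left are continuous by Lemma \ref{contr_lemma_inf} and Proposition \ref{pro:charge continuous}, and the convolution with the log kernel is well defined. By the key identity \eqref{eq:i_inverse} of Lemma \ref{lem:i_identity}, applying $J$ to each $If$ term produces $\int_0^t f$, so we obtain
\[
	(Jq)(t) + 4\pi\beta_0 \int_0^t \dtau \: |q(\tau)|^{2\sigma} q(\tau) + \kappa \int_0^t \dtau \: q(\tau) = 4\pi \int_0^t \dtau \: f_0(\tau).
\]

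The second step is to differentiate this identity in $t$. The integrands $|q|^{2\sigma}q$, $q$ and $f_0$ are all (at least) locally integrable on $[0,T]$, so by the Lebesgue differentiation theorem each of the ordinary integrals on the two sides is differentiable for a.e. $t \in [0,T]$ with the expected pointwise derivative. The term $(Jq)(t)$ is differentiated formally and left in that form, since the statement of the Lemma only asks for $\frac{d}{dt}(Jq)(t)$. Differentiating and dividing by $4\pi$ yields
\[
	\beta_0 |q(t)|^{2\sigma} q(t) - \frac{1}{2\pi}\lf(\log 2 - \gamma + \tx\frac{i\pi}{4}\ri) q(t) + \frac{1}{4\pi} \frac{d}{dt}(Jq)(t) = f_0(t),
\]
and rearranging the constant in front of $q(t)$ as $\frac{\gamma - \log 2}{2\pi} - \frac{i}{8}$ produces exactly \eqref{eq:charge_inv}.

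The only delicate point is the differentiability step: one must ensure that the a.e. differentiation is done on functions whose absolute continuity has been established. This is clear for $\int_0^t |q|^{2\sigma}q$ and $\int_0^t q$, since $q \in C[0,T]$; for $\int_0^t f_0$, absolute continuity follows from $f_0 \in L^1(0,T)$, as pointed out above. No claim is made about the classical differentiability of $(Jq)(t)$ itself, which is precisely why the statement of the Lemma keeps the derivative inside the boxed expression.
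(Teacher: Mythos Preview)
Your argument is correct and follows exactly the paper's route: apply $J$ to the charge equation, use Lemma~\ref{lem:i_identity} to turn each $I$-term into a plain integral, then differentiate. The only cosmetic difference is that the paper, after obtaining the identity with $Jq$ on one side, observes directly that $Jq$ itself is absolutely continuous (being a difference of absolutely continuous functions), which cleanly justifies its a.e.\ differentiation rather than leaving it ``formal''.
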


	\begin{proof}
 		Dividing the charge equation \eqref{eq:charge_eq} by $4\pi$, applying the operator $J$ defined by \eqref{eq:integral_j} and recalling Lemma \ref{lem:i_identity}, we obtain
 		\[
 		 \frac{1}{4\pi}(Jq)(t)+\int_0^t\dtau\:\lf(\beta_0|q(\tau)|^{2\sigma}+\frac{\gamma-\log 2}{2\pi}-\frac{i}{8} \ri)q(\tau)= \int_0^t\dtau\:(U_0(\tau)\psi_0)(\yv)
 		\]
 		and, in particular, that $Jq$ is absolutely continuous. Then, differentiating in $t$ and rearranging terms, we obtain \eqref{eq:charge_inv}.
	\end{proof}
	
	In view of Remark \ref{rem:meaning2} the following technical results will prove to be very useful.

	\begin{lem}
		\label{lemma:forzcont}
		\mbox{}	\\
		{Let $\phi$ be a function in $H^1(\R^2)$ such that $(1+p)^\ep\widehat{\phi}\in L^1(\R^2)$, for some $\ep>0$, and let $ \mathcal{J}(t) $ be the function defined in \eqref{eq:integral_j}. Then, for every $\xv,\yv\in\R^2$, $\xv\neq\yv$, and every $T\in\R^+$,}
		 {\beqn
			\label{eq:forzcont1}
			\lf( U_0(\cdot) \phi \ri)(\xv) & \in & \clog[0,T] \cap H^{1/2}(0,T),			\\
		\label{eq:forzcont3}
			\lf( U_0(\cdot) K_0(|\cdot \: - \yv| \ri))({\yv})-\tfrac{e^{it}}{2}\J(\cdot) & \in & \clog[0,T] \cap H^{1/2}(0,T),		
		\eeqn
		for any $ \beta \in \R^+ $.}
	\end{lem}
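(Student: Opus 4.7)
The plan is to treat the two assertions separately and by quite different methods. For the first claim, I will use the Fourier representation
\[
(U_0(t)\phi)(\xv) \;=\; \frac{1}{2\pi}\int_{\R^2}\dpv\; e^{i\pv\cdot\xv}\, e^{-ip^2 t}\,\widehat{\phi}(\pv),
\]
pass to polar coordinates and substitute $\varrho=p^2$, so that as a function of $t$ the expression becomes (up to constants) an inverse Fourier transform in the variable $\varrho$ of a function supported on $[0,\infty)$ whose squared modulus is controlled, via Cauchy--Schwarz on the angular integral, by $\int_0^{2\pi}|\widehat{\phi}(\sqrt\varrho\cos\theta,\sqrt\varrho\sin\theta)|^2\,\diff\theta$. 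Plancherel then reduces the $H^{1/2}(\R)$ norm of $(U_0(\cdot)\phi)(\xv)$ to a weighted Lebesgue integral in $\varrho$ that, after returning to Cartesian coordinates and using $\sqrt{1+p^4}\leq 1+p^2$, is controlled by $\|\phi\|_{H^1(\R^2)}^2$, finite by hypothesis. The $L^2$ piece is automatic since $\widehat{\phi}\in L^1$ implies $(U_0(\cdot)\phi)(\xv)\in L^\infty$, and restricting to $(0,T)$ yields the claimed $H^{1/2}(0,T)$ membership.

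To upgrade this to $\clog[0,T]$, I would estimate increments directly: from the elementary bound $|e^{ia}-e^{ib}|\leq C|a-b|^{\alpha}$, valid for every $\alpha\in[0,1]$, one gets
\[
\bigl|(U_0(t)\phi)(\xv)-(U_0(s)\phi)(\xv)\bigr|\;\leq\; C\,|t-s|^{\alpha}\int_{\R^2}\dpv\; p^{2\alpha}\,|\widehat{\phi}(\pv)|.
\]
The choice $\alpha=\ep/2$ makes the integral finite thanks to $(1+p)^{\ep}\widehat{\phi}\in L^1$ (precisely the extra condition built into the definition \eqref{eq:in_assumption} of $\dom$), so $(U_0(\cdot)\phi)(\xv)$ is in fact $(\ep/2)$-H\"older continuous on $[0,T]$. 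Since H\"older continuity with a positive exponent implies $|f(t)-f(s)|\leq C|\log|t-s||^{-\beta}$ for every $\beta>0$ (as $|t-s|\to 0$, because $e^{-\alpha u}u^\beta\to 0$ with $u=|\log|t-s||$), the log-H\"older claim follows for every $\beta>0$.

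The second assertion is essentially an exact computation. Specializing the identity \eqref{eq:sici} to $\lambda=1$, and noting that by translation invariance of $U_0$ the evaluation of $U_0(t)K_0(|\cdot-\yv|)$ at $\yv$ coincides with the evaluation of $U_0(t)K_0(|\cdot|)$ at the origin, one obtains
\[
\bigl(U_0(t)K_0(|\cdot-\yv|)\bigr)(\yv) \,-\, \tfrac{e^{it}}{2}\J(t) \;=\; \tfrac{e^{it}}{2\pi}\,Q(1;t).
\]
The explicit series \eqref{eq-Q} converges for every $t\in\R$ and, together with the smoothness of the sine-integral function $\mathrm{si}(\cdot)$, ensures that $Q(1;\cdot)\in C^\infty(\R)$; its restriction to $[0,T]$ therefore trivially lies in both $\clog[0,T]$ (for every $\beta>0$) and $H^{1/2}(0,T)$.

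The hard part will be the $H^{1/2}$ bound in the first claim: a direct attack on the Gagliardo seminorm on the bounded interval $[0,T]$ is awkward because of the non-locality of the fractional norm and the failure of Hardy-type inequalities at the $1/2$ endpoint (the same pathology that motivated Proposition \ref{pro:extension}). The $\varrho=p^2$ substitution and Plancherel on the whole line are precisely the device that bypasses this obstacle, reducing the estimate to a clean one-line bound by $\|\phi\|_{H^1}^2$.
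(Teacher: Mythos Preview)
Your proposal is correct and follows essentially the same route as the paper. For \eqref{eq:forzcont1} the paper also passes to Fourier variables, substitutes $\varrho=p^2$, and controls the $H^{1/2}(\R)$ norm via Plancherel by $\int_{\R^2}(1+p^4)^{1/2}|\widehat{\phi}|^2\dpv\leq C\|\phi\|_{H^1}^2$ (this is exactly the $A_1$ computation in the proof of Proposition~\ref{pro:charge}, to which the paper simply refers); the $\clog$ part is obtained in the paper by the identical increment estimate $|e^{-ip^2\delta}-1|\leq Cp^{\eps}\delta^{\eps/2}$ and dominated convergence, which is your H\"older argument phrased slightly differently. For \eqref{eq:forzcont3} the paper also reduces, via \eqref{eq:sici} with $\lambda=1$, to the smoothness of $\tfrac{e^{it}}{2\pi}Q(1;t)$.
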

	
	{\begin{rem}
	 Note that if a function $\psi\in\dom$ (see \eqref{eq:in_assumption}), then its regular part $\phi_{1,0}$ satisfies \eqref{eq:forzcont1}, whereas its singular part (choosing $\lambda=1$ for the sake of simplicity) undergoes 
	 \eqref{eq:forzcont3}.
	\end{rem}}

	\begin{proof}[Proof of Lemma \ref{lemma:forzcont}]
		{In order to prove that $ \lf( U_0(t) \phi\ri)(\xv) \in  H^{1/2}(0,T) $, for any finite $ T > 0 $, one can argue as in the proof of Proposition \ref{pro:charge} (simply replacing $\phi_{1,0}$ with $\phi$).} 
		
		Hence, we have only to verify the other properties. By expressing the quantity using the Fourier transform, we have
		\bdm
			\lf| \lf( U_0(t+\delta) \phi \ri)(\xv) - \lf( U_0(t) \phi \ri)(\xv) \ri| \leq \int_{\R^2} \diff \pv \: \lf| e^{-i p^2 \delta} - 1 \ri| \lf| \widehat{\phi}(\pv) \ri|.
		\edm
		Again, in order to show that $ \lf( U_0(t) \phi_{1,0} \ri)(\xv) \in \clog[0,T] $, it suffices to prove that the analogue of \eqref{eq:clogcondition} holds true. To this aim we observe that, for any $ \eps > 0 $,
		\bdm
			|\log\delta|^{\beta} \lf| e^{-i p^2 \delta} - 1 \ri| \leq 2^{1- \eps/2} p^{\eps} \delta^{\eps/2} |\log\delta|^{\beta} \xrightarrow[\delta \to 0]{} 0,\qquad\forall\beta \in\R^+,
		\edm
		{and thus the result is a} direct consequence of the properties of {$\phi$} and dominated convergence.		
		
		{On the other hand, 
		the regularity of $\lf( U_0(\cdot) K_0(|\cdot \: - {\yv}| \ri))({\yv})$ can be proved simply using \eqref{eq:a2} and the smoothness of $Q(1;t)$ (defined by \eqref{eq-Q}).}
	\end{proof}

	\begin{lem}
		\label{lemma:dotq}
		\mbox{}	\\
 		Let $q(t)$ be the solution of \eqref{eq:charge_eq} provided by Proposition \ref{pro:charge} and $ T_* $ the maximal existence time given in Proposition \ref{pro:charge continuous}, then 
 		\[
 			\int_0^{t} \diff \tau \: \lf(- \gamma - \log(\cdot - \tau) \ri) \dot q(\tau) {: = (\gamma + \log t) q(0) + \frac{\diff}{\diff t} \int_0^t \diff \tau \:\ \lf(- \gamma - \log(t - \tau) \ri) q(\tau) =: B_{q}(t)} 
		\]
		{belongs to $ \clog[0,T] \cap H^{1/2}(0,T) $} for every $\beta\in(0,1]$ and every $0< T < T_* $.
	\end{lem}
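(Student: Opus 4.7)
My plan is to use Lemma \ref{lem-inv} to trade the distributional derivative appearing in $B_q$ for quantities whose regularity has already been controlled. Solving \eqref{eq:charge_inv} for the derivative and plugging into the definition of $B_q$ yields
\[
B_q(t) = (\gamma+\log t)\,q(0) + 4\pi(U_0(t)\psi_0)(\yv) - 4\pi\beta_0|q(t)|^{2\sigma}q(t) - 2(\gamma-\log 2)\,q(t) + \tfrac{i\pi}{2}q(t),
\]
so the task reduces to treating each summand separately and, crucially, identifying the piece that cancels the explicit logarithmic prefactor $(\gamma+\log t)\,q(0)$.

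I would then split the initial datum via the form-domain decomposition with $\lambda=1$, namely $\psi_0 = \phi_{1,0} + \frac{1}{2\pi}q(0)K_0(|\cdot-\yv|)$, obtaining
\[
4\pi(U_0(t)\psi_0)(\yv) = 4\pi(U_0(t)\phi_{1,0})(\yv) + 2q(0)(U_0(t)K_0(|\cdot-\yv|))(\yv).
\]
Lemma \ref{lemma:forzcont} (whose hypothesis on $\phi_{1,0}$ is guaranteed by $\psi_0\in\dom$) places the first summand in $\clog[0,T]\cap H^{1/2}(0,T)$ for every $\beta>0$, while the second summand equals $q(0)e^{it}\J(t) = q(0)e^{it}(-\gamma-\log t)$ up to a term in the same space.

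The decisive cancellation is
\[
(\gamma+\log t)\,q(0) + q(0)e^{it}(-\gamma-\log t) = q(0)(\gamma+\log t)(1-e^{it}),
\]
and since $1-e^{it}=-it+O(t^2)$ near $t=0$, this combined term behaves like $t\log t$ at the origin and has a weak derivative of order $1+|\log t|\in L^2(0,T)$. Hence it belongs to $H^1(0,T)\subset H^{1/2}(0,T)$; the pointwise bound $|g(t)-g(0)|\lesssim t|\log t|$ near $t=0$, together with smoothness on every $[\eta,T]$ with $\eta>0$, shows it also lies in $\clog[0,T]$ for every $\beta>0$.

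The remaining summands are $|q(t)|^{2\sigma}q(t)$ and $q(t)$. By Proposition \ref{pro:extension q} and Lemma \ref{lemma:qlogcont} we know $q\in H^{1/2}(0,T)\cap\clogo[0,T]\subset H^{1/2}(0,T)\cap\clog[0,T]$ for every $\beta\in(0,1]$, and Lemma \ref{lem-lip} with $\nu=1/2$ and $\sigma\geq\frac{1}{2}$ transfers these properties to $|q|^{2\sigma}q$. Summing all contributions yields the claim. The only genuine obstacle is spotting the exact cancellation in the third paragraph; once the logarithmic singularity at $t=0$ is killed by this cancellation, every surviving piece is handled by a routine invocation of the auxiliary lemmas, and it is precisely at this last step that the restriction $\beta\leq 1$ enters, inherited from Lemma \ref{lemma:qlogcont}.
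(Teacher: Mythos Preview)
Your proof is correct and follows essentially the same route as the paper: rewrite $B_q$ via Lemma~\ref{lem-inv}, split $\psi_0$ along the form-domain decomposition, invoke Lemma~\ref{lemma:forzcont} for each piece, and then observe the cancellation leaving the $H^1$ function $(e^{it}-1)(-\gamma-\log t)$; the $q$ and $|q|^{2\sigma}q$ terms are handled by Lemma~\ref{lemma:qlogcont}, Proposition~\ref{pro:extension q} and (the pointwise estimate \eqref{eq:help} underlying) Lemma~\ref{lem-lip}. The only cosmetic remark is that the $\clog$ preservation for $|q|^{2\sigma}q$ is not literally part of the statement of Lemma~\ref{lem-lip} but follows immediately from the pointwise bound \eqref{eq:help} in its proof, exactly as the paper does.
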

	
	\begin{proof}
		{From \eqref{eq:charge_inv}, we have that 
		\bmln{
		 \frac{\diff}{\diff t} \int_0^t\dtau\:(-\gamma-\log(t-\tau))q(\tau) =4\pi(U_0(t)\psi_0)(\yv)-4\pi\left(\beta_0|q(t)|^{2\sigma}+\frac{\gamma-\log 2}{2\pi}\right)q(t)	\\
		 +\frac{i\pi q(t)}{2}.
		}
		Now, the weak derivative of the l.h.s. reads
		\bdm
			\frac{\diff}{\diff t} \int_0^t \diff \tau \:\ \lf(- \gamma - \log(t - \tau) \ri) q(\tau) = -(\gamma + \log t) q(0) + \int_0^t \diff \tau \: \lf(- \gamma - \log(t - \tau) \ri) \dot q(\tau),
		\edm
		so that
		\bml{
		 \label{eq-aux}
		 \int_0^t \diff \tau \: \lf(- \gamma - \log(t - \tau) \ri) \dot q(\tau) =  4\pi(U_0(t)\psi_0)(\yv)-4\pi\left(\beta_0|q(t)|^{2\sigma}+\frac{\gamma-\log 2}{2\pi}\right)q(t) \\ 
		                                                                          +\frac{i\pi q(t)}{2}+(\gamma + \log t) q(0).
		}
		Consequently, using \eqref{eq:forzcont1}-\eqref{eq:forzcont3} and the fact that $\psi_0\in\dom$, we obtain
		\[
		 4\pi(U_0(t)\psi_0)(\yv)=q(0)e^{it}(-\gamma-\log t)+D_{\psi_0}(t)
		\]
		with $D_{\psi_0}\in \clog[0,T] \cap H^{1/2}(0,T)$ for every $\beta\in\R^+$. On the other hand, from Lemma \ref{lemma:qlogcont}, $q\in \clog[0,T] \cap H^{1/2}(0,T)$, for every $\beta\in(0,1]$, as well as $|q|^{2\sigma}q$ (arguing as in the first part of the proof of Lemma \ref{lem-lip}). Hence, observing also that the function $(e^{it}-1)(-\gamma-\log t)\in H^1(0,T)$ and combining all these facts with \eqref{eq-aux}, we obtain that $B_{q}\in \clog[0,T] \cap H^{1/2}(0,T)$ for every $\beta\in(0,1]$ and every $0< T < T_* $.}
	\end{proof}

	\begin{proof}[Proof of Theorem \ref{teo:conservation}] 
		The proof is divided into two parts, where we prove mass and energy conservation separately.
	
		\medskip
		
		{\it Part 1.} Let us first consider the mass conservation. Using the Fourier transform, \eqref{eq:ansatz} reads
		\[
  			\widehat{\psi_t}(\pv)=e^{-ip^2t}\widehat{\psi_0}(\pv)+\frac{ie^{-i\pv\cdot\yv}}{2\pi}\int_0^t\dtau\:e^{-ip^2(t-\tau)}q(\tau).
 		\]
 		Hence,
 		\bmln{
  			\lf|\widehat{\psi_t}(\pv) \ri|^2 = \lf| \widehat{\psi_0}(\pv) \ri|^2+\frac{1}{\pi}\mathrm{Im}\bigg\{e^{i\pv\cdot\yv}\widehat{\psi_0}(\pv)\int_0^t\dtau\:e^{-ip^2\tau}q^*(\tau)\bigg\}\\
  			+\frac{1}{4\pi^2}\int_0^t\dtau\int_0^t\ds\:e^{-ip^2(s-\tau)}q(\tau)q^*(s),
 		}
 		so that, denoting by $\TF^{-1}$ the anti-Fourier transform on $\R^2$,
 		{\bmln{
  		\lf\|\psi_t \ri\|_{L^2(\R^2)}^2 = \lf\|\psi_0 \ri\|_{L^2(\R^2)}^2 + 2\mathrm{Im}\left\{\TF^{-1}\bigg\{ \widehat{\psi_0}(\pv)\int_0^t\dtau\:e^{-ip^2\tau}q^*(\tau)\bigg\}(\yv)\right\}	\\ 
                            	 +\frac{1}{2\pi}\TF^{-1}\left\{\int_0^t\dtau\int_0^t\ds\:e^{-ip^2(s-\tau)}q(\tau)q^*(s)\right\}(\zev) 
                           =:\lf\|\psi_0\ri\|_{L^2(\R^2)}^2+\Psi+\Phi.
		}}
 
 		Furthermore, by the properties of the Fourier transform and the definition of $U_0$,
 		\bdm
  			\Psi =  2\mathrm{Im}\left\{\int_0^t\dtau\:q^*(\tau)\TF^{-1}\big\{e^{-ip^2\tau}\widehat{\psi_0}(\pv)\big\}(\yv)\right\} = 2\mathrm{Im}\left\{\int_0^t\dtau\:q^*(\tau)(U_0(\tau)\psi_0)(\yv)\right\},
		\edm
 		so that by \eqref{eq:charge_inv} proven in Lemma \ref{lem-inv},
 		\[
  			\Psi = \Psi_1+\Psi_2	 := -\frac{1}{4}\int_0^t\dtau\:|q(\tau)|^2+\frac{1}{2\pi}\mathrm{Im}\bigg\{\int_0^t\dtau\:q^*(\tau)\frac{\diff}{\diff \tau}\int_0^\tau\ds\:(-\gamma-\log(\tau-s)q(s))\bigg\}.
		\]
 		Now, we can prove that $\Psi+\Phi=0$. First, one sees that
 		\[
  			\Phi=\frac{1}{\pi}\TF^{-1}\left\{\mathrm{Re}\bigg\{\int_0^t\dtau\:q^*(\tau)\int_0^\tau\ds\:q(s)e^{-ip^2(\tau-s)}\bigg\}\right\}(\zev),
 		\]
 		then we compute
 		\[
  			\int_0^\tau\ds\:q(s)e^{-ip^2(\tau-s)}=i\frac{d}{d\tau}\int_0^\tau\ds\:\frac{e^{-ip^2(\tau-s)}}{p^2+1}q(s)-\frac{iq(\tau)}{p^2+1}+\int_0^\tau\ds\:\frac{e^{-ip^2(\tau-s)}}{p^2+1}q(s),
 		\]
		thus obtaining
		\bmln{
  			\Phi = 	-\frac{1}{\pi}\TF^{-1}\left\{\mathrm{Im}\bigg\{\int_0^t\dtau\:q^*(\tau)\frac{\diff}{\diff \tau}\int_0^\tau\ds\:\frac{e^{-ip^2(\tau-s)}}{p^2+1}q(s)\bigg\}\right\}(\zev)	\\
           	+\frac{1}{\pi} \TF^{-1} \left\{\mathrm{Re}\bigg\{\int_0^t\dtau\:q^*(\tau)\int_0^\tau\ds\:\frac{e^{-ip^2(\tau-s)}}{p^2+1}q(s) \bigg\} \right\}(\zev).
 		}
 		Hence, using again the properties of the Fourier transform, the above expression can be rewritten as
 		\bmln{
  			\Phi = -\frac{1}{\pi}\mathrm{Im}\left\{\int_0^t\dtau\:q^*(\tau)e^{i\tau}\frac{\diff}{\diff \tau}\bigg\{ e^{-i\tau}\int_0^\tau\ds\:q(s)\TF^{-1}\bigg[ \frac{e^{-ip^2(\tau-s)}}{p^2+1} \bigg] \bigg\}(\zev)\right\}	\\ 
  			= -\frac{1}{2\pi}\mathrm{Im}\left\{\int_0^t\dtau\:q^*(\tau)e^{i\tau}\frac{\diff}{\diff \tau}\int_0^\tau\ds\:q(s)e^{-is}(-\gamma-\log(\tau-s))\right\} \\	
          		-\frac{1}{2\pi^2}\mathrm{Im}\left\{\int_0^t\dtau\:q^*(\tau)e^{i\tau}\frac{\diff}{\diff \tau}\int_0^\tau\ds\:q(s)e^{-is}Q(1;\tau-s)\right\}  =:  \Phi_1+\Phi_2,
		}
 		where we have made use of \eqref{eq:sici} and \eqref{eq-Q}. {Notice that the quantity} 
 		\bdm
 			{e^{i \tau} \int_0^\tau\ds\:q(s)e^{-is}(-\gamma-\log(\tau-s))}
		\edm
		{is a priori only continuous and therefore its derivative must be interpreted in distributional sense via}		
		\bml{
			\label{eq: stanerchia}
  			{e^{i\tau}\frac{\diff}{\diff \tau}\int_0^\tau\ds\:q(s)e^{-is}(-\gamma-\log(\tau-s))=   \frac{\diff}{\diff \tau}\int_0^\tau\ds\:q(s)(-\gamma-\log(\tau-s))}	\\
                                                                               	{-\int_0^\tau\ds\:q(s)\frac{e^{i(\tau-s)}-1}{\tau-s}}
 		}
 		{whose regularity is proven in Lemma \ref{lemma:dotq}.}
Now, with some computations, one finds that
 		\bmln{
  			\Phi_2+ \Psi_2 =-\frac{1}{2\pi^2}\mathrm{Im}\bigg\{\int_0^t\dtau\:q^*(\tau)\int_0^\tau\ds\:q(s)e^{i(\tau-s)}\dot{Q}(1;\tau-s)\bigg\}	\\
  			= -\frac{1}{2\pi}\mathrm{Im}\bigg\{\int_0^t\dtau\:q^*(\tau)\int_0^\tau\ds\:q(s)\frac{e^{i(\tau-s)}-1}{\tau-s}\bigg\},
 		}
 		since $ \dot{Q}(1;\tau-s)={\pi} \frac{1 - e^{-i(\tau-s)}}{\tau-s} $, as it follows from \eqref{eq:sici} and the definition of the sine and cosine integral functions \cite[Eqs. 5.2.1 \& 5.2.2]{AS}. Similarly, \eqref{eq: stanerchia} leads to $\Phi_2 + \Psi_1 + \Phi_1 = - \Psi_2 $ and thus completing the proof of the mass conservation.				
		
		\medskip
		
		{\it Part 2.} Let us turn our attention to energy conservation. Since $\psi_0\in\dom[\F]$, taking $\lambda=1$, \eqref{eq:ansatz} yields	
		\[
 			\psi_t(\xv) =  \lf(U_0(t)\phi_{1,0}\ri)(\xv)+\frac{q(t)}{2\pi}K_0(|\xv-\yv|)-\frac{1}{2\pi}\int_0^t\dtau\: \lf(\dot{q}(\tau)-iq(\tau) \ri) \lf( U_0(t-\tau) K_0\lf(|\cdot -\yv|\ri)\ri)(\xv),
 		\]
 		where we have integrated by parts and used the simple formula
 		\[
 			\frac{\diff}{\diff \tau} \lf[e^{-i(t - \tau)}  \lf( U_0(t - \tau) K_0 \ri)(\xv) \ri] = i e^{- i (t - \tau)}U_0(t - \tau; \xv),
		\]
		which can be easily verified by rewriting the quantities via Fourier transform. In light of Remark \ref{rem:meaning2}, {Lemma \ref{lemma:forzcont} and Lemma \ref{lemma:dotq}}, the terms involving $ \dot q $ have to be understood as discussed in Remark \ref{rem:meaning2}, i.e., as the pairing between a function in $ H^{-1/2}(\R) $ and another in $ H^{1/2}(\R) $. In the very same way we get
 		\[
  			\phi_{1,t}(\xv)= \lf(U_0(t)\phi_{1,0}\ri)(\xv)-\frac{1}{2\pi}\int_0^t\dtau\: \lf(\dot{q}(\tau)-iq(\tau) \ri) \lf( U_0(t-\tau) K_0\lf(|\cdot -\yv|\ri)\ri)(\xv).
 		\]
 		Then, we can compute the $H^1$ norm of $\phi_{1,t}$ as
 		\beq
 		        \label{eq-norm_relation}
  			\lf\|\phi_{1,t} \ri\|_{H^1(\R^2)}^2 =\|\phi_{1,0}\|_{H^1(\R^2)}^2+\widetilde{\Psi}_t+\widetilde{\Phi}_t,
 		\eeq
 		where
 		\beq
 			\label{eq:psit}
  			\widetilde{\Psi}_t =-2\mathrm{Re}\bigg(\int_0^t\dtau\: \lf(\dot{q}(\tau)-iq(\tau) \ri)^*(U_0(\tau)\phi_{1,0})(\yv)\bigg)
 		\eeq
 		and 
 		{\bml{
 		 \label{eq:phit}
 		 \widetilde{\Phi}_t = \frac{1}{2\pi}\int_0^t\dtau\:(\dot{q}(\tau)-iq(\tau))^*\bigg(\int_0^t\ds\:(U_0(\tau-s)K_0(|\cdot\,-\yv|))(\yv)(\dot{q}(s)-iq(s))\bigg)	\\	
 		        	=		\frac{1}{\pi}\Re\bigg\{\int_0^t\dtau\:(\dot{q}(\tau)-iq(\tau))^*\bigg(\int_0^\tau\ds\:(U_0(\tau-s)K_0(|\cdot\,-\yv|))(\yv)(\dot{q}(s)-iq(s))\bigg)\bigg\}.
		}}
 		Lemmas \ref{lemma:forzcont} {and \ref{lemma:dotq}} guarantee that the r.h.s. of \eqref{eq:psit} and \eqref{eq:phit} are well defined expressions, which should be understood as explained in Remark \ref{rem:meaning2}{: indeed, \eqref{eq:sici} yields an explicit expression of the quantity $ (U_0(\tau-s)K_0(|\cdot\,-\yv|))(\yv) $, which is continuous up to a logarithmic term (see also \eqref{eq:a2} and the lines below).} On the other hand, using \eqref{eq:sici}, we can immediately rewrite \eqref{eq:phit} as	(recall the definition of $Q(\cdot\,;\cdot)$ in \eqref{eq-Q})
 		\begin{equation}
 		 \label{eq:cancel}
 		 \widetilde{\Psi}_t+\widetilde{\Phi}_t=\Re\bigg\{\int_0^t\dtau\:(\dot{q}(\tau)-iq(\tau))^*\big[\Psi(\tau)+\Phi(\tau)\big]\bigg\}.
 		\end{equation}
 		where
 		\[
  			\Phi(\tau):= \frac{1}{2\pi^2}\int_0^\tau\ds\:e^{i(\tau-s)}\big[Q(1;\tau-s)-\pi(\gamma+\log(\tau-s))\big](\dot{q}(s)-iq(s))
		\]
		and $\Psi(\tau)=-2(U_0(\tau)\phi_{1,0})(\yv)$,
 		so that we have to better investigate the quantity $\Psi(\tau)+\Phi(\tau)$. Now, using again $\psi_0\in\dom[\F]$ and \eqref{eq:charge_inv}, we find that
 		\bmln{
 			\lf( U_0(\tau)\phi_{1,0} \ri)(\yv) =  -\frac{q(0)}{2\pi} \lf(U_0(\tau)K_0(|\cdot-\yv|) \ri)(\yv)+\left(\beta_0|q(\tau)|^{2\sigma}+\frac{\gamma-\log 2}{2\pi}\right)q(\tau) -\frac{iq(\tau)}{8}	\\ 
 			 +\frac{q(0)}{4\pi}(-\gamma-\log \tau)+\frac{1}{4\pi}\int_0^\tau\ds\:(-\gamma-\log(\tau-s))\dot{q}(s)
 		}
 		where Lemma \ref{lemma:dotq} guarantees the well-posedness of last term. Plugging into the definition of $\Psi(\tau)$, we can split it into four terms as $ \Psi(\tau) =\Psi_1(\tau)+\Psi_2(\tau)+\Psi_3(\tau)+\Psi_4(\tau)$, where 
 		\[
 		 \begin{array}{l}
  		  \displaystyle \Psi_1(\tau):= \frac{q(0)}{\pi}(U_0(\tau)K_0(|\cdot\,-\yv|))(\yv),\\
  		  \displaystyle \Psi_2(\tau):=-2\left(\beta_0|q(\tau)|^{2\sigma}+\frac{\gamma-\log2}{2\pi}\right)q(\tau),\\
  		  \displaystyle \Psi_3(\tau):=-\frac{q(0)}{2\pi}(-\gamma-\log\tau)+\frac{iq(\tau)}{4}=:\Psi_{3,r}(\tau)+\Psi_{3,i}(\tau),\\
  		  \displaystyle \Psi_4(\tau):=-\frac{1}{2\pi}\int_0^\tau\ds\:(-\gamma-\log(\tau-s))\dot{q}(s)
  		 \end{array}
 		\]
		(arguing as before one can see that all the previous expressions are well-posed{, via, e.g., Lemma \ref{lemma:dotq}}).

		Furthermore, using the definition of the free propagator $U_0(\tau)$ and \eqref{eq:sici}, one sees that
 		\[
  		 \Psi_1(\tau)=\frac{q(0)e^{i\tau}}{2\pi}\big(-\gamma-\log\tau+\tfrac{1}{\pi}Q(1;\tau)\big),
 		\]
 		with $Q $ defined by \eqref{eq-Q}, and thus (summing and subtracting $(-\gamma-\log\tau)$ and setting $\ell(t):=(e^{it}-1)(-\gamma-\log t)$)
 		\[
 		 \Psi_1(\tau)+\Psi_{3,r}(\tau)=\frac{q(0)}{2\pi}\big(\ell(\tau)+\tfrac{e^{i\tau}}{\pi}Q(1;\tau)\big)=:R_1(\tau).
 		\]
 		On the other hand, we observe that $ \Phi(\tau) + \Psi_4(\tau) = R_2(\tau) + R_3(\tau) $, where
 		\begin{align*}
  			R_2(\tau) & := \frac{1}{2\pi}\int_0^\tau\ds\:\dot{q}(s)\left[\ell(\tau-s)+\tfrac{e^{i(\tau-s)}}{\pi}Q(1;\tau-s)\right],\\
  			R_3(\tau) & := -\frac{i}{2\pi}\int_0^\tau\ds\:q(s)e^{i(\tau-s)}\lf[-\gamma-\log(\tau-s)+\tfrac{1}{\pi}Q(1;\tau-s) \ri].
 		\end{align*}
 		As a consequence, we have that $ \Psi_1(\tau) + \Psi_3(\tau) + \Psi_4(\tau) +  \Phi(\tau)  = R_1(\tau)  + R_2(\tau) + R_3(\tau) + \Psi_4(\tau) =:\Gamma(\tau)$. Now, an integration by parts shows that
 		\bmln{
  			\int_0^\tau\ds\:\dot{q}(s) \left[ \ell(\tau-s)+\tfrac{e^{i(\tau-s)}}{\pi}Q(1;\tau-s)\right]= -\tfrac{i\pi}{2} q(\tau) - q(0)\left(\ell(\tau)+\tfrac{e^{i\tau}}{\pi}Q(1;\tau)\right)	\\                                                    
			+\tfrac{i}{\pi}\int_0^\tau\ds\:q(s)e^{i(\tau-s)}Q(1;\tau-s)+ \int_0^\tau\ds\:q(s)\left[\dot{\ell}(\tau-s)+\tfrac{e^{i(\tau-s)}}{\pi}\dot{Q}(1;\tau-s)\right]
 		}
 		and then, plugging into the definition of $R_2(\tau)$, there results
 		\[
  			\Gamma(\tau) = -\frac{i}{2\pi}\int_0^\tau\ds\:q(s)e^{i(\tau-s)}(-\gamma-\log(\tau-s))+\frac{1}{2\pi}\int_0^\tau\ds\:q(s) \left[ \dot{\ell}(\tau-s)+\tfrac{e^{i(\tau-s)}}{\pi}\dot{Q}(1;\tau-s)\right].
 		\]
 		However, easy computations (see \eqref{eq:sici}) exploiting the definition of the trigonometric integral functions (see, e.g., \cite{AS,GR}) yield
 		\[
  			\dot{\ell}(t)+\frac{e^{i t}}{\pi}\dot{Q}(1;t)=i e^{it}(-\gamma-\log t)
 		\]
 		and thus $\Gamma\equiv0$.
 		
 		Summing up, we proved that
 		\begin{equation}
 		 \label{eq-simplif}
 		 \Psi(\tau)+\Phi(\tau)=\Psi_{{2}}(\tau)
 		\end{equation}
 		In addition, observing that
 		\[
 			2\mathrm{Re} \left[(\dot{q}(\tau)-iq(\tau))^*|q(\tau)|^{2\sigma}q(\tau)\right]=\frac{1}{\sigma+1}\frac{\diff}{\diff \tau}\lf( |q(\tau)|^{2\sigma+2} \ri)
 		\]
 		and
 		\[
 			2\mathrm{Re}\left[(\dot{q}(\tau)-iq(\tau))^*q(\tau)\right]=\frac{\diff}{\diff \tau}|q(\tau)|^{2},
 		\]
 		one can see that
 		\bmln{
 		 	\Re\bigg\{\int_0^t\dtau\:(\dot{q}(\tau)-iq(\tau))^*\Psi_{{2}}(\tau)\bigg\} =  -\bigg(\frac{\beta_0|q(t)|^{2\sigma}}{\sigma+1}+\frac{\gamma-\log 2}{2\pi}\bigg)|q(t)|^2 \\ 
 		                                                                                            + \bigg(\frac{\beta_0|q(0)|^{2\sigma}}{\sigma+1}+\frac{\gamma-\log 2}{2\pi}\bigg)|q(0)|^2.
 		}

 		Consequently, combining the last identity with by \eqref{eq-norm_relation}, \eqref{eq:cancel} and \eqref{eq-simplif},we get
 		\[
  			\lf\|\phi_{\lambda,t} \ri\|_{H^1(\R^2)}^2 = \lf\|\phi_{\lambda,t} \ri\|_{H^1(\R^2)}^2-\lf( \tfrac{\beta_0}{\sigma+1}|q(t)|^{2\sigma}+\tfrac{\gamma-\log 2}{2\pi} \ri) |q(t)|^2+ \lf(\tfrac{\beta_0}{\sigma+1}|q(0)|^{2\sigma} + \tfrac{\gamma-\log 2}{2\pi} \ri)|q(0)|^2
 		\]
 		Finally, in view of \eqref{eq:energy}, this means that $E(t)=E(0)$, for any $ t \leq T < T_* $.
	\end{proof}
	

\subsection{Global well-posedness and blow-up alternative}

	Finally, a simple combination of the energy conservation and the blow-up alternative principle allows to prove Theorem \ref{teo:global} and Proposition \ref{pro:blow-up}.

	\begin{proof}[Proof of Theorem \ref{teo:global}] 
		 {Preliminarily, we note that, in view of Proposition \ref{pro:extension q},} the energy conservation proven in Theorem \ref{teo:conservation} holds true up to any $ T  < T_{*} $ {(with $ T_* $ provided by Proposition \ref{pro:charge continuous})}. Moreover, it yields that, if $\beta_0>0$
 		\[
 			\lf| q(t) \ri|\leq C < + \infty, \qquad	\forall t\in[0,T],
		\]
		and any $T < T_{*} ${, since the function $ g(x) = \frac{\beta_0}{\sigma + 1} x^{2\sigma+2} + \frac{\gamma-\log 2}{2\pi} x^2 $ diverges as $ x \to + \infty $.}
		
		Hence, since $ q $ remains bounded as $ t \to T $ by a quantity which is independent of  $T $, it must be (recall that $ T_{*} $ is by definition the maximal existence time of $ q(t) $)
 		\[
  			\limsup_{t\to T^*}|q(t)| \leq C < +\infty,
 		\]
 		which implies that $q$ can be extended to the whole positive half-line and that $q$ is the unique solution of \eqref{eq:charge_eq} in $ C[0,\infty) $, i.e., it is global in time (see \cite[Theorem 2.3]{M}). In addition, Proposition \ref{pro:extension q}  implies that $q\in H^{1/2}(0,T)$, for every finite $T>0$. 
 
 		Consequently, arguing as before, one can prove that the function $\psi_t$ defined by \eqref{eq:ansatz} and \eqref{eq:charge_eq} is in $\dom[\F]$ and solves \eqref{eq:cauchyweak} for every $t\geq0$, thus proving Theorem \ref{teo:global}.
	\end{proof}
	
	\begin{proof}[Proof of Proposition \ref{pro:blow-up}]
 		If $\beta_0<0$, then we have the following alternative: either $ \limsup_{t \to T_*} |q(t)| < +\infty $, which implies that $ T_* = + \infty$ and the solution is global in time, or
 		\bdm
 			\limsup_{t \to T_*} |q(t)| = +\infty.
		\edm
		In this second case we can still have two opposite alternatives: either $ T_* = +\infty $ and, in spite of not being bounded, the solution is nevertheless global in time, or $ T_* < + \infty $ and the blow-up occurs. Indeed, by the energy conservation and the diverging limit of $ q $, we obtain
 		\[
  			\limsup_{t\to T_*} \lf\|\phi_{\lambda,t} \ri\|_{H^1(\R^2)}= +\infty,
 		\]
 		i.e., $\psi_t$ blows-up at a finite time.
	\end{proof}



\end{document}